\renewcommand{\p@subsection}{}
\renewcommand{\p@subsubsection}{}
\newcommand{\mypreamble}{
\title[\shortTITLE]{\TITLE}
\author{Ji\v{r}\'{\i} Proch\'{a}zka}
\email{jiri.prochazka@fzu.cz}
%\author{Vojt\v{e}ch Kundr\'{a}t}
%\email{kundrat@fzu.cz}
%\affiliation{The Czech Academy of Sciences, Institute of Physics, Na Slovance 2, 18221 Prague 8, Czech Republic} 
\affiliation{Institute of Physics of the Czech Academy of Sciences,\\ Na Slovance 1999/2, 18200 Prague 8, Czech Republic} 

%\date{\today}% It is always \today, today,
             %  but any date may be explicitly specified
}
\def\blfootnote{\xdef\@thefnmark{}\@footnotetext}
      \renewcommand{\ifcommandkey}[1]{%
        \csname @\expandafter \expandafter \expandafter
        \expandafter \expandafter \expandafter  \expandafter
        \kcmd@nbk \commandkey {#1}//{first}{second}//oftwo\endcsname
      }
\DeclareMathOperator{\eOp}{e}
\newcommand{\e}[1]{\eOp^{#1}}
\newcommand{\abs}[1]{\lvert {#1} \rvert}
\newcommand{\refp}[1]{(\ref{#1})}
\crefname{equation}{eq.}{eqs.}
\crefname{section}{sect.}{sects.}
\crefname{chapter}{chapter}{chapters}
\crefname{table}{table}{tables}
\crefname{figure}{fig.}{figs.}
\crefname{appsec}{appendix}{appendices}
\crefname{appchap}{appendix}{appendices}
\newtheorem{definition}{Definition}[section]
\newtheorem{assumption}{Assumption}[section]
\crefname{assumption}{assumption}{assumptions}
\newtheorem{theorem}{Theorem}[section]
\theoremstyle{remark}
\newtheorem{remark}{Remark}[section]
\newcommand{\TITLE}{Stochastic and corpuscular theory of (polarized) light}
\newcommand{\shortTITLE}{\TITLE}
\newcommand{\surface}{\ensuremath{\Sigma}}
\newkeycommand{\dElementSymbol}[event]{ \ensuremath{ \text{d}(\commandkey{event})}}
\newcommand{\indexset}{I}
\newcommand{\eventset}{\mathcal{F}}
\newcommand{\stochproc}{\{\Xw[i=i] : i \in \indexset \}}
\newcommand{\probabilityspace}{(\samplesetsymbol, \eventset,P)}
\newcommand{\varsymb}{X}
\newcommand{\wvarsymb}{\varsymb}
\newkeycommand{\Xw}[i=]{\wvarsymb_{\commandkey{i}}}
\newkeycommand{\dXw}[i=]{\text{d}\wvarsymb_{\commandkey{i}}}
\newkeycommand{\Xwin}[i=]{\wvarsymb_{\text{in}}^{\commandkey{i}}}
\newkeycommand{\dXwin}[i=]{\text{d}\wvarsymb_{\text{in}}^{\commandkey{i}}}
\newkeycommand{\Xwout}[i=]{\wvarsymb_{\text{out}}^{\commandkey{i}}}
\newkeycommand{\dXwout}[i=]{\text{d}\wvarsymb_{\text{out}}^{\commandkey{i}}}
\newkeycommand{\XwNR}[i=]{\wvarsymb^{NR}_{\commandkey{i}}}
\newkeycommand{\XwinNR}[i=]{\wvarsymb_{\text{in}\ifcommandkey{i}{,{\commandkey{i}}}{}}^{NR}}
\newkeycommand{\XwoutNR}[i=]{\wvarsymb_{\text{out}\ifcommandkey{i}{,{\commandkey{i}}}{}}^{NR}}
\newcommand{\ntrans}{M}  % for a sequence of transitions
\newcommand{\rangeXwNR}[2]{%
   \ifthenelse{ \equal{\detokenize{#1}}{\detokenize{#2}}}{\XwNR[i=#1]}{
		   \ifthenelse{ \equal{\detokenize{#1}}{\detokenize{0}} \AND \equal{\detokenize{#2}}{\detokenize{1}} }{\XwNR[i=#1], \XwNR[i=#2]}{
				   \ifthenelse{ \equal{\detokenize{#1}}{\detokenize{#2}} }{\XwNR[i=#1]}{\XwNR[i=#1], \dotsc, \XwNR[i=#2]}
		   }
   }
}
\newkeycommand{\ifcomma}[i=]{\ifcommandkey{i}{,{\commandkey{i}}}{}}
\newcommand{\samplesetsymbol}{\Omega}
\newcommand{\statespacesymbol}{S}
\newcommand{\stateelementsymbol}{s}
\newcommand{\statesetmeasure}[1]{\mathcal{S}_{#1}}
\newcommand{\stateset}[1]{\statespacesymbol_{#1}}
\newcommand{\statee}[1]{\stateelementsymbol_{#1}}
\DeclareMathOperator{\dos}{dos}
\newcommand{\dosstateSi}[1]{\dos_{\statee{#1}}}
\newcommand{\dosargstateSi}[1]{\dosstateSi{#1}(\Xw[i=#1])}
\newcommand{\dosargSi}[1]{\dos_{#1}(\Xw[i=#1])}
\NewDocumentCommand{\stateSoutCondASi}{ O{} } %O{}
{
  \keys_set:nn { theoryouti }
  {
%  %  % provide the default values
    #1 % evaluate the keys in the optional argument
  }%{}
  %#2 \\
  %\EventSurf[e=EventSXwout,i=\g_theoryout_i_tl] \mid \EventSurf[e=EventS,i=\g_theoryout_i_tl]
  \statee{\g_theoryouti_i_tl}\mid \stateset{\g_theoryouti_i_tl}
}
\newkeycommand{\rhoS}[i=]{\rho_{\stateset, \commandkey{i}}}
\newcommand{\rhoargSi}[1]{\rhoS[i=#1](\Xw[i={#1}])} %, \rangeXwNR{0}{#1-1}) } # see rhoSarg macro, do we need both commands?
\newcommand{\rhoSi}[1]{\rhoS[i=#1]} %, \rangeXwNR{0}{#1-1}) } # see rhoSarg macro, do we need both commands?
\newcommand{\Ni}[1]{N_{#1}}
\newcommand{\ProbTi}[1]{P_{T\ifcomma[i=#1]}}
\newcommand{\ProbTiarg}[1]{\ProbTi{#1}(\Xw[i={#1}])} 
\newcommand{\rhoCindexsymb}{C}
\newcommand{\rhoCi}[1]{\rho_{\rhoCindexsymb\ifcomma[i=#1]}}
\newcommand{\rhoCiarg}[1]{\rhoCi{#1}(\Xw[i={#1}], \Xw[i={#1+1}])} %, \rangeXwNR{0}{#1})} 
\newkeycommand{\rhoCindep}[i=]{\rho_{\widetilde{\rhoCindexsymb}\ifcommandkey{i}{,{\commandkey{i}}}{}}}   
\newcommand{\letbeprocess}[1]{Let $\stochproc$ be stochastic process given by \cref{#1}.}
\newcommand{\refsasprocess}{}
\newcommand{\refsasprocessfull}{}
\newcommand{\refsasprocessSTCmain}{as:independent_realizations,as:markov_property}
\newcommand{\refsasprocessSTCmainfull}{process-stc-as:independent_realizations,process-stc-as:markov_property}
\newcommand{\refsasprocessSTCextra}{as:different_state_spaces,as:state_prob_transition,\refsasprocessSTCmain}
\newcommand{\refsasprocessSTCextrafull}{process-stc-as:different_state_spaces,process-stc-as:state_prob_transition,\refsasprocessSTCmainfull}
\newcommand{\refsasprocessSTC}{\refsasprocess,\refsasprocessSTCextra}
\newcommand{\refsasprocessSTCfull}{\refsasprocessfull,\refsasprocessSTCextrafull}
\newcommand{\refsasprocessSTCsequenceextra}{}
\newcommand{\refsasprocessSTCsequenceextrafull}{}
\newcommand{\refsasprocessSTCsequence}{\refsasprocessSTC,\refsasprocessSTCsequenceextra}
\newcommand{\refsasprocessSTCsequencefull}{\refsasprocessSTCfull,\refsasprocessSTCsequenceextrafull}
\newcommand{\refsasprocessSTCsequencecaseoneexttrafull}{process-stc-as:rhoC_independent_of_xiwin}
\newcommand{\refsasprocessSTCsequencecaseonefull}{\refsasprocessSTCsequencefull,\refsasprocessSTCsequencecaseoneexttrafull}
\newcommand{\refsasprocessSTCsequencecasetwoextrafull}{process-stc-as:rhoC_identity}
\newcommand{\refsasprocessSTCsequencecasetwofull}{\refsasprocessSTCsequencecaseonefull,\refsasprocessSTCsequencecasetwoextrafull}
\DeclareMathOperator\erf{erf}
\DeclareMathOperator\sym{sym}
\newkeycommand{\polarization}[i=]{ \ensuremath{\theta_{\commandkey{i}}}}
\newkeycommand{\rotangle}[i=]{ \ensuremath{\alpha_{\commandkey{i}} }}
\newcommand{\rhoSpol}[1]{\rho_{S\ifcomma[i=#1]}}
\newcommand{\rhoSargipol}[1]{\rhoSpol{#1}(\polarization[i=#1],\vec{\rotangle}) }
\newcommand{\ProbTpol}{\ProbTi{}}
\newcommand{\ProbTidentityargpol}{\ProbTpol(\polarization[i=\text{in}], \rotangle)}
\newcommand{\ProbTargipol}{\ProbTpol(\polarization[i=i], \rotangle[i=i])}
\newcommand{\ProbTiargipol}{\ProbTi{i}(\polarization[i=i], \rotangle[i=i])}
\newcommand{\ProbTargiexpandedpol}{\ProbTpol(\polarization[i=\text{in}]\!=\!\polarization[i=i], \rotangle\!=\!\rotangle[i=i])}
\newcommand{\rhoCpol}{\rhoCi{}}
\newcommand{\rhoCargpol}{\rhoCpol(\polarization[i=\text{in}], \polarization[i=\text{out}], \rotangle)}
\newcommand{\rhoCargipol}{\rhoCpol(\polarization[i=i], \polarization[i={i+1}], \rotangle[i=i])}
\newcommand{\rhoCiargipol}{\rhoCi{i}(\polarization[i=i], \polarization[i={i+1}], \rotangle[i=i])}
\newcommand{\rhoCargexpandedpol}{\rhoCpol(\polarization[i=\text{in}]\!=\!\polarization[i=i], \polarization[i=\text{out}]\!=\!\polarization[i={i+1}], \rotangle\!=\!\rotangle[i=i])}
\newcommand{\rhoCindepargipol}{\rhoCindep[i=i](\polarization[i={i+1}], \rotangle[i=i])}
\newcommand{\rhoCindepidentityargpol}{\rhoCindep(\polarization[i=\text{out}], \rotangle)}
\newcommand{\rhoCindepidentityargiexpandedpol}{\rhoCindep(\polarization[i=\text{out}]\!=\!\polarization[i={i+1}], \rotangle\!=\!\rotangle[i=i])}
\newcommand{\dosargipol}[1]{\dos_{#1}(\polarization[i=#1],\vec{\rotangle})}
\newcommand{\parameterizedFunctionspol}{$\rhoSargipol{0}$, $\ProbTidentityargpol$ and $\rhoCargpol$}   %\rhoCindepargpol
\newlist{openquestions}{enumerate}{1}
\setlist[openquestions,1]{
    label={\arabic*.},
    ref=\arabic*, % if to be used with \cref, don't provide label string or parentheses
    %wide,itemsep=0pt,topsep=0pt
	} 
\crefname{openquestionsi}{open question}{open questions} % singular and plural forms of text labels
\Crefname{openquestionsi}{Open question}{Open questions}
\newlist{proposedsolutions}{enumerate}{3}
\setlist[proposedsolutions,1]{
    label={\arabic*.},
    ref=\arabic*, % if to be used with \cref, don't provide label string or parentheses
    %wide,itemsep=0pt,topsep=0pt
	} 
\setlist[proposedsolutions,2]{label=\arabic*., ref=\arabic{proposedsolutionsi}.\arabic*} %, ref=(\roman{assumptionsgenerali}-\alph*)}
\setlist[proposedsolutions,3]{label=\arabic*., ref=\arabic{proposedsolutionsii}.\arabic*} %, ref=(\roman{assumptionsgenerali}-\alph*)}
\crefname{proposedsolutionsi}{proposed solution}{proposed solutions} % singular and plural forms of text labels
\Crefname{proposedsolutionsi}{Proposed solution}{Proposed solutions}
\crefname{proposedsolutionsii}{proposed solution}{proposed solutions} % singular and plural forms of text labels
\Crefname{proposedsolutionsii}{Proposed solution}{Proposed solutions}
\crefname{proposedsolutionsiii}{proposed solution}{proposed solutions} % singular and plural forms of text labels
\Crefname{proposedsolutionsiii}{Proposed solution}{Proposed solutions}
\begin{document}
\mypreamble

\newcommand{\myabstract}{
Both the corpuscular theory of light and the theory of stochastic processes are well known in literature. However, they are not systematically used together for description of optical phenomena. There are optical phenomena, such as the well known three-polarizers experiment or other phenomena related to polarization of light, which have never been quantitatively and qualitatively explained using the concept of quantum of light (photon). The situation changed in 2022 when stochastic memoryless and independent (IM) process formulated within the framework of the theory of stochastic processes was introduced. It is suitable for determination of probability (density) functions characterizing interaction (transmission or reflection) of individual photons with optical elements on the basis of experimental data. The process has memoryless (Markov) property and it is assumed that the interactions of individual photons with an optical system are independent. Formulae needed for analysis of data in the context of polarization of light are derived. An example analysis of the three-polarizers experiment is performed and numerical result of the probability (density) functions are determined. These original results were missing in literature. The results imply that the possibilities of the corpuscular theory of light to describe optical phenomena can be significantly extended with the help of stochastic IM process and the theory of stochastic processes in general. 
}

\ifthenelse{\isundefined{\flagEPJPLUS}}{%
\begin{abstract}
\myabstract
\end{abstract}
}{%
\abstract{\myabstract}

}

\keywords{
the corpuscular theory of light, 
the theory of stochastic processes, 
stochastic independent and memoryless (Markov) process, 
stochastic IM process, 
optics, 
photon, 
optical systems,
polarization of light, 
three-polarizers experiment 
}

\maketitle

\clearpage
\ifthenelse{\isundefined{\flagEPJPLUS}}{%
\tableofcontents
}{%
}

\section{\label{sec:introduction}Introduction}
%\begin{enumerate}
%\item{explain that beam density in x,y may not be uniform but if polarizers consist of long molecules then photons must be distributed uniformly as to the distance from the molecules}
%\item{publish it in:
%\begin{enumerate}
%		\item{Nature: problem with page limit}
%		\item{Optics Express}
%	    \item{Scientific Reports}
%	    \item{Phys. Rev. X}
%\end{enumerate}
%}
%\end{enumerate}
% ODO: cite PALS, complex interferometry...
% ODO: explain that "transmission" means also reflection
% ODO: add comment that several theoretical approaches can be studied in parallel and the results compared

The question of whether light is corpuscular or wave in nature is very old and has been debated in the literature for centuries. Sometimes the corpuscular theory and the wave theory are understood as "either-or" and sometimes as "both-and" (wave-particle dualism). We will not attempt to answer this question once and for all. We will only focus on the possibilities of the corpuscular theory of light itself.

I.~Newton tried to explain a number of phenomena with light by using the idea that light consists of particles (corpuscles). However, determination of properties of these particles varied greatly with time; useful historical references and many comments to different models of light quanta (photons) are in \cite{Hentschel2018}. The interaction of light with matter is still the subject of extensive scientific research. One of many important discoveries related to particle of light concerned its energy. M.~Planck in 1900 \cite{Planck1900,Planck1901} introduced quantization of energy of electromagnetic radiation to describe spectrum of black-body radiation. A.~Einstein in 1905 \cite{Einstein1905quanta} developed further this quantum hypothesis. He basically introduced modern concept of particle of light - photon - and used it to explain the photoelectric effect. 
%%This phenomenon occurs when beam of light irradiates certain materials and electrons from the surface of the materials are ejected from it. Experimental results show that electrons will only be emitted when light of a certain frequency or higher is incident on materials - regardless of the intensity of the light beam or duration of exposure. 
%Photon has been used for explanation of many other observed phenomena related to light since the beginning of the 20th century. 
Photon has been successfully used for explanation of many other observed phenomena related to light since the beginning of the 20th century. It has led to numerous technological advances which further demonstrate in various ways the usefulness of the concept of particles of light (photons). There is now whole field called photonics \cite{Photonics2015vol1,Photonics2015vol2,Photonics2015vol3,Photonics2015vol4} which began with the invention of the laser in 1960 and which further demonstrate the practical applicability of photons.

However, there are phenomena in optics which have never been described both quantitatively and qualitatively using the corpuscular idea of light. This concerns, e.g., experiments with polarization of light. Interesting properties of light related to polarization may be easily demonstrated with the help of the well known and discussed three-polarizers experiment. When incident unpolarized light passes through two crossed linear polarizers then the intensity of light is strongly reduced (nearly no light passes through). However, when another polarizer is placed between the two polarizers then some light can pass through the sequence of the three polarizing filters. The intensity of light behind the third polarizer in the sequence is maximal when the second polarizer has polarizing axis oriented at $45 \si{\degree}$ relative to the polarizing axis of the first polarizer (and the third polarizer). This experimental result, which can be easily reproducible and observed by naked eye, has attracted attention of many researches as it is clear that the effect cannot be explained by only absorption of some photons from the beam by the polarizers; the polarizers must change also a property of the transmitted photons (called \emph{polarization}). Moreover, both the effects must depend on the orientation of the axes of the polarizers. Similar effects can be observed when light is transmitted through other polarization sensitive elements.

The situation changed significantly in 2022 when stochastic independent and memoryless (Markov) process (IM process) formulated within the theory of stochastic processes was introduced in \cite{Prochazka2022statphys_process_stc}. The independence is related to assumed independence of outcomes (realization) of an experiment. It has opened completely new possibilities to describe a whole range of (not only) particle phenomena that have not yet been described particle-wise. IM process can describe under the two assumptions: particle decays, motion of particles when their initial conditions are specified statistically, particle-matter interactions and particle-particle collisions in unified way \cite{Prochazka2022statphys_process_stc}.

Several statistical descriptions of optical phenomena involving photons have been, implicitly or explicitly, based on the two assumptions. Description based on IM process can be, therefore, reduced to these relatively simple descriptions. One may ask how IM process (the theory of stochastic processes in general) can help to describe optical phenomena (such as the three-polarizer experiment) which have not yet been described by using the corpuscular theory of light and statistics. The theory of stochastic processes has very wide applications. Many stochastic processes having various properties are studied in literature. However, optics is one of the field where it has been used only partially or not at all. We will try to show that this theory can be very useful in explaining and describing phenomena in optics, as well as it is very helpful in other areas of scientific research.

This paper is structured as follows. The statement of the problem is discusses in greater detail in \cref{sec:problem_statement}. \Cref{sec:measurement} briefly summarizes measurement of number of photons (beam intensities) transmitted through a sequence of $\ntrans$ optical elements in dependence on their rotations and the results concerning linear polarizers. Contemporary statistical theoretical approaches used for description of polarization of light and light in general are discussed in \cref{sec:contemprorary_theories}. They contain several limitations that in many cases make them unsuitable for analyzis of experimental data and achieving the stated goals. These limitations can be removed using stochastic IM process (or other suitable process formulated within the theory of stochastic processes) which is explained in \cref{sec:theory_of_stochastic_processes}. IM process describing transmission of individual photons through $\ntrans$ polarization sensitive elements (including, but not limited to, linear polarizers) is formulated in \cref{sec:model}. Several ideas of M.~V.~Lokaj\'{i}\v{c}ek \cite{lokajicek2002quantum} will be developed further. Probability (density) functions characterizing transmission of individual photons through optical elements can be used for definition of properties of the beam and various types of optical elements. Definitions related to polarized light and various types of polarization sensitive elements are discussed in \cref{sec:definitions}. An example data corresponding to transmission of light through 3 linear polarizers are analyzed with the help of IM process in \cref{sec:data_analysis} where numerical results are shown, too. The example analysis of data in \cref{sec:data_analysis} leads to probability (density) functions which are not determined uniquely. What to do in similar situations is discussed in \cref{sec:open_questions_answers}. Summary and concluding remarks are in \cref{sec:conclusion}.

%Main characteristics of measured numbers of transmitted photons through a sequence of $\ntrans$ linear polarizers in dependence on the rotations of their axes are discussed in \cref{sec:measurement_polarizers}. 
%As measured number of photons (beam intensities) suitable for our analysis are not available in the literature, an example of data of photon numbers corresponding to transmission of light through $\ntrans=3$ ideal identical linear polarizers are analyzed in \cref{sec:data_analysis}. 
%The results of the analysis are discussed further in \cref{sec:data_analysis_open_questions} where one can find a list of newly opened questions and 

% ODO
%Possibilities of description of polarization phenomena using the corpuscular idea of light and the theory of probability (the theory of stochastic processes) are analyzed in this paper.

% Possibilities of description of polarization of light using the well known corpuscular idea of light and the widely used theory of stochastic processes will be analysed in the following.

% ODO
%This paper is devoted only to the first stage of data analysis mentioned above. 

\section{\label{sec:problem_statement}Problem statement}

Let us consider a sequence of $\ntrans$ optical elements (they may or may not be polarization sensitive), as shown in \cref{fig:optical_elements}. This is arrangement of many  experiments (in general a net of optical elements can be consider). The interaction of a photon through $i$-th element ($i=0,...,\ntrans-1$) can be considered stochastic (random) process (a photon may or may not pass through given element, it can be reflected, ...). We may ask how to determine probability (density) functions characterizing interaction of photons with the individual elements in the sequence. We will commonly use the term transmission to refer to any interaction of a photon (beam) with an optical element.

According to \cite{Prochazka2022statphys_process_stc} (see eq.~(63) therein) probability of transition of a system from given state to another given state can be factorized into 3 conditional probabilities (assuming Markov property and independence of outcomes of an experiment). In the context of optics the probability of transmission of given input photon state through $i$-th sensitive optical element to given output photon state can be factorized into 3 conditional probabilities. I.e., 3 probabilistic effects can be distinguished:
\begin{enumerate}
\item{an incoming photon before interacting with $i$-th optical element has its state specified by a probability density function $\rhoSi{i}$;}
\item{an incoming photon in given state may or may not be transmitted through $i$-th element (characterized by a probability function $\ProbTi{i}$ being function of the incoming photon state, not the output photon state);}
\item{if the photon in given state is transmitted through the element then its state may or may not change (described by a probability density function $\rhoCi{i}$ being function of both the incoming and the outgoing photon state).}
\end{enumerate}
Both incoming and outgoing photon states are in general described by random and non-random variables. Optical elements can be characterized by random and non-random variables, too. %The probability $\ProbTi{i}$ and the probability density function $\rhoCi{i}$ can depend also on some parameters (non-random variables) characterizing $i$-th optical element (e.g., the rotation of the element). 

One may ask how the probability (density) functions $\rhoSi{i}$, $\ProbTi{i}$ and $\rhoCi{i}$ characterizing transmission of individual photons with  given $i$-th optical element can be determined on the basis of experimental data. It seems that it has never been done systematically.

% Beam intensity decreases after transmission of light beam through a sequence of linear polarizers (each of them having its orientation). In the case of ideal identical linear polarizers the decrease is described by Malus's law. The transmittance of $i$-th linear polarizer defined as the ratio of the incoming to the outgoing beam intensity corresponds to probability of transmission of a photon through the polarizer. However, this is an \emph{average} probability, it has different meaning than the probability (density) functions $\rhoSi{i}$, $\ProbTi{i}$ and $\rhoCi{i}$ or their multiplication. It is sometimes assumed that polarization of a photon transmitted through a linear polarizer is the same as the rotation of the axis of the linear polarizer. This simplifying assumption allows to determine polarization states of transmitted photons in the case of a linear polarizer. However, it is too strong and one should ask whether polarization state of each outgoing photon is equal to the rotation of the axis of a linear polarizer, or there is a spectrum of polarization states of transmitted photons (in the former case the probability density function $\rhoCi{i}$ is given by delta function).  %Moreover, some polarization sensitive devices transmit photons very differently than a linear polarizer. 

\begin{figure*}%[!htb]
\captionsetup{width=0.95\textwidth}
\centering
\includegraphics*[width=\textwidth]{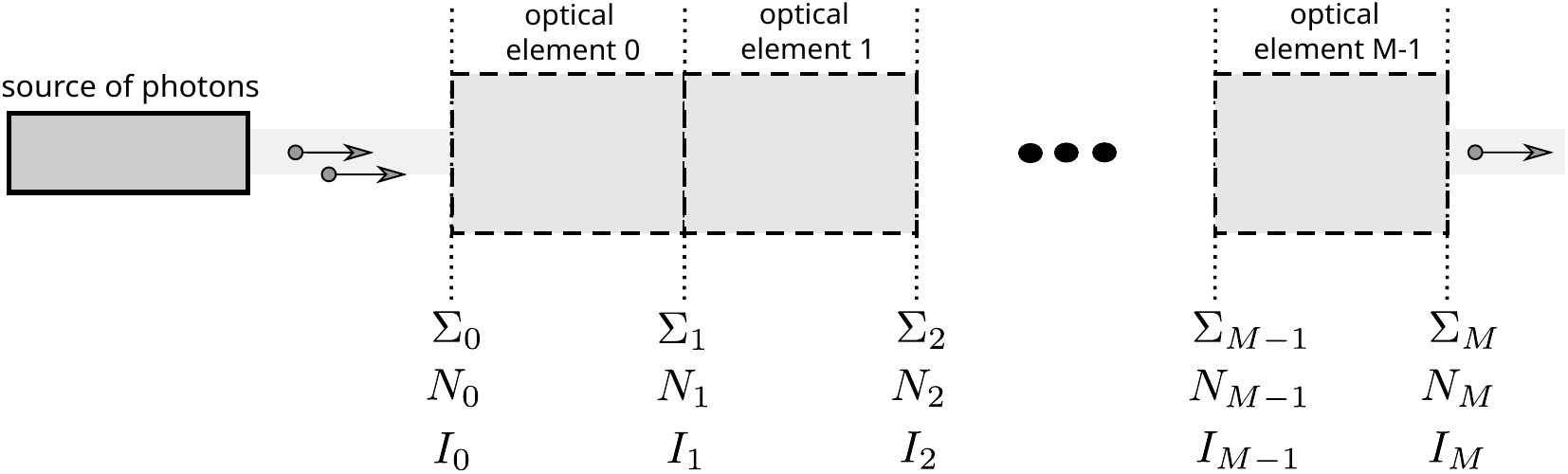} 
\caption{\label{fig:optical_elements}Scheme of sequential arrangement of many optical systems (experiments with light). A beam of photons is passing through control surfaces $\surface_{i}$ ($i \in (0,1, \dotsc, \ntrans)$) which define $\ntrans$ transport segments, each of them containing an optical element.
}
\end{figure*}

\section{\label{sec:measurement}Measurement}
It is an experimental goal to measure the beam characteristics before and after interaction with a given optical element ("sample"). Important experimental data about the sample can be obtained when input (diagnostic) beams of various properties and the properties of the output beams are measured, see sect.~2 in \cite{Prochazka2022statphys_process_stc}. I.e., response of an element to various inputs can be measured. In general transmission of a photon through an optical element depends on random and non-random variables. State of input and output beam can be characterized by quantity called density of photon states which depends on the random and non-random variable. Density of states will be defined later in \cref{sec:theory_of_stochastic_processes_generalities}. In many cases it is not possible to measure it as a function of all the random and non-random variables.

Let us now assume that only the number of photons before and after an interaction with each optical element in a sequence can be measured as a function of non-random variables. This is discussed in \cref{sec:measurement_general} in the case the non-random variables correspond to the rotations of the elements (a common case when measuring polarization-sensitive elements). In \cref{sec:measurement_polarizers} one can find the dependences in the case of sequence of linear polarizers.

\subsection{\label{sec:measurement_general}Measurement of number of photons passing through sequence of optical elements in dependence on their rotations}
\subsubsection{Experimental setup}
Let us consider photon beam passing trough a sequence of $\ntrans$ optical elements, see \cref{fig:optical_elements}. %Let us further consider flat control surfaces $\surface_i$ ($i=0,...,M)$ parallel to each other and orthogonal to the direction of the beam. 
Control surfaces $\surface_i$ ($i \in (0,1, \dotsc, \ntrans)$) can be introduced. Each optical element is placed between two neighbouring control surfaces. These control surfaces do not correspond to physical surfaces of the optical elements, but to places where characteristics of the beam can be measured (they will be called surfaces for short).

\subsubsection{\label{sec:measurement_particle_number}Number of photons}
Let $N_{i}$ be the \emph{number of photons} which passed through $i$-th surface $\surface_i$. The number of photons passed through $i$-th optical element, i.e., $N_{i+1}$ ($i=0,...,\ntrans-1$ in this case), may in general depend on various random and non-random variables characterizing states of the optical elements and states of photons. Let us assume, for the sake of simplicity, that it depends only on rotation of the $i$-th optical element, the rotations of all the other optical elements placed in front of it, and the number of initial particles $N_{0}$. We may introduce vector $\vec{\rotangle}$ having components $\rotangle[i=i]$ ($i=0,...,\ntrans-1$) and representing rotations of all the individual optical elements. For the sake of simplicity we introduce rotation of an optical elements characterized by only one parameter (non-random variable), but two other parameters fully specifying rotation of given element in space could be introduced, too. It is useful to introduce a convention that if a function depends on $\vec{\rotangle}$ then it may depend only on some of its components.

\emph{Transmittance} $T_{i}(\vec{\rotangle})$ of $i$-th optical element ($i=0, ..., \ntrans-1$) is standardly defined as  
\begin{subnumcases}{ T_i(\vec{\rotangle}) = \label{eq:T_i_def} }
		\frac{N_{i+1}(\vec{\rotangle})}{N_{i}(\vec{\rotangle})}  & \text{if $N_{i}(\vec{\rotangle})\neq 0$} \\
                                           0 & \text{otherwise} 
\end{subnumcases}
It characterizes how much $i$-th optical element transmits light in dependence on its rotation and rotations of all the preceding optical elements. 

\begin{remark}
Number of photons transmitted through an optical element may in general depend, e.g., on time. In some cases this experimental information must be taken into account to correctly interpret the experimental results. However, in many cases (e.g., in the case of linear polarizers which will be discussed below) the measured quantities often do not depend on time (numbers of transmitted photons corresponding to the initial number of photons $N_0$ are counted behind each optical element in a sequence independently of the time when the photons were transmitted). Time variable will not be of our interest in the following.
\end{remark}

\subsubsection{\label{sec:measurement_intensity_number}Number of photons and beam intensity}
\begin{assumption}[Energy of photons]
\label{as:energy_pol}
Photons have the same energy when they pass through surface $\surface_{i}$ for all $i=0,\dotsc,\ntrans$.
\end{assumption}

Sometimes \emph{intensity} of a particle beam defined as energy incident on a surface per unit of time and per unit of area (i.e., having units of $Js^{-1}m^{-2}=Wm^{-2}$)\footnote{In radiometry this definition of intensity corresponds to irradiance (see chapter 34 in \cite{HandbookOfOptics2009_vol2}).} is measured in an experiment. The intensity of light beam passing through surface $\surface_i$ may be denoted as $I_{i}(\vec{\rotangle})$ ($i=0,...,M$), see \cref{fig:optical_elements}. The intensity $I_0$ is the initial intensity of the beam. If the intensity $I_{i}$ and the number of photons $N_{i}$ correspond to the same surface area and time interval \emph{and the photons have the same energy} (see \cref{as:energy_pol})  then it holds (if $N_{i}(\vec{\rotangle})) \neq 0$, $i=0,...,\ntrans-1$)
\begin{align}
		\frac{N_{i+1}(\vec{\rotangle})}{N_{i}(\vec{\rotangle})} % &= \frac{\fluxSgen[e=EventS,i=i+1](\XwNR)}{\fluxSgen[e=EventS,i=i](\XwNR)} 
		&=  \frac{I_{i+1}(\vec{\rotangle})}{I_{i}(\vec{\rotangle})}  \label{eq:relation_N_to_I_ip1_to_i}
\end{align}
and (if $N_{0} \neq 0$, $i=0,...,M$)
\begin{align}
		\frac{N_{i}(\vec{\rotangle})}{N_{0}} &= \frac{I_{i}(\vec{\rotangle})}{I_{0}}   %&= \frac{\fluxSgen[e=EventS,i=i](\XwNR)}{\fluxSgen[e=EventS,i=0]}
		\, . \label{eq:relation_N_to_I_i_to_0}
\end{align}

%Let us define \emph{particle flux} $J_{i}(\vec{\rotangle})$ as the number of particles $N_{i}(\vec{\rotangle})$ per the area $\surfaceAreai[i=i]$ of the surface $\surface_i$ and per time $\Delta t = \tstop - \tstart$ (see the definition given by \cref{process-stc-eq:def_flux_i} in \cite{Prochazka2022statphys_process_stc}, $i=0,...,M$):
%\begin{equation}
%		J_{i}(\vec{\rotangle}) = \frac{N_i(\vec{\rotangle})}{\Delta t \surfaceAreai[i=i]} \label{eq:def_flux_i}  \, .
%\end{equation}
%It holds (if $N_{0}\neq0$)
%\begin{equation}
%		\frac{J_{i}(\vec{\rotangle})}{J_{0}} = \frac{N_{i}(\vec{\rotangle})}{N_{0}} \, . \label{eq:dosargi_to_dos0_eq_Ni_to_N0_simplified}
%\end{equation}
%\Cref{eq:relation_N_to_I_i_to_0,eq:dosargi_to_dos0_eq_Ni_to_N0_simplified} imply
%\begin{equation}
%		\frac{J_{i}(\vec{\rotangle})}{J_{0}} = \frac{I_{i}(\vec{\rotangle})}{I_{0}} \, . \label{eq:dosargi_to_dos0_eq_Ii_to_I0_simplified}
%\end{equation}
\begin{remark}
\Cref{as:energy_pol} could be, for the purposes presented in this paper and for the purposes to relate the relative numbers of transmitted photons to the relative beam intensities, replaced by assumption that \cref{eq:relation_N_to_I_i_to_0} holds.
\end{remark}

\subsection{\label{sec:measurement_polarizers}Example - sequence of linear polarizers}
\begin{figure*}%[!htb]
%\captionsetup{width=0.95\columnwidth}
\captionsetup{width=0.95\textwidth}
\centering
\includegraphics*[width=\textwidth]{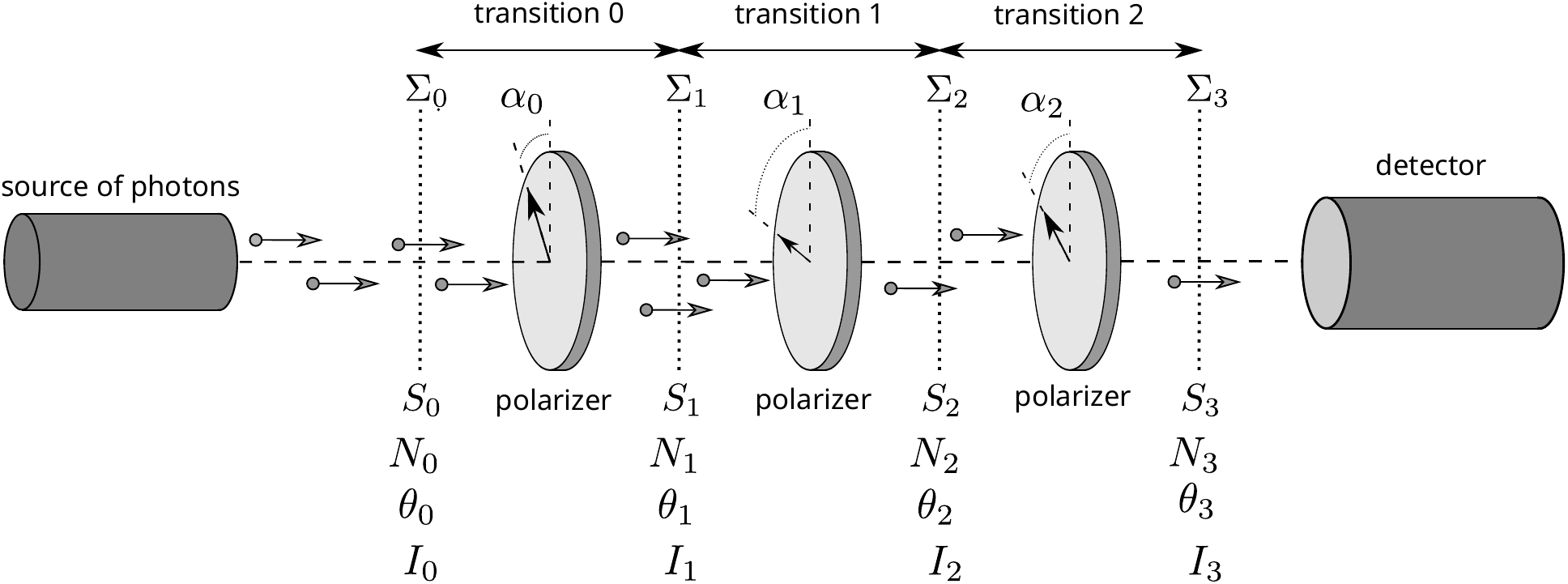}
\caption{\label{fig:experimental_setup_polarizers} Example of experimental setup for measurement of relative photon numbers $N_{i}/N_0$ (beam intensities $I_{i}(\vec{\rotangle})/I_{0}$) for $i=1,2,3$ in dependence on the rotation angles $\vec{\rotangle}=(\rotangle[i=0],\rotangle[i=1],\rotangle[i=2])$ of the linear polarizers when light beam is transmitted through one, two or three polarizers ($M=3$). The measured relative photon numbers correspond to conveniently chosen control surfaces $\surface_i$ (they may not correspond to physical surfaces of the polarizers). Values of polarization $\polarization[i=i]$ of photons which passed through surface $\surface_i$ form state space $\stateset{i}$. The values of photon polarization angles must be determined with the help of a probabilistic model on the basis of experimental data.
}
\end{figure*}

Let us now consider a photon beam passing through $\ntrans$ linear polarizers, see \cref{fig:experimental_setup_polarizers} corresponding to $\ntrans=3$, as an example of transmission of light through a sequence of polarization sensitive elements.

\subsubsection{\label{sec:ideal_polarizers}Ideal linear polarizers and Malus's law}
In textbooks on optics transmission of light through ``ideal'' linear polarizers is often discussed (see, e.g., sect.~12.4 in \cite{HandbookOfOptics2009_vol1}). If initially unpolarized light beam is sent through a sequence of $\ntrans$ ideal linear polarizers then the transmittance $T_{i}(\vec{\rotangle})$ of $i$-th ideal polarizer ($i=0,...,\ntrans-1$) is
\begin{subnumcases}{T_{i}(\vec{\rotangle}) = \label{eq:T_i_ideal_polarizer}}
                          \frac{1}{2} & \text{if $i = 0$} \label{eq:T_i0_ideal_polarizer} \\
		\cos^2(\rotangle[i=i] - \rotangle[i=i-1]) & \text{if $0 < i < M$}    \, .
%	\cos^2(\rotangle{i} - \rotangle{i-1}) & \text{if $i \neq 0$ and $I_{i} \neq 0$}\\
%                                    0 & \text{otherwise} 
\end{subnumcases}
I.e., the initial number of photons $N_0$ after transmission through the first ideal polarizer drops to one half independently of the orientation of the polarization axis $\rotangle[i=0]$. The cosine-squared function expresses the known Malus's law. It represents the first quantitative relationship for treating polarized light intensities describing measurements performed by É.-L.~Malus \cite{Malus1809malus_law} (see also summary of the life of Malus, the historical context, and further comments concerning his law in \cite{Kahr2008}). The transmittances $T_{i}$ given by \cref{eq:T_i_ideal_polarizer} do not depend on the value of the initial number of particles $N_{0}$.

%The ratio of the number of particles which passed through $i$-th linear polarizer to the initial number of particles $N_{0}$ is
%\begin{subnumcases}{\frac{I_{i}(\vec{\rotangle})}{I_0} = \label{eq:N_i_over_N_0_ideal_polarizer}}
%                                                              1  & \text{if $i = 0$}  \\
%                                                     \frac{1}{2} & \text{if $i = 1$}  \label{eq:N_i_over_N_0_i1_ideal_polarizer}  \\
%	\frac{1}{2}	\Pi^{i-1}_{j=1}\cos^2(\rotangle{j} - \rotangle{j-1}) & \text{if $i > 1$}  \label{eq:N_i_over_N_0_igt1_ideal_polarizer}  \, .
%\end{subnumcases}
%The relative beam intensity is (see \cref{eq:relation_N_to_I_i_to_0})
\Cref{eq:T_i_def,eq:T_i_ideal_polarizer} imply that numbers of transmitted photons corresponding to the surface $\surface_i$ divided by initial number of photons is
\begin{subequations}
		\label{eq:N_i_over_N_0_ideal_polarizer}
\begin{empheq}[left={\frac{N_{i}(\vec{\rotangle})}{N_0}=\empheqlbrace\,}]{align}
		%{\frac{N_{i}(\vec{\rotangle})}{N_0} = }
		&             1  & \text{if $i = 0$}  \\
		&                                             \frac{1}{2} & \text{if $i = 1$}  \label{eq:N_i_over_N_0_i1_ideal_polarizer}  \\
		&\frac{1}{2}	\Pi^{i-1}_{j=1}\cos^2(\rotangle[i=j] - \rotangle[i=j-1]) & \text{if $1 < i < M$}  \label{eq:N_i_over_N_0_igt1_ideal_polarizer}  \, . \!\!\!\!
\end{empheq}
\end{subequations}

\subsubsection{\label{sec:real_polarizers}Real linear polarizers}
Number of photons transmitted through a sequence consisting of  ``real'' linear polarizers may be very similar to the number of photons transmitted through sequence of ideal polarizers, see \cref{sec:ideal_polarizers}. However, it is known that some other real polarizers differ significantly from the ideal polarizers. Every two real linear polarizers transmit some light, even if they are crossed, and absorb some light if their axes are parallel. This is often expressed by modifying the Malus's law, as discussed in sect.~12.4 in \cite{HandbookOfOptics2009_vol1}.  %The differences between the ideal and real polarizers may be more and more significant when the light beam passes through more and more polarizers. %It is also worth to mention that experimental data corresponding to transmission of light through two polarizers are
In \cite{Krasa1993} it has been measured that two crossed real linear polarizers produce unpolarized light (value of $N_{3}(\vec{\rotangle})/N_0$ behind the third, testing, linear polarizer is independent of its rotation angle). It has been measured in \cite{Krasa1994} that transmittance of a pair of real linear polarizers may have local maximum if the polarizer axes are mutually crossed, and around this point the transmittance is not fully symmetrical (the effect can be measured with sensitive measurement devices). Some smaller or bigger deviations of measured beam intensity from Malus's law are visible in experiments conducted in \cite{Kadri2014,Vertchenko2016,Monteiro2017}, too. The deviations depend on properties of the polarizers. Some other differences of real linear polarizers from ideal ones are summarized in sect.~15.27 in \cite{HandbookOfOptics2009_vol1}. It has not been possible to observe many of these differences by Malus at the beginning of 19th century, i.e., before the advent of photomultiplier tubes (PMTs), charge-coupled devices (CCDs) and other devices nowadays commonly used for measurement of numbers of photons (beam intensities).

\section{\label{sec:contemprorary_theories}Contemporary widely used descriptions of polarization of light}
The study of interaction of light with matter has a very long tradition. Several distinct theoretical approaches have been developed to better understand huge amount of observed phenomena \cite{Goldstein2011,Goodman2015statistical_optics,Rice2020quantum_optics,Weiner2003light_matter,Hartmann2005theoretical_optics,Menn2004practical_optics,Thorne2017modern_classical_physics,Meschede2017optics_light_lasers,HandbookOfOptics2009_vol1,Photonics2015vol1,BornWolf2013principles_of_optics}. 

The measurement and description of effects related only to polarization of light represent a very broad topic, see \cite{Shurcliff1962polarized_light,Clarke1971polarized_light,Azzam1977,Brosseau1998fundamentals_of_polarized_light,Tompkins2005ellipsometry,Fujiwara2007spectroscopic_ellipsometry} and Part 3 in \cite{HandbookOfOptics2009_vol1} devoted to optical polarization. A wide range of polarization phenomena, their applications and the history of the polarization of light are discussed in \cite{Horvath2003}.

Theoretical descriptions of polarization of light can be divided into two groups, depending on whether or not they use statistics. Theoretical approaches which use statistics only partially or not at all are discussed in \cref{sec:contemprorary_theories_non_statistical} and we will not pay much attention to them. \Cref{sec:contemprorary_theories_statistical} deals with theoretical methods, which are much more related to statistics. Summary of the theoretical approaches, their possibilities to determine probability density functions $\rhoSi{i}$, $\ProbTi{i}$ and $\rhoCi{i}$; are in \cref{sec:contemprorary_theories_summary}.

\subsection{\label{sec:contemprorary_theories_non_statistical}Non-statistical descriptions}
Various matrix calculi have been developed to describe the effect of polarization sensitive elements on the state of polarization of a light beam, see sect.~12.8~in~\cite{HandbookOfOptics2009_vol1} and, e.g., a historical revision of the development of the differential Stokes–Mueller matrix formalism \cite{Arteaga2017}. The most widely known matrix calculi are the Mueller calculus and the Jones calculus. The distinct matrix formalisms have been developed with different aims and purposes in order to better understand observed polarization phenomena. However, none of the matrix calculi has been designed to describe probabilistic character of transmission of individual photons through a sequence of polarization sensitive elements, such as the mentioned sequence of 3 linear polarizers. They describe some average transmission of the whole beam (a state of the beam being represented by a vector). This is common feature of many matrix descriptions used in optics \cite{Gerrard1975}. E.g., the description of transmission of light through sequence of three polarizers by Jones \cite{Jones1956} does not mention probability at all. Similarly, description of three linear polarizers in the context of astrophysics discussed in \cite{DeForest2022} is also not focused on probabilistic description of transmission of individual photons. This is also the case of the Maxwell's equations. None of these non-statistical approaches of description of polarization of light is suitable for determination of the probability (density) functions $\rhoSi{i}$, $\ProbTi{i}$ and $\rhoCi{i}$.

\subsection{\label{sec:contemprorary_theories_statistical}Statistical descriptions}
Several diverse theoretical approaches more strongly associated to statistics (the probability theory) have been developed in optics, too. \emph{Statistical optics} \cite{Goodman2015statistical_optics} deals mainly with a scalar theory of light waves. The scalar quantities are regarded as representing one polarization component of the electric or magnetic field, with the assumption that all such components can be treated independently. Its application to data concerning propagation of light phenomena is based on several other assumptions (such as assumptions accompanied by introduction of complex amplitudes and their properties). 

This theoretical approach does not fully explore the concept of quantum of light in which we are interested in. This is also the case of other contemporary theoretical attempts \cite{Brosseau1998fundamentals_of_polarized_light} trying to describe various polarization (optical) phenomena with the help of statistics. \emph{Quantum mechanics} in general can take into account individual photons. However, description of polarization of light in \emph{quantum optics} discussed, e.g., in sect.~6 in \cite{Agarwal2013quantum} does not provide straightforward way to determine the functions $\rhoSi{i}$, $\ProbTi{i}$ and $\rhoCi{i}$ characterizing transmission of individual photons through polarization sensitive elements (or other optical elements). Description of polarization of light based on quantum optics and discussed in \cite{Schranz2021} neglects polarization-dependent losses during transmission of photons through a medium and is based on several other limiting assumptions. It is, therefore, not suitable for description of experiment with 3 linear polarizers where it is necessary to take into account that a photon can be absorbed by one of the linear polarizers. There is also \emph{quantum state tomography (QST)} which refers to any method that allows one to reconstruct the accurate representation of a quantum system based on data obtainable from an experiment. This is the kind of method we are looking for. However, QST introduces complex probability amplitudes (wave-functions), as in quantum mechanics, and also in this case it seems that the probability (density) functions $\rhoSi{i}$, $\ProbTi{i}$ and $\rhoCi{i}$ have never been determined in the case of three-polarizers experiment (or similar one) \cite{Toninelli2019,Czerwinski2022}.  

\emph{Monte Carlo (MC) methods} represent another broad group of theoretical statistical methods used for description of particle transport phenomena. They are summarized in \cite{Haghighat2015}. They use random or pseudorandom numbers to simulate a random process. To make a MC simulation of transmission of particles through a medium it is necessary to know parameters of various microscopic processes. Determination of the probability density functions characterizing the processes typically requires detailed knowledge of cross sections of various interaction types, mean free paths, spatial distribution and types of scattering centers, etc. The parameters of the microscopic processes and the dependence of the processes on various random variables are often now known. This kind of information need to be determine first on the basis of experimental data. It makes MC methods hardly usable for determination of the probability (density) functions $\rhoSi{i}$, $\ProbTi{i}$ and $\rhoCi{i}$ characterizing \emph{overall} effect of transmission of individual photons through $i$-th optical element.   

\subsection{\label{sec:contemprorary_theories_summary}Summary}
One can look similarly over other theoretical methods used in optics. It is possible to make several observations concerning the contemporary statistical descriptions widely used in optics:
\begin{enumerate}
\item{States of individual photons are not always taken into account (some methods are more focused on average properties).}
\item{Evolution (transition) operator is standardly assumed to be unitary in quantum mechanics. This makes it delicate to describe phenomena corresponding to probability $\ProbTi{i}$ which is not identically equal to 1 and depends on random variables characterizing given initial state (for further details see the open problem 4 in sect.~6 in \cite{Prochazka2016} concerning contemporary descriptions of particle collisions).}
\item{Some theoretical methods require information about a system (such as the Hamiltonian or parameters of various microscopic processes) which may not be known and need to be determined first on the basis of experimental data.} 
\item{Introduction of complex probability amplitudes or wave-functions is often accompanied by introduction of additional assumptions concerning their properties. We are, roughly speaking, interested mainly in the probabilities given by the square of the absolute value of the amplitudes. Introduction of the amplitudes may, therefore, bring complications rather than benefits.}
\item{It is often difficult to follow under which assumptions various statements are made.}
\item{Only one transition of a photon state to another photon state is often discussed. Sequences (or even nets) of transitions are rarely mentioned.}
\item{The probability of transition of a given initial state to a given final state is not factorized into the 3 probabilistic effects mentioned in \cref{sec:problem_statement}.}
\end{enumerate}
We may conclude that the probability (density) functions $\rhoSi{i}$, $\ProbTi{i}$ and $\rhoCi{i}$ have never be determined on the basis of experimental data. The contemporary theoretical approaches used in optics do not provide straightforward way to do it. %However, they are experiments which are described using photons (corpuscles) 

\section{\label{sec:theory_of_stochastic_processes}Theory of probability and theory of stochastic processes}
A branch of mathematics concerned with the analysis of random phenomena is the probability theory. Stochastic or random process is a mathematical object usually defined as a family of random variables. The theory of stochastic processes provides very general and abstract framework to study random processes, experiments having random outputs. It may be seen as extension of the probability theory. Properties of many stochastic processes are intensively studied in literature and they have numerous applications in basically all field of research. Optics is, however, one of the field where it is used only partially or not at all (see \cref{sec:contemprorary_theories}).

\Cref{sec:theory_of_stochastic_processes_generalities} is devoted to definitions concerning general stochastic process. Definitions and statements related to stochastic IM process are in \cref{sec:theory_of_stochastic_processes_IM_process}. Advantages of descriptions based on a stochastic process like IM process are summarized in \cref{sec:theory_of_stochastic_processes_adventages}.

\subsection{\label{sec:theory_of_stochastic_processes_generalities}General stochastic process}
\begin{definition}[Stochastic process]
\label{def:process}
A stochastic process is a family $X = \stochproc$ of random variables defined on the same probability space $\probabilityspace$ and, for fixed index $i$ in an index set $\indexset$, taking their values $\Xw[i=i]$ in given space $\stateset{i}$ which must be measurable with respect to some $\sigma$-algebra $\statesetmeasure{i}$ of admissible subsets. [This is \cref{process-stc-def:process} in \cite{Prochazka2022statphys_process_stc}.]
\end{definition}

The probability density functions $\rhoargSi{i}$ is given by \cref{process-stc-eq:rhoSinarg} in \cite{Prochazka2022statphys_process_stc}. It is functions of random variables $\Xw[i=i]$ and non-random variables $\XwNR[i=i]$ (the dependence on non-random variables will not be written explicitly in this \cref{sec:theory_of_stochastic_processes}). Number of states corresponding to given state space $\stateset{i}$ is denoted as $\Ni{i}$.

\begin{theorem}
\label{thm:rhoS_norm}
\letbeprocess{def:process} 
If $\stateset{i} \neq \emptyset$ then % (normalization conditions)
\begin{align} 
\int_{\Xw[i=i]} \rhoargSi{i} \dXw[i=i] &=1                                         \label{eq:rhoS_norm} \\
                          \dosargSi{i} &= \Ni{i} \rhoargSi{i}                      \label{eq:dosargi} \\
						        \Ni{i} &= \int_{\Xw[i=i]} \dosargSi{i} \dXw[i=i]   \label{eq:Ni_given_by_integration} 
\end{align}
for all $i \in \indexset$. [This is \cref{process-stc-thm:rhoS_norm} in \cite{Prochazka2022statphys_process_stc}.]
\end{theorem}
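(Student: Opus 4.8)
The plan is to read the three identities off the defining relations for the probability density $\rhoargSi{i}$, the density of states $\dosargSi{i}$ and the number of states $\Ni{i}$ as set up in \cite{Prochazka2022statphys_process_stc}, treating \cref{eq:rhoS_norm} as the one substantive input and obtaining \cref{eq:dosargi,eq:Ni_given_by_integration} from it.

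First I would prove \cref{eq:rhoS_norm}. Recall that $\rhoargSi{i}$ is, by definition, the probability density (with respect to a reference measure on $\stateset{i}$ associated with the $\sigma$-algebra $\statesetmeasure{i}$ of \cref{def:process}) of the random variable $\Xw[i=i]$ conditioned on the event that the state lies in $\stateset{i}$. This conditioning event has strictly positive probability precisely because $\stateset{i}\neq\emptyset$ --- this is the one place where the hypothesis enters --- so the conditional law is a genuine probability measure on $\stateset{i}$, and integrating its density over the whole range of $\Xw[i=i]$ returns the total mass $1$. If the reference measure carries atoms the integral is read in the usual mixed sum--integral sense, but the normalization argument is unchanged.

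Next I would obtain \cref{eq:dosargi}. By definition $\dosargSi{i}$ is the density of photon states, that is, $\dosargSi{i}\,\dXw[i=i]$ is the (expected) number of states whose value falls into the infinitesimal cell $\intervald{\Xw[i=i]}$. Since $\stateset{i}$ contains $\Ni{i}$ states in total and each of them is distributed according to $\rhoargSi{i}$, this count equals $\Ni{i}\,\rhoargSi{i}\,\dXw[i=i]$; cancelling $\dXw[i=i]$ yields \cref{eq:dosargi}. Then \cref{eq:Ni_given_by_integration} follows at once by integrating \cref{eq:dosargi} over $\Xw[i=i]$ and invoking \cref{eq:rhoS_norm}, because $\int \Ni{i}\,\rhoargSi{i}\,\dXw[i=i] = \Ni{i}\int\rhoargSi{i}\,\dXw[i=i] = \Ni{i}$.

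The step that I expect to need the most care is the first one: making precise, in the measure-theoretic language of \cref{def:process}, that $\stateset{i}\neq\emptyset$ is exactly the condition guaranteeing that the conditioning event has positive probability --- so that the conditional density, and with it its normalization, is not vacuous --- and checking that the dependence on the non-random variables $\XwNR[i=i]$ (suppressed throughout this section) plays no role. Everything after \cref{eq:rhoS_norm} is bookkeeping; and since the statement is literally \cref{process-stc-thm:rhoS_norm} of \cite{Prochazka2022statphys_process_stc}, one may alternatively just cite that reference for the full details.
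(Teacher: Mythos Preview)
Your proposal is correct and in fact gives substantially more detail than the paper does: the paper provides no proof of this theorem at all, simply recording that it is \cref{process-stc-thm:rhoS_norm} in \cite{Prochazka2022statphys_process_stc} and leaving it at that. Your derivation---normalization of the conditional density from its definition, then reading off \cref{eq:dosargi} from the definition of the density of states, and finally integrating to get \cref{eq:Ni_given_by_integration}---is the natural argument and matches how the cited reference sets things up. The only point worth tightening is the inference ``$\stateset{i}\neq\emptyset$ implies the conditioning event has positive probability'': a non-empty measurable set need not have positive measure in general, so what is really being used is that $\stateset{i}\neq\emptyset$ in this framework means $\Ni{i}>0$ (there is at least one state), which is how the paper and \cite{Prochazka2022statphys_process_stc} interpret it.
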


\begin{remark}[Density of states]
Quantity $\dosargSi{i}$ is called \emph{density of states} corresponding to given state space $\stateset{i}$ (it is defined in \cref{process-stc-sec:dos} in \cite{Prochazka2022statphys_process_stc}). 
\end{remark}

\subsection{\label{sec:theory_of_stochastic_processes_IM_process}Stochastic IM process}
The probability functions $\ProbTiarg{i}$ and probability density function $\rhoCiarg{i}$ mentioned in \cref{sec:problem_statement} are given by \cref{process-stc-def:process_sequence_notation_probT_rhoC} in \cite{Prochazka2022statphys_process_stc}.

\begin{assumption}[Possibility of no transition]
\label{as:state_prob_transition}
Probability $\ProbTiarg{i}$ may have values in the interval from 0 to 1, not necessarily in the whole interval. [This is \cref{process-stc-as:state_prob_transition} in \cite{Prochazka2022statphys_process_stc}.]
\end{assumption}

\begin{assumption}[Different state spaces]
\label{as:different_state_spaces}
		State spaces $\stateset{i}$ may or may not be the same for all $i \in \indexset$. [This is  \cref{process-stc-as:different_state_spaces} in \cite{Prochazka2022statphys_process_stc}.] 
\end{assumption}

\begin{assumption}[Independence of realizations]
\label{as:independent_realizations}
Realizations (outcomes) of an experiment are mutually independent. [This is \cref{process-stc-as:independent_realizations} in \cite{Prochazka2022statphys_process_stc}.] 
\end{assumption}

\begin{assumption}[memoryless (Markov) property] 
\label{as:markov_property} % memoryless
\letbeprocess{def:process} 
Let it have Markov property. I.e., roughly speaking, it means that probability of a future state of a system depends on its present state, but not on the past states in which the system was. [This is \cref{process-stc-as:markov_property} in \cite{Prochazka2022statphys_process_stc}.]
\end{assumption}

\newcommand{\refsasprocessSTCsequencefullPOL}{as:state_prob_transition,as:different_state_spaces,as:independent_realizations,as:markov_property}
\begin{definition}[IM process]
\label{def:process_STC}
\letbeprocess{def:process}
Let it satisfy \cref{\refsasprocessSTC}. This process is called independent and memoryless (IM) process. [This is \cref{process-stc-def:process_STC} in \cite{Prochazka2022statphys_process_stc}.]
\end{definition}
\begin{remark}
IM process is called state-transition-change (STC) process in \cite{Prochazka2022statphys_process_stc}. 
\end{remark}

\begin{definition}[IM sequence]
\label{def:process_STC_sequence}
\letbeprocess{def:process_STC} 
		Let the index set $\indexset$ be totally ordered sequence $(0, \dotsc, \ntrans)$ and $\Ni{0} \neq 0$. This process satisfies \cref{\refsasprocessSTCsequence}. [This is \cref{process-stc-def:process_STC_sequence} in \cite{Prochazka2022statphys_process_stc}.]
\end{definition}
\begin{theorem}[Transformation of density of states]
\label{thm:transition_dos}
\letbeprocess{def:process_STC_sequence}
%If \cref{\refasthmtransitiondos} hold then
 It holds
\begin{align}
        \dosargSi{i+1}
		= 
		\int_{\Xw[i=i]}
        \dosargSi{i}
        \ProbTiarg{i}
        \rhoCiarg{i}
		\dXw[i=i] \, . \label{eq:transition_dos}
\end{align}
		[This is \cref{process-stc-thm:transition_dos} in \cite{Prochazka2022statphys_process_stc}.]
\end{theorem}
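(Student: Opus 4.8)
The plan is to introduce the joint ``number density'' of a single photon being in state $\Xw[i=i]$ at surface $\surface_i$ and, after interaction with the $i$-th element, in state $\Xw[i=i+1]$ at surface $\surface_{i+1}$, and then to read off \cref{eq:transition_dos} from its two marginals. Concretely, let $g_i(\Xw[i=i],\Xw[i=i+1])$ denote the density, with respect to both $\dXw[i=i]$ and $\dXw[i=i+1]$, of the number of photons of a given realization that pass through $\surface_i$ with state value near $\Xw[i=i]$ \emph{and} through $\surface_{i+1}$ with state value near $\Xw[i=i+1]$. Since a photon reaching $\surface_{i+1}$ arrived there from a unique state at $\surface_i$, integrating $g_i$ over $\Xw[i=i]$ yields $\dosargSi{i+1}$; so the whole task reduces to identifying $g_i$ and then integrating.

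The crux is the factorization of $g_i$. First I would invoke the three-factor decomposition of a single transition recalled in \cref{sec:problem_statement} (this is eq.~(63) of \cite{Prochazka2022statphys_process_stc}, with $\ProbTiarg{i}$ and $\rhoCiarg{i}$ as in \cref{process-stc-def:process_sequence_notation_probT_rhoC} therein): for a photon found in state $\Xw[i=i]$ at $\surface_i$, the probability of transmission through the $i$-th element is $\ProbTiarg{i}$, and, conditioned on transmission, the density of the outgoing state is $\rhoCiarg{i}$. The memoryless property (\cref{as:markov_property}) is precisely what makes these two factors depend only on the present state $\Xw[i=i]$ and not on states at earlier surfaces, while independence of realizations (\cref{as:independent_realizations}) is what makes ``fraction transmitted'' and ``conditional outgoing density'' well-defined deterministic functions of $\Xw[i=i]$. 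Weighting the incoming number density $\dosargSi{i}$ by these two conditional factors then yields
\begin{align}
g_i(\Xw[i=i],\Xw[i=i+1]) = \dosargSi{i}\,\ProbTiarg{i}\,\rhoCiarg{i}\,. \label{eq:plan_g_factor}
\end{align}
(Dividing through by $\Ni{i}$ and using $\dosargSi{i}=\Ni{i}\rhoargSi{i}$ from \cref{thm:rhoS_norm}, the corresponding normalized joint density is $\rhoargSi{i}\,\ProbTiarg{i}\,\rhoCiarg{i}$, whose integral over both states equals the transmittance $\Ni{i+1}/\Ni{i}$; this is a convenient consistency check but is not needed below.)

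Finally I would integrate \cref{eq:plan_g_factor} over $\Xw[i=i]$. All quantities are nonnegative, and the densities integrate to finite particle numbers by \cref{thm:rhoS_norm}, so Fubini applies and the order of the $\Xw[i=i+1]$-marginalisation and the $\Xw[i=i]$-integration is irrelevant; combining $\int_{\Xw[i=i]} g_i\,\dXw[i=i]=\dosargSi{i+1}$ with \cref{eq:plan_g_factor} gives \cref{eq:transition_dos}. Degenerate cases need only a remark: if $\stateset{i}=\emptyset$ or $\Ni{i}=0$ (hence $\Ni{i+1}=0$), or if $\stateset{i+1}=\emptyset$, both sides vanish trivially. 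I expect the one genuinely substantive step to be \cref{eq:plan_g_factor}: everything else is marginalisation (the law of total probability together with Fubini), whereas \cref{eq:plan_g_factor} is exactly where \cref{as:markov_property,as:independent_realizations} and the three-probability factorisation enter — it is the restatement, in the density-of-states language, of eq.~(63) of \cite{Prochazka2022statphys_process_stc}.
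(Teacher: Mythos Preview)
Your plan is sound, but note that the paper does not actually prove this statement: the bracketed remark ``[This is \cref{process-stc-thm:transition_dos} in \cite{Prochazka2022statphys_process_stc}]'' is the entirety of what the paper offers --- the theorem is imported wholesale from the cited companion paper, with no in-text argument. So there is no ``paper's own proof'' to compare against here.

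That said, your reconstruction is exactly the right shape for how such a result is established in \cite{Prochazka2022statphys_process_stc}: the three-factor decomposition of a single transition (eq.~(63) there) is precisely the content of your \cref{eq:plan_g_factor}, and marginalising the joint density over the incoming state is the law of total probability dressed in density-of-states language. Your identification of where each assumption enters --- Markov for the dependence of $\ProbTi{i}$ and $\rhoCi{i}$ on $\Xw[i=i]$ alone, independence of realizations for the passage from single-photon conditional probabilities to number densities --- is accurate, and the Fubini/degenerate-case remarks are appropriate housekeeping. If anything, you have supplied more detail than the paper does.
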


\begin{remark}
\label{rmk:restrictions}
Stochastic processes discussed in literature typically assume $\ProbTiarg{i} = 1$. Some phenomena, such as absorption of a photon which may or may not be absorbed by an optical element, cannot be described under this assumption. \Cref{as:state_prob_transition} represent generalization of this assumption. 

It is also often assumed that states spaces $\stateset{i}$ are the same for two different indexes. This is also limiting assumptions in many cases. \Cref{as:different_state_spaces} removes this limitation.

The main assumptions under which one can derive \cref{thm:transition_dos} are \cref{as:independent_realizations,as:markov_property}. These two assumptions are often used and discussed in the context of stochastic processes. Not all stochastic processes, of course, are based on these two assumptions. However, relatively large and important set of phenomena corresponds to these two assumptions and they allow to derive the very general \cref{eq:transition_dos}. 
%
%These two assumptions are used often in the context of stochastic processes. However, stochastic processes discussed commonly in the literature typically assume that a system in one state must always transition to another set of states (unless the outgoing states are final). IM process corresponding to transmission of individual photons through sequence of optical elements can take into account that the probability $\ProbTi{i}$ may or may not be equal to 1. This is essential in the case of, e.g., optical elements which absorb light as a photon may or may not pass through the element in dependence on the state of the input photon.
%
%STC process can be denoted also as independent and memoryless (IM) process which reflects its main two properties (assumptions): independence of outcomes of an experiment (independence of transmissions of photons through sequence of optical elements in our case) and memoryless (Markov) property. These two assumptions are used often in the context of stochastic processes. However, stochastic processes discussed commonly in the literature typically assume that a system in one state must always transition to another set of states (unless the outgoing states are final). IM process corresponding to transmission of individual photons through sequence of optical elements can take into account that the probability $\ProbTi{i}$ may or may not be equal to 1. This is essential in the case of, e.g., optical elements which absorb light as a photon may or may not pass through the element in dependence on the state of the input photon. 
\end{remark}
\begin{theorem}
\letbeprocess{def:process_STC_sequence} 
It holds ($N_0 \neq 0$)
\begin{align}
		&\frac{N_{i+1}}{N_0} \rhoargSi{i+1}
		= %\nonumber  \\
		%&\qquad\qquad
		\int_{\Xw[i=i]}
        \frac{N_i}{N_0}  \rhoargSi{i} 
		\ProbTiarg{i}
		\rhoCiarg{i}
		\dXw[i=i]   \, .
\label{eq:dos_ip_over_N0_case_general} 
\end{align}
for $i \in (0,...,\ntrans-1)$.
\end{theorem}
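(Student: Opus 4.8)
The plan is to read this as an immediate corollary of \cref{thm:transition_dos} together with \cref{thm:rhoS_norm}. First I would invoke \cref{eq:transition_dos}, which is available because $\stochproc$ is an IM sequence in the sense of \cref{def:process_STC_sequence}; it expresses $\dosargSi{i+1}$ as the integral over $\Xw[i=i]$ of $\dosargSi{i}\,\ProbTiarg{i}\,\rhoCiarg{i}$. Next I would eliminate the densities of states on both sides using \cref{eq:dosargi} of \cref{thm:rhoS_norm}, i.e.\ $\dosargSi{i}=\Ni{i}\,\rhoargSi{i}$ and likewise $\dosargSi{i+1}=\Ni{i+1}\,\rhoargSi{i+1}$. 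Since $\Ni{i+1}$ and $\Ni{i}$ are constants with respect to the integration variable $\Xw[i=i]$, they factor out of (respectively into) the integral without difficulty.

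Finally I would divide the resulting identity by $N_0$, which is legitimate because $\Ni{0}\neq 0$ is part of the hypotheses of \cref{def:process_STC_sequence}; this yields exactly \cref{eq:dos_ip_over_N0_case_general} for each $i\in(0,\dotsc,\ntrans-1)$. No genuinely new estimate or construction is needed — the argument is purely substitution followed by rescaling by the constant $N_0$ — so I do not anticipate any real obstacle; the main thing to keep straight is the consistent application of \cref{eq:dosargi} at both indices $i$ and $i+1$ and the identification of $\Ni{i}$ with the photon number through the $i$-th control surface, so that the factor $N_i/N_0$ carries the intended meaning.

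The only point deserving a remark is that \cref{eq:dosargi} was stated under the proviso $\stateset{i}\neq\emptyset$. In the degenerate case $\stateset{i}=\emptyset$ (equivalently $N_i=0$) one has $\dosargSi{i}\equiv 0$, so the right-hand side of \cref{eq:transition_dos} vanishes, hence $N_{i+1}=0$ and both sides of \cref{eq:dos_ip_over_N0_case_general} are zero; the identity therefore holds trivially, and one may assume the relevant state spaces nonempty without loss of generality when carrying out the substitution above.
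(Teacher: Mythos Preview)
Your proposal is correct and matches the paper's own proof: invoke \cref{eq:transition_dos}, use \cref{eq:dosargi} to rewrite $\dosargSi{i}$ and $\dosargSi{i+1}$ in terms of $\Ni{i}\rhoargSi{i}$ and $\Ni{i+1}\rhoargSi{i+1}$, and divide through by $N_0$. Your added remark about the degenerate case $\stateset{i}=\emptyset$ is a nice bit of extra care that the paper does not spell out.
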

\begin{proof}
\Cref{eq:transition_dos} divided by $N_0$ and \cref{eq:dosargi} imply \cref{eq:dos_ip_over_N0_case_general}.
\end{proof}

\begin{theorem}
\letbeprocess{def:process_STC_sequence} 
It holds ($N_0 \neq 0$)
\begin{align}
		\frac{N_{i+1}}{N_{0}} &=  \int_{\Xw[i=i]}
		%\frac{\dosargipol{i-1}}{N_0}
		\frac{N_{i}}{N_0}  \rhoargSi{i}
        %\ProbTi{}(\polarization[i=\text{in}]\!=\!\polarization[i=i], \rotangle\!=\!\rotangle[i=i]) 
		\ProbTiarg{i}
		\dXw[i=i]  \label{eq:N_ip_over_N0_case_general}
\end{align}
for $i \in (0,...,\ntrans-1)$.
\end{theorem}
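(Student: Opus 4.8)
The plan is to derive \cref{eq:N_ip_over_N0_case_general} from the immediately preceding result, \cref{eq:dos_ip_over_N0_case_general}, by integrating out the outgoing‑state variable $\Xw[i=i+1]$. First I would integrate both sides of \cref{eq:dos_ip_over_N0_case_general} with respect to $\Xw[i=i+1]$ over $\stateset{i+1}$. On the left‑hand side the prefactor $\frac{N_{i+1}}{N_0}$ is independent of $\Xw[i=i+1]$, and by the normalization \cref{eq:rhoS_norm} of \cref{thm:rhoS_norm} one has $\int_{\Xw[i=i+1]} \rhoargSi{i+1}\,\dXw[i=i+1] = 1$, so the left‑hand side collapses to $\frac{N_{i+1}}{N_0}$.

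Second, on the right‑hand side I would exchange the order of the two integrations (over $\Xw[i=i+1]$ and over $\Xw[i=i]$); this is legitimate because the integrand is a product of a density of states, a probability, and a probability density, all non‑negative, so Tonelli's theorem applies. After the swap the inner integral becomes $\int_{\Xw[i=i+1]} \rhoCiarg{i}\,\dXw[i=i+1]$, and since $\rhoCi{i}$ is, for each fixed incoming state, a probability density function of the outgoing photon state (see its definition \cref{process-stc-def:process_sequence_notation_probT_rhoC} in \cite{Prochazka2022statphys_process_stc} and item 3 in \cref{sec:problem_statement}), this inner integral equals $1$. What survives is exactly $\int_{\Xw[i=i]} \frac{N_i}{N_0}\,\rhoargSi{i}\,\ProbTiarg{i}\,\dXw[i=i]$, i.e.\ the right‑hand side of \cref{eq:N_ip_over_N0_case_general}.

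The only genuine subtlety, and the step I expect to be the main obstacle, is invoking the normalization $\int_{\Xw[i=i+1]} \rhoCiarg{i}\,\dXw[i=i+1] = 1$ cleanly: one must make sure it is the appropriate normalization on the region where the transition occurs and that the degenerate cases ($\stateset{i+1}=\emptyset$, or $N_{i+1}=0$, or $N_i=0$ on part of the domain) are handled consistently — in those situations the corresponding contributions vanish on both sides, so the identity persists. Everything else (the appeal to Tonelli and the use of \cref{eq:rhoS_norm}) is routine.

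As an alternative route I could start directly from \cref{eq:transition_dos} of \cref{thm:transition_dos}: integrating it over $\Xw[i=i+1]$ and using \cref{eq:Ni_given_by_integration} produces $\Ni{i+1}$ on the left; on the right, using the same normalization of $\rhoCi{i}$ followed by \cref{eq:dosargi} to write $\dosargSi{i}=\Ni{i}\,\rhoargSi{i}$, and finally dividing through by $N_0$, reproduces \cref{eq:N_ip_over_N0_case_general}. Both derivations are one‑liners; the point worth making explicit in the written proof is simply which normalization identity is used at each step.
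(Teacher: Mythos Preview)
Your argument is correct and is essentially the same as the paper's, which simply cites the corresponding identity \cref{process-stc-eq:transition_N_over_N0} from \cite{Prochazka2022statphys_process_stc} together with \cref{eq:dosargi}; your derivation (integrate \cref{eq:dos_ip_over_N0_case_general} or \cref{eq:transition_dos} over $\Xw[i=i+1]$ and use the normalizations of $\rhoSi{i+1}$ and $\rhoCi{i}$) is exactly what lies behind that cited equation.
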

\begin{proof}
\Cref{process-stc-eq:transition_N_over_N0} in \cite{Prochazka2022statphys_process_stc} and \cref{eq:dosargi} imply \cref{eq:N_ip_over_N0_case_general}.
\end{proof}

\begin{remark}
\label{rmk:system_of_equations}
Let us assume that relative numbers of particles $\frac{N_{i+1}}{N_{0}}$ are measured as functions of some non-random variables (see, e.g., \cref{sec:measurement}). In this case \cref{eq:N_ip_over_N0_case_general,eq:dos_ip_over_N0_case_general} represent system of integral equations. It may or may not be possible to determine uniquelly the unknown functions $\rhoargSi{i}$, $\ProbTiarg{i}$ and $\rhoCiarg{i}$. It depends on dependence of $\frac{N_{i+1}}{N_{0}}$ on the non-random variables and involved assumptions used for discribtion of given system. If some other quantities than $\frac{N_{i+1}}{N_{0}}$ are measured then similar set of equations may be derived. The more experimental information is available, the better.
\end{remark}

\subsection{\label{sec:theory_of_stochastic_processes_adventages}Advantages of descriptions based on stochastic process like IM process}
Analysis of data based on IM process has several advantages (other stochastic processes have very similar or the same advantages):
\begin{enumerate}
\item{It allows to take into account and introduce only \emph{suitable random and non-random variables} characterizing given phenomenon. }
\item{It \emph{unifies} description of many phenomena from different fields of physics with the help of the theory of stochastic processes (the theory of probability), see \cite{Prochazka2022statphys_process_stc} in the case of IM process.}
\item{Analysis of data can be separated into two stages (see sect.~4.5 in \cite{Prochazka2022statphys_process_stc}). In the first stage one can try to determine the probability (density) functions on the basis of experimental data. In the second stage one can go into greater detail and try to explain (interpret) the probabilities in terms of some microscopic processes or some underlying functions (such as the Hamiltonian of the system, complex probability amplitude, wave function, etc.). The first stage does not require to introduce the underlying functions or to know various parameters of the microscopic processes (this information is often not a priori known and must be first determined on the basis of experimental data). This kind of additional more detailed knowledge about given system can be studied in the second stage. This separation of data analysis into the two stages in many cases allows to study and test various assumptions more effectively then trying to determine too many characteristics of given phenomena at once.}
\item{One can leverage terminology, techniques and results already known in the theory of stochastic processes (numerous stochastic processes are successfully used in many fields of research).}
\end{enumerate}

Using IM process, or other suitable stochastic process formulated with the help of the theory of stochastic processes, it is possible to overcome the difficulties and limitations existing in the contemporary statistical methods used in optics, see \cref{sec:contemprorary_theories_summary}.

\begin{remark}[Mendelian genetics]
\label{rmk:genetics}
As far as we know, similar separation of data analysis has not been systematically done in the context of polarization of light and optics in general. The separation has already helped enormously, e.g., in the field of genetics (biophysics) pioneered by Mendel in 1865 \cite{Mendel1865}. He performed many plant hybridization experiments. He was able to explain variability of observed properties of organisms in different generations by introducing dominant and recessive traits (characteristics). By introducing additional assumptions how these traits are inherited he was able to calculate probabilities of occurrences of organisms of given properties and in given generation (i.e., transition). The assumptions are now called laws of Mendelian genetics. The calculated probabilities agreed with the observed (measured) numbers. His analysis of data corresponded to the first stage of data analysis (i.e., determination of the probabilities without trying to explain them in greater detail). Genetic research is in many cases nowadays well in the second stage where determined dominant and recessive abstract traits are identified with real biophysical structures in an organism (a gene consisting of two alleles) and related processes are further studied in greater detail. The probabilistic model formulated by Mendel \cite{Mendel1865} allowed him to easily calculate the probabilities needed for comparison to data (formulation of the model itself and realization of all the required experiments, however, required surely far more effort and time). 

The law of independent assortment in Mendelian genetics is closely related to assumption of independence of outcomes of an experiment, and memoryless (Markov) property is assumed implicitly. Mendelian genetics is essentially described using IM process, even if it is not explicitly mentioned. Mendel used very special case of IM process as neither random nor non-random variable which would be necessary to determine from data was introduced in his probabilistic model. One can find other examples of IM processes implicitly used in literature. These cases correspond to relatively simple cases. Techniques developed for general IM process can help in some mathematically more complicated cases with the determination of the probabilities (the first stage) which is in general delicate task, see \cite{Prochazka2022statphys_process_stc} where some more complicated cases are mentioned, too. Explanation of the determined probabilities (the second stage) can be often studied later and separately, it can be also very demanding. The example from genetics (biophysics) clearly shows that some phenomena can be hardly understood if the corresponding analysis of data is not separated into the two stages and each of them studied separately (or quasi-separately). 
\end{remark}

We will use the strategy with separation of data analysis (systematically introduced in \cite{Prochazka2022statphys_process_stc}) into the two stages in the following in the context of optics to describe transmission of light through various optical elements. In the first stage of the analysis (description), it is not necessary to know and understand the microscopic structures of the optical elements or the detailed structure of photon.

\section{\label{sec:model}IM process and polarization of light}

Interaction of photons with various optical elements may be considered stochastic process. Therefore, one may ask how the theory of stochastic processes can help to describe this kind of phenomena. Discrete-index IM stochastic process, see \cref{sec:theory_of_stochastic_processes_IM_process}, is suitable for this purpose. Formulae describing transmission of light through sequence of polarization sensitive elements are derived in this section. %\cref{sec:theory_of_stochastic_processes_IM_process_polarization}.

%To determine the probability (density) functions $\rhoSi{i}$, $\ProbTi{i}$ and $\rhoCi{i}$ characterizing transmission of individual photons through sequence of polarization sensitive elements on the basis of experimental data with the help of

%\subsection{\label{sec:theory_of_stochastic_processes_IM_process_polarization}IM process in context of polarization of light}
Measured (relative) numbers of transmitted photons $N_{i}(\vec{\rotangle})/N_{0}$ characterize a transmission of light through sequence of optical elements, but they do not explain the phenomenon. If light beam consists of photons then we may ask how a single photon interacts with the optical element, what the probability (density) functions $\rhoSi{i}$, $\ProbTi{i}$ and $\rhoCi{i}$ characterizing the transmission are. 

%In the following we will use the results obtained in \cite{Prochazka2022statphys_process_stc} to define stochastic process suitable for analysis of transmission of photons through a sequence of polarization sensitive devices. 

We will derive main formulae corresponding to 3 different processes (cases) based on slightly different sets of assumptions. In \cref{sec:model_case_1} one can find formulae corresponding to sequence of quite general polarization sensitive elements (case 1). \Cref{sec:model_case_2} contains formulae valid for more special case under assumption that $\rhoCi{i}$ does not depend on random variables $\Xw[i=i]$ (case 2). In \cref{sec:model_case_3} one can find formulae valid in the case of even more special case assuming further that the elements in the sequence are identical (case 3).

\subsection{\label{sec:model_case_1}Case 1 - sequence of polarization sensitive elements}
%\newcommand{\refsasprocessPOLextra}{as:Xw,as:state_space,as:energy_pol,as:identical_elements,as:parameterizations_pol}
%\newcommand{\refsasprocessPOLextraimplied}{as:ProbT_identity_pol,as:rhoC_identity_pol,as:rhoC_independent_of_polarization_in_pol}
%%\newcommand{\refsasprocessPOL}{\refsasprocessSTCsequencecasetwo,\refsasprocessPOLextra}
%\begin{definition}[Stochastic process - sequence of polarization sensitive devices] 
%\label{def:process_polarizers}
%Let $\stochproc$ be stochastic process given by \cref{process-stc-def:process_STC_sequence_case_two} in \cite{Prochazka2022statphys_process_stc} which satisfies \cref{\refsasprocessSTCsequencecasetwofull} in \cite{Prochazka2022statphys_process_stc}. Let it satisfy also \cref{\refsasprocessPOLextra}. It implies that \cref{\refsasprocessPOLextraimplied} hold (see \cref{rmk:two_as_rho_C_equivalence,rmk:as_identical_elements}).
%\end{definition}
%\begin{definition}

\begin{definition}[Photon polarization angle]
\label{def:polarization}
Some types of polarization sensitive elements (such as linear polarizer or Faraday rotator) are sensitive to photon property called polarization and having meaning of an angle $\polarization[i=i]$. It characterizes the state of a photon when it passed through surface $\surface_i$.
\end{definition}
\begin{remark}
As to the \cref{def:polarization}, the photon polarization angle can specify the direction of a vector quantity characterizing the photon (spin, orientation vector, ...) projected onto the plane perpendicular to the direction of the photon velocity; or projection of direction of some photon oscillations onto the plane. The particular physical meaning of this variable is not important in the presented paper. In the following it is necessary to only know that it has meaning of an angle. 
\end{remark}
\begin{assumption}[Variables]
\label{as:Xw}
Let random variables $\Xw[i=i]$ characterizing $i$-th state of system (i.e., transmission of a photon through surface $\surface_i$) for given index $i \in \indexset$ where index set $\indexset$ is sequence $(0, \dotsc, \ntrans)$ be
\begin{align}
		\Xw[i=i] &= (\polarization[i=i])    \label{eq:Xw_def} ,
\end{align}
i.e., the photon polarization angles are random variables. Let non-random variables $\XwNR[i=i]$ characterizing $i$-th state of system be
\begin{align}
		\XwNR[i=i] &= (\rotangle[i=0], \dotsc, \rotangle[i=i-1] ) \, , \label{eq:XwNR_def}
\end{align}
i.e., the rotation angles of the polarization sensitive elements are non-random variables (parameters). $\ntrans+1$ random variables and $\ntrans$ non-random variables are used in total for description of the whole system.
\end{assumption}

\begin{assumption}[State spaces]
\label{as:state_space}
Let state space $\stateset{i}$ be a set of states represented by random variables $\Xw[i=i]$ and non-random variables $\XwNR[i=i]$. I.e., state space $\stateset{i}$ contains polarization states of photons when they pass through surface $\surface_i$ given fixed rotation angles of optical elements preceding $i$-th optical element.
\end{assumption}
\begin{remark}
\Cref{as:state_space} implies that the number of states of a system which were in a state in given state space $\stateset{i}$ is the same as the number of photons which passed through surface $\surface_i$.
\end{remark}

\begin{definition}
Functions introduced in \cref{process-stc-def:process_notation_dos} in \cite{Prochazka2022statphys_process_stc} can be written also as (see \cref{as:Xw})
\begin{alignat}{3}
		\dosargipol{i}            &= \dosargstateSi{i}     && \quad\quad\quad i=0,\dotsc,\ntrans         \label{eq:notation_dosargipolia}\\
		%N_{i}(\vec{\rotangle}   ) &= \NargSi{i}{i-1}       && \quad\quad\quad i=0,\dotsc,\ntrans         \label{eq:notation_Ni} \\
		N_{i}(\vec{\rotangle})     &= \Ni{i}                && \quad\quad\quad i=0,\dotsc,\ntrans         \label{eq:notation_Ni} \\
		%\Ni{i}                    &= \NargSi{i}{i-1}       && \quad\quad\quad i=0,\dotsc,\ntrans        \label{eq:Ni}        \\
		%N_{0}                     &= \NargSi{0}{-1}        &&                                      \label{eq:notation_N0}   \\
		\rhoSargipol{i}            &= \rhoargSi{i}          && \quad\quad\quad i=0,\dotsc,\ntrans         \label{eq:notation_rhoSargipoli}
		%\ProbTargipol             &= \ProbTiarg{i}         && \quad\quad\quad i=0,\dotsc,\ntrans-1       \label{eq:notation_ProbTiarg} \\
		%\rhoCargipol              &= \rhoCiarg{i}          && \quad\quad\quad i=0,\dotsc,\ntrans-1       \label{eq:notation_rhoCargpol} \\
		%\rhoCindepargipol           &= \rhoCiarg{i}          && \quad\quad\quad i=0,\dotsc,\ntrans-1       \label{eq:notation_rhoCindepargipol} \\
		%\rhoCindepbaseargipol       &= \rhoCindepbasearg{i}    && \quad\quad\quad i=0,\dotsc,\ntrans-1  \, . \label{eq:notation_rhoCindepbaseargpol}
\end{alignat}
where the notation introduced in \cite{Prochazka2022statphys_process_stc} is on the right-hand sides of the equations. Dependence of the functions on left-hand side on non-random variables (i.e., on the rotation angles of the polarization sensitive elements) may or may not be written explicitly. 

Similarly, the probability of transition $\ProbTiarg{i}$ and the probability density function $\rhoCiarg{i}$ introduced by \cref{process-stc-def:process_sequence_notation_probT_rhoC} in \cite{Prochazka2022statphys_process_stc} can be written as
\begin{alignat}{3}
		\ProbTiargipol  &= \ProbTiarg{i}   && \quad i=0,\dotsc,\ntrans-1      \label{eq:ProbTiarg} \\
		\rhoCiargipol   &= \rhoCiarg{i}    && \quad i=0,\dotsc,\ntrans-1      \label{eq:rhoCiarg} \, .
\end{alignat}
\end{definition}

\begin{remark} 
In the case of transmission of light through sequence of polarization sensitive elements the transmission of a photon through $i$-th element is described by 3 functions $\rhoS[i=i]$, $\ProbTi{i}$ and $\rhoCi{i}$ corresponding to the 3 probabilistic effects mentioned in \cref{sec:problem_statement}:
\begin{enumerate}
\item{$\rhoSargipol{i}$ is probability density function characterizing distribution of polarization angles $\polarization[i=i]$ of incoming photons before an interaction with $i$-th element (i.e., polarization states of photons which passed through surface $\surface_i$). This function is normalized to 1 when integrated over all the possible polarization states $\polarization[i=i]$. It may be taken as dependent on the rotation angles $\vec{\rotangle}$ of the axes of all the elements in the sequence preceding the $i$-th element. By multiplying it by number of photons $N_{i}(\vec{\rotangle})$ one obtains density of photon polarization states $\dosargipol{i}$.}  
\item{$\ProbTiargipol$ is conditional probability that photon is transmitted through $i$-th element being rotated by angle $\rotangle[i=i]$ given the input photon polarization is $\polarization[i=i]$. Values of this probability function are in the interval from 0 to 1 (not necessarily in the whole interval).}
\item{$\rhoCiargipol$ is conditional probability density function characterizing change of input photon polarization $\polarization[i=i]$ to output photon polarization $\polarization[i=i+1]$ after the photon is transmitted through $i$-th element rotated by $\rotangle[i=i]$ given that the photon passed through the element. This function is normalized to 1 when integrated over all the possible values of the outgoing polarization states $\polarization[i=i+1]$ (independently on the value of the incoming photon polarization $\polarization[i=i]$ and the rotation $\rotangle[i=i]$).}
\end{enumerate}
The function $\rhoS[i=i]$ characterizes property of the photon beam before an interaction with the $i$-th element, and the functions $\ProbTi{i}$ and $\rhoCi{i}$ characterize interaction of a photon with the $i$-th element. 
\end{remark}
%
%
%\begin{definition}
%Let the probability density functions $\rhoargSi{i}$ corresponding to the surface $\surface_i$ be denoted as $\rhoSargipol{i}$. It depends on photon polarization $\polarization[i=i]$ when the photon passed through surface $\surface_i$. I.e., it holds
%\begin{align}
%		\rhoSargipol{i}            &= \rhoargSi{i}         & i&=0,...,M  \, . %\label{eq:notation_rhoSargipoli}
%\end{align}
%\end{definition}

\newcommand{\refsasprocessPOLgeneralextra}{as:Xw,as:state_space}

\begin{definition}[Stochastic process - case 1] 
\label{def:process_polarizers_case_1}
Let $\stochproc$ be stochastic IM process given by \cref{def:process_STC_sequence} which satisfies \cref{\refsasprocessSTCsequencefullPOL}. 
Let it satisfy also \cref{\refsasprocessPOLgeneralextra}. %It implies that \cref{\refsasprocessPOLextraimplied} hold (see \cref{rmk:two_as_rho_C_equivalence,rmk:as_identical_elements}).
\end{definition}
\begin{remark}
\label{rmk:main_assumptions}
Stochastic process given by \cref{def:process_polarizers_case_1} is IM process, i.e., it is based on the two main \cref{as:independent_realizations,as:markov_property}. \Cref{as:independent_realizations} concerning independence of outcomes of an experiment means independence of transmissions of individual photons through a sequence of optical elements. \Cref{as:markov_property} about memoryless (Markov) property means that transmission of a photon in $i$-th state through $(i+1)$-th optical elements depends on the $i$-th state and not on any of the states in which the photon was before.
\end{remark}

Several formulae derived in \cref{sec:theory_of_stochastic_processes_IM_process} or \cite{Prochazka2022statphys_process_stc} will be needed for analysis of data corresponding to transmission of light through a sequence of $M$ polarization sensitive devices using stochastic processes given by \cref{def:process_polarizers_case_1}. The formulae can be rewritten using the notation introduced above. 

\begin{theorem}
\letbeprocess{def:process_polarizers_case_1} 
It holds
\begin{alignat}{3} 
		1 \geq& \frac{N_{1}(\vec{\rotangle})}{N_{0}} \geq \frac{N_{2}(\vec{\rotangle})}{N_{0}} \geq ... \geq \frac{N_{\ntrans}(\vec{\rotangle})}{N_{0}} && \geq 0  \, .  \label{eq:Ni_to_N0_interval_chain_simplified_pol} 
\end{alignat}
\end{theorem}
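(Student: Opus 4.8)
The plan is to prove the chain by a short induction on $i$, using the recursion \cref{eq:N_ip_over_N0_case_general} together with two ingredients already at hand: the transition probability satisfies $0\le\ProbTiarg{i}\le 1$ (\cref{as:state_prob_transition}), and the state density $\rhoargSi{i}$ is a non-negative function normalized to unity, $\int_{\Xw[i=i]}\rhoargSi{i}\,\dXw[i=i]=1$ (\cref{thm:rhoS_norm}; note the process of \cref{def:process_polarizers_case_1} is a special case of that of \cref{def:process_STC_sequence}, so all conclusions stated there apply). The key structural observation is that, at fixed $\vec{\rotangle}$, the number $N_i/N_0$ does not depend on the integration variable $\Xw[i=i]$ and so factors out of the integral.

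First I would establish the \emph{upper bounds} $N_{i+1}/N_0\le N_i/N_0$ for every $i\in(0,\dotsc,\ntrans-1)$. Starting from \cref{eq:N_ip_over_N0_case_general} and using $\ProbTiarg{i}\le 1$ with $\rhoargSi{i}\ge 0$,
\begin{align}
  \frac{N_{i+1}}{N_0}
  &= \int_{\Xw[i=i]} \frac{N_i}{N_0}\,\rhoargSi{i}\,\ProbTiarg{i}\,\dXw[i=i] \nonumber \\
  &\le \frac{N_i}{N_0}\int_{\Xw[i=i]} \rhoargSi{i}\,\dXw[i=i]
  = \frac{N_i}{N_0}\, . \nonumber
\end{align}
Taking $i=0$ and using $N_0/N_0=1$ anchors the chain with $N_1/N_0\le 1$; concatenating the inequalities for $i=0,1,\dotsc,\ntrans-1$ then gives $1\ge N_1/N_0\ge\dotsb\ge N_{\ntrans}/N_0$. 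The \emph{lower bound} $N_{\ntrans}/N_0\ge 0$ follows because every $N_i\ge 0$ (e.g.\ from \cref{eq:Ni_given_by_integration,eq:dosargi}, the integrand of \cref{eq:Ni_given_by_integration} being non-negative), or equivalently from the displayed recursion, whose integrand is a product of non-negative factors.

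The only point that needs care — and what I expect to be the main obstacle, such as it is — is the degenerate case in which $N_i=0$ for some $i$ (equivalently $\stateset{i}=\emptyset$), where $\rhoargSi{i}$ is undefined and \cref{thm:rhoS_norm} does not apply. This I would dispose of separately: if $N_i=0$ then the density of states $\dosargSi{i}$ vanishes (it is non-negative and integrates to $N_i=0$), so by \cref{eq:transition_dos} $\dosargSi{i+1}$ vanishes and hence $N_{i+1}=0$ as well, so that $N_{i+1}/N_0\le 0\le N_i/N_0$ holds trivially and all later counts vanish; the argument of the preceding paragraph then applies verbatim on the initial segment where all $N_i>0$. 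Combining the upper bounds, the lower bound, and this degenerate-case remark yields \cref{eq:Ni_to_N0_interval_chain_simplified_pol}.
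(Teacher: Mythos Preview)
Your argument is correct. The paper's own proof is not a self-contained derivation at all: it simply invokes the general chain inequality already established in \cite{Prochazka2022statphys_process_stc} (their eq.~labelled \texttt{Ni\_to\_N0\_interval\_chain}) and then specializes via \cref{as:Xw} to the polarization variables. By contrast, you reprove the inequality from scratch, using only ingredients available inside the present paper: the recursion \cref{eq:N_ip_over_N0_case_general}, the bound $0\le\ProbTiarg{i}\le1$, and the normalization \cref{eq:rhoS_norm}. Your route is more elementary and makes the statement independent of the companion paper; the paper's route is shorter but opaque to a reader without \cite{Prochazka2022statphys_process_stc} at hand. Your careful handling of the degenerate case $N_i=0$ is a nice touch that the citation-based proof leaves implicit.
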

\begin{proof}
It follows from \cref{process-stc-eq:Ni_to_N0_interval_chain} in \cite{Prochazka2022statphys_process_stc} and \cref{as:Xw}.
\end{proof}
\begin{theorem}
\letbeprocess{def:process_polarizers_case_1} 
It holds
\begin{alignat}{3}
		\int_{\polarization[i=i]} 
		\rhoSargipol{i}  
		\text{d} 
		\polarization[i=i]      
		&&= 1 & \qquad i=0,...,M  \label{eq:rhoS_norm_pol} \, .
\end{alignat}
\end{theorem}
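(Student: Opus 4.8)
The plan is to obtain \cref{eq:rhoS_norm_pol} as an immediate specialization of the general normalization \cref{eq:rhoS_norm} in \cref{thm:rhoS_norm}, the only work being a translation of notation.

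First I would note that a process satisfying \cref{def:process_polarizers_case_1} is in particular an IM sequence in the sense of \cref{def:process_STC_sequence}, so \cref{thm:rhoS_norm} applies to it: for every $i \in \indexset = (0,\dotsc,\ntrans)$ with $\stateset{i} \neq \emptyset$ one has $\int_{\Xw[i=i]} \rhoargSi{i}\, \dXw[i=i] = 1$. Next I would use \cref{as:Xw}, which fixes $\Xw[i=i] = (\polarization[i=i])$; hence the abstract integral $\int_{\Xw[i=i]} (\cdot)\, \dXw[i=i]$ is exactly the single integral $\int_{\polarization[i=i]} (\cdot)\, \text{d}\polarization[i=i]$ over the admissible polarization angles. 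Finally I would invoke the notational identity \cref{eq:notation_rhoSargipoli}, namely $\rhoSargipol{i} = \rhoargSi{i}$, to recognize the integrand in \cref{eq:rhoS_norm_pol} as the density appearing in \cref{thm:rhoS_norm}. Substituting these two identities into \cref{eq:rhoS_norm} yields \cref{eq:rhoS_norm_pol} for each $i = 0,\dotsc,\ntrans$.

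The only point that needs a word of care is the hypothesis $\stateset{i} \neq \emptyset$ carried by \cref{thm:rhoS_norm}. By \cref{as:state_space} and the remark following it, the states of $\stateset{i}$ are in bijection with the photons that passed through surface $\surface_i$, so $\stateset{0} \neq \emptyset$ already follows from the assumption $\Ni{0} \neq 0$ built into \cref{def:process_STC_sequence}; for the remaining indices the statement is understood on the set of rotations $\vec{\rotangle}$ for which $\rhoargSi{i}$ is defined as a probability density (equivalently, for which $N_i(\vec{\rotangle}) \neq 0$). I do not anticipate any real obstacle here: the mathematical content is entirely contained in \cref{thm:rhoS_norm}, and the proof is a change of notation of exactly the kind already used for the preceding theorems in this section (``It follows from \dots\ and \cref{as:Xw}'').
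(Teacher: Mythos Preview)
Your proposal is correct and follows essentially the same route as the paper: the paper's proof simply cites the general normalization \cref{eq:rhoS_norm} (equivalently the version in \cite{Prochazka2022statphys_process_stc}) together with the notational identity \cref{eq:notation_rhoSargipoli}, and your argument is precisely this specialization spelled out in more detail, including the explicit use of \cref{as:Xw} for the integration variable and the check of $\stateset{i}\neq\emptyset$.
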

\begin{proof}
This normalization condition can be derived using \cref{process-stc-eq:rhoS_norm} in \cite{Prochazka2022statphys_process_stc} and \cref{eq:notation_rhoSargipoli}. 
\end{proof}
\begin{theorem}
\letbeprocess{def:process_polarizers_case_1} 
It holds
\begin{alignat}{3}
		\int_{\polarization[i=i+1]} \rhoCiargipol  \text{d} \polarization[i=i+1]  &&= 1 &  \, .  \label{eq:rhoargstateC_norm_case_1}
\end{alignat}
\end{theorem}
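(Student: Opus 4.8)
The plan is to deduce \cref{eq:rhoargstateC_norm_case_1} from the general normalization of the conditional probability density $\rho_{C,i}$ established in \cite{Prochazka2022statphys_process_stc}, in exactly the way the preceding theorem deduced \cref{eq:rhoS_norm_pol} from \cref{process-stc-eq:rhoS_norm}.

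First I would recall, via \cref{eq:rhoCiarg}, that $\rhoCiargipol$ is merely a rewriting of $\rhoCiarg{i}$, the density introduced in \cref{process-stc-def:process_sequence_notation_probT_rhoC} of \cite{Prochazka2022statphys_process_stc} as the probability density of the conditional state $\statee{i+1} \mid \statee{i} \cap \stateset{i+1}$ — i.e. the distribution of the outgoing random variable $\Xw[i=i+1]$ conditioned on the incoming state and on the transition to $\stateset{i+1}$ having occurred. Being a conditional probability density over the outgoing state space $\stateset{i+1}$, it integrates to unity, $\int_{\Xw[i=i+1]} \rhoCiarg{i}\, \dXw[i=i+1] = 1$ (whenever $\stateset{i+1}\neq\emptyset$); this is the $\rho_{C}$-analogue of \cref{eq:rhoS_norm} and \cref{process-stc-thm:rhoS_norm}.

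Next I would invoke \cref{as:Xw}, which fixes $\Xw[i=i+1] = (\polarization[i=i+1])$, so that the abstract integration $\int_{\Xw[i=i+1]}\cdots\,\dXw[i=i+1]$ collapses to the one-dimensional integration $\int_{\polarization[i=i+1]}\cdots\,\text{d}\polarization[i=i+1]$ appearing in \cref{eq:rhoargstateC_norm_case_1}. Combining this identification of the domain of integration with the normalization of the previous step yields the asserted identity directly.

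The one point deserving care is that the integration in \cref{eq:rhoargstateC_norm_case_1} must run over the \emph{outgoing} polarization values $\polarization[i=i+1]$ (those of photons having passed through $\surface_{i+1}$), not the incoming ones; this is precisely what makes the normalization hold independently of $\polarization[i=i]$ and $\rotangle[i=i]$, as stressed in the remark following the definition of $\rho_{C,i}$ above. Beyond this bookkeeping I do not expect any genuine obstacle — the statement is a transcription of a standard property of conditional densities into the polarization notation of \cref{def:process_polarizers_case_1}.
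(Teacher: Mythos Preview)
Your proposal is correct and follows essentially the same route as the paper: the paper's proof simply cites the general normalization \cref{process-stc-eq:rhoargstateC_norm} from \cite{Prochazka2022statphys_process_stc} together with \cref{as:Xw}, which is exactly the two-step reduction you describe (general conditional-density normalization plus specialization of $\Xw[i=i+1]$ to $\polarization[i=i+1]$). Your additional remark about integrating over the outgoing variable matches the paper's own remark following the theorem.
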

\begin{proof}
It follows from \cref{process-stc-eq:rhoargstateC_norm} in \cite{Prochazka2022statphys_process_stc} and \cref{as:Xw}.
\end{proof}
\begin{remark}
The normalization condition \refp{eq:rhoargstateC_norm_case_1} holds for arbitrary value of $\polarization[i=i]$.
\end{remark}
\begin{theorem}
\letbeprocess{def:process_polarizers_case_1} 
It holds
\begin{alignat}{3} 
\dosargipol{i} &= N_i(\vec{\rotangle})  \rhoSargipol{i} && \qquad i=0,...,M      \, . \label{eq:def_flux_EventSXw_i_rho_simplified_pol} 
\end{alignat}
\end{theorem}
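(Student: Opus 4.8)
The plan is to read this off directly from \cref{thm:rhoS_norm}, which already supplies the general identity $\dosargSi{i} = \Ni{i}\,\rhoargSi{i}$ (\cref{eq:dosargi}), valid for every index $i \in \indexset$ with $\stateset{i} \neq \emptyset$. Since the stochastic process of \cref{def:process_polarizers_case_1} is in particular an IM sequence in the sense of \cref{def:process_STC_sequence}, \cref{thm:rhoS_norm} is applicable to it, so \cref{eq:dosargi} holds for the indices $i = 0, \dots, M$.

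First I would invoke \cref{eq:dosargi} for the index $i$ in question, obtaining $\dosargSi{i} = \Ni{i}\,\rhoargSi{i}$. Then I would substitute the notational dictionary fixed by \cref{as:Xw}: the identifications $\dosargipol{i} = \dosargSi{i}$ (\cref{eq:notation_dosargipolia}), $N_i(\vec{\rotangle}) = \Ni{i}$ (\cref{eq:notation_Ni}), and $\rhoSargipol{i} = \rhoargSi{i}$ (\cref{eq:notation_rhoSargipoli}). Replacing each symbol on the right-hand side of \cref{eq:dosargi} by its polarization-notation counterpart yields precisely \cref{eq:def_flux_EventSXw_i_rho_simplified_pol} for $i = 0, \dots, M$. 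This is exactly the pattern used in the preceding theorems of this subsection, each obtained by combining a result of \cite{Prochazka2022statphys_process_stc} with the dictionary of \cref{as:Xw}.

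The only point that deserves a remark is the hypothesis $\stateset{i} \neq \emptyset$ appearing in \cref{thm:rhoS_norm}: if some state space $\stateset{i}$ were empty, then no photons would pass through the surface $\surface_i$, so $N_i(\vec{\rotangle}) = 0$ and the density of photon polarization states vanishes identically, making the claimed identity hold trivially; otherwise \cref{eq:dosargi} applies verbatim. I do not anticipate any real obstacle — the statement is a notational restatement of \cref{thm:rhoS_norm} adapted to the variables of \cref{as:Xw}, so the proof is a one-line consequence of \cref{eq:dosargi} together with \cref{eq:notation_dosargipolia,eq:notation_Ni,eq:notation_rhoSargipoli}.
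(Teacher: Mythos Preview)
Your proposal is correct and matches the paper's own proof exactly: the paper simply states that the claim follows from \cref{eq:dosargi} together with the notational identifications \cref{eq:notation_dosargipolia,eq:notation_Ni,eq:notation_rhoSargipoli}. Your additional remark handling the degenerate case $\stateset{i} = \emptyset$ is a harmless refinement that the paper omits.
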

\begin{proof}
It follows from \cref{eq:dosargi} and \cref{eq:notation_dosargipolia,eq:notation_Ni,eq:notation_rhoSargipoli}.
\end{proof}
\begin{theorem}
\letbeprocess{def:process_polarizers_case_1} 
It holds
\begin{alignat}{3} 
		N_i(\vec{\rotangle})  &= \int_{\polarization[i=i]} \dosargipol{i} \text{d} \polarization[i=i] && \qquad i=0,...,M \, .  \label{eq:Ni_given_by_integration_simplified_pol}
\end{alignat}
\end{theorem}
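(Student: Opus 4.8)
The plan is to reduce this identity to the already-proved general statement \cref{eq:Ni_given_by_integration} of \cref{thm:rhoS_norm} by applying the notational dictionary introduced for the polarization setting. First I would note that the process of \cref{def:process_polarizers_case_1} is, in particular, a stochastic process in the sense of \cref{def:process} (it is an IM sequence), so \cref{thm:rhoS_norm} is available; it gives, for every index $i$ with $\stateset{i}\neq\emptyset$, the relation $\Ni{i}=\int_{\Xw[i=i]}\dosargSi{i}\,\dXw[i=i]$.

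Next I would specialise the variables. By \cref{as:Xw} the $i$-th random variable is $\Xw[i=i]=(\polarization[i=i])$, so integration over $\Xw[i=i]$ is integration over the single variable $\polarization[i=i]$ and $\dXw[i=i]=\text{d}\polarization[i=i]$. Then I would substitute the identifications \cref{eq:notation_dosargipolia} and \cref{eq:notation_Ni}, which rewrite $\dosargSi{i}$ as $\dosargipol{i}$ and $\Ni{i}$ as $N_i(\vec{\rotangle})$; this turns the relation above into exactly \cref{eq:Ni_given_by_integration_simplified_pol} for all $i=0,\dots,M$.

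Finally I would dispose of the degenerate case $\stateset{i}=\emptyset$ not covered by the hypothesis of \cref{thm:rhoS_norm}: by \cref{as:state_space} the number of states of the system lying in $\stateset{i}$ equals the number of photons that passed through $\surface_i$, so $N_i(\vec{\rotangle})=0$ in that case, while the right-hand side is an integral over an empty range and hence also vanishes, so the identity holds trivially there as well. I do not expect any genuine obstacle here; the only points requiring a little care are matching the integration variable via \cref{as:Xw} and handling the empty-state-space edge case, since the substantive content is entirely contained in \cref{thm:rhoS_norm}.
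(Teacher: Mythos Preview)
Your proposal is correct and follows essentially the same route as the paper's own proof, which simply cites \cref{eq:Ni_given_by_integration,eq:notation_dosargipolia,eq:notation_Ni} as implying \cref{eq:Ni_given_by_integration_simplified_pol}. You have merely spelled out the substitution via \cref{as:Xw} and added the (harmless) treatment of the empty-state-space case, neither of which the paper makes explicit.
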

\begin{proof}
\Cref{eq:Ni_given_by_integration,eq:notation_dosargipolia,eq:notation_Ni} imply \cref{eq:Ni_given_by_integration_simplified_pol}. %\Cref{process-stc-eq:dos_normalization} in \cite{Prochazka2022statphys_process_stc}  and 
\end{proof}
\begin{remark}
The density of states $\dosargipol{i}$ essentially defined by \cref{eq:def_flux_EventSXw_i_rho_simplified_pol} represents spectrum of values of polarization angles of photons which passed through surface $\surface_i$; the spectrum is normalized to the number of photons $N_i$ which in total passed through surface $\surface_i$, see \cref{eq:Ni_given_by_integration_simplified_pol}.
\end{remark}

\begin{theorem}
\letbeprocess{def:process_polarizers_case_1} 
It holds ($N_0 \neq 0$)
\begin{alignat}{3}
		\frac{\dosargipol{i}}{N_0} &= \frac{N_i(\vec{\rotangle})}{N_0}  \rhoSargipol{i}  && \qquad i=0,...,M \label{eq:def_rel_flux_EventSXw_i_rho_simplified} 
\end{alignat}
\end{theorem}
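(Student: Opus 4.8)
The plan is immediate: the claimed identity is just the content of the preceding theorem after division by $N_0$. First I would recall \cref{eq:def_flux_EventSXw_i_rho_simplified_pol}, already established, which states that $\dosargipol{i} = N_i(\vec{\rotangle})\,\rhoSargipol{i}$ for every $i=0,\dotsc,M$. Next, I would observe that the hypothesis $N_0\neq 0$ is in force — it is the condition $\Ni{0}\neq 0$ built into \cref{def:process_STC_sequence}, hence also into \cref{def:process_polarizers_case_1} — so dividing both sides of that identity by $N_0$ is legitimate. Doing so yields
\begin{alignat}{3}
\frac{\dosargipol{i}}{N_0} &= \frac{N_i(\vec{\rotangle})}{N_0}\,\rhoSargipol{i} && \qquad i=0,\dotsc,M \, ,
\end{alignat}
which is exactly \cref{eq:def_rel_flux_EventSXw_i_rho_simplified}.

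There is no real obstacle here; the one point that needs checking is the non-vanishing of $N_0$, which is guaranteed by the standing assumptions on the process, so the division causes no difficulty. An equivalent route would be to start from \cref{eq:dosargi} together with the notational identifications \cref{eq:notation_dosargipolia,eq:notation_Ni,eq:notation_rhoSargipoli} and then divide by $N_0$, but routing through the already-proved \cref{eq:def_flux_EventSXw_i_rho_simplified_pol} is the shortest path, and I would present the proof in a single line.
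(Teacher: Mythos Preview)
Your proposal is correct and follows exactly the paper's approach: the paper's proof is the single line ``\Cref{eq:def_flux_EventSXw_i_rho_simplified_pol} can be divided by $N_0$.'' Your additional remark that $N_0\neq 0$ is guaranteed by \cref{def:process_STC_sequence} is a helpful justification the paper leaves implicit.
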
 
\begin{proof}
\Cref{eq:def_flux_EventSXw_i_rho_simplified_pol} can be divided by $N_0$.
\end{proof}
\begin{theorem}
\letbeprocess{def:process_polarizers_case_1} 
It holds ($N_0 \neq 0$)
\begin{alignat}{3}
				\frac{N_i(\vec{\rotangle})}{N_0}  &= \int_{\polarization[i=i]} \frac{\dosargipol{i}}{N_0} \text{d} \polarization[i=i] && \qquad i=0,...,M \, .  \label{eq:Ni_over_N0_given_by_integration_simplifed}
\end{alignat}
\end{theorem}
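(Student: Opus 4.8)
The plan is to obtain this identity by a one-line manipulation of results already established for case~1. By the hypotheses of \cref{def:process_polarizers_case_1} the process is an IM sequence, so $\Ni{0}\neq 0$; hence $N_0\neq 0$ and division by $N_0$ is legitimate, as already assumed in the statement.

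First I would take \cref{eq:Ni_given_by_integration_simplified_pol}, namely $N_i(\vec{\rotangle}) = \int_{\polarization[i=i]} \dosargipol{i}\,\text{d}\polarization[i=i]$, and divide both sides by $N_0$. On the right-hand side the constant factor $1/N_0$ does not depend on the integration variable $\polarization[i=i]$, so it can be pulled inside the integral, which yields exactly \cref{eq:Ni_over_N0_given_by_integration_simplifed} for each $i=0,\dots,M$.

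As a cross-check I would also note the equivalent route starting from \cref{eq:def_rel_flux_EventSXw_i_rho_simplified}, i.e.\ $\dosargipol{i}/N_0 = (N_i(\vec{\rotangle})/N_0)\,\rhoSargipol{i}$: integrating over $\polarization[i=i]$, the factor $N_i(\vec{\rotangle})/N_0$ comes out of the integral and the normalization \cref{eq:rhoS_norm_pol}, $\int_{\polarization[i=i]}\rhoSargipol{i}\,\text{d}\polarization[i=i]=1$, gives the same conclusion.

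There is essentially no obstacle here. The only points requiring care are that $N_0\neq 0$ (guaranteed by the definition of an IM sequence) so the ratios are well defined, and that $\stateset{i}\neq\emptyset$ so the normalization used in the alternative route is available — but both hold in precisely the setting in which the preceding theorems of this subsection were stated, so the argument is a routine division.
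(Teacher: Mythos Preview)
Your proposal is correct and matches the paper's own proof, which simply states that \cref{eq:Ni_given_by_integration_simplified_pol} can be divided by $N_0$. Your additional cross-check via \cref{eq:def_rel_flux_EventSXw_i_rho_simplified} and the normalization \cref{eq:rhoS_norm_pol} is a valid alternative but is not used in the paper.
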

\begin{proof}
\Cref{eq:Ni_given_by_integration_simplified_pol} can be divided by $N_0$.
\end{proof}
\begin{theorem}
\letbeprocess{def:process_polarizers_case_1} 
It holds
\begin{align}
		&\frac{N_{i+1}(\vec{\rotangle})}{N_0}\rhoSargipol{i+1} 
		= \nonumber  \\
		&\qquad\qquad
		\int_{\polarization[i=i]}
        \frac{N_i(\vec{\rotangle})}{N_0}  \rhoSargipol{i} 
        \ProbTiargipol
\rhoCiargipol
		\text{d} \polarization[i=i]   \, .
\label{eq:dos_ip_over_N0_case_1} 
\end{align}
for $i \in (0,...,\ntrans-1)$.
\end{theorem}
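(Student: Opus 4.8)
The plan is to recognize \cref{eq:dos_ip_over_N0_case_1} as nothing more than the polarization‑notation rendering of \cref{eq:dos_ip_over_N0_case_general}, which was already established for every process satisfying \cref{def:process_STC_sequence}. First I would observe that the process of \cref{def:process_polarizers_case_1} is, by construction, such a process: it satisfies \cref{\refsasprocessSTCsequencefullPOL} (hence all the hypotheses of \cref{def:process_STC_sequence}) together with the additional \cref{as:Xw,as:state_space}. Consequently \cref{eq:dos_ip_over_N0_case_general} holds for it for every $i \in (0,\dotsc,\ntrans-1)$.

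Second, I would substitute the notational dictionary fixed earlier. By \cref{as:Xw} the $i$-th random variable is $\Xw[i=i] = (\polarization[i=i])$, so the integration variable and its measure in \cref{eq:dos_ip_over_N0_case_general} become $\polarization[i=i]$ and $\text{d}\polarization[i=i]$; \cref{eq:notation_Ni,eq:notation_rhoSargipoli} turn $\Ni{i}$ into $N_{i}(\vec{\rotangle})$ and $\rhoargSi{i}$ into $\rhoSargipol{i}$ (and likewise for the index $i+1$ on the left‑hand side); and \cref{eq:ProbTiarg,eq:rhoCiarg} turn $\ProbTiarg{i}$ and $\rhoCiarg{i}$ into $\ProbTiargipol$ and $\rhoCiargipol$. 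Carrying out these replacements in \cref{eq:dos_ip_over_N0_case_general} yields precisely \cref{eq:dos_ip_over_N0_case_1}, valid on the same index range.

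I do not expect any real obstacle here, since the mathematical content is entirely contained in the earlier theorem; the only thing that requires care is applying the identifications consistently, in particular keeping track of the dependence on the non‑random rotation angles $\vec{\rotangle}$, which is suppressed on the left‑hand sides of \cref{eq:notation_dosargipolia,eq:notation_Ni,eq:notation_rhoSargipoli,eq:ProbTiarg,eq:rhoCiarg} but must be retained throughout. Should one prefer a self‑contained argument, the alternative is to start from \cref{eq:transition_dos} of \cref{thm:transition_dos}, divide it by $N_0$, use \cref{eq:dosargi} to rewrite $\dos$ as $N\rho$, and then translate into the polarization notation; but this merely re‑traces the proof of \cref{eq:dos_ip_over_N0_case_general} and brings no advantage.
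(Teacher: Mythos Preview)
Your proposal is correct and follows essentially the same approach as the paper: the paper's own proof simply states that \cref{eq:dos_ip_over_N0_case_general} can be rewritten using \cref{eq:notation_rhoSargipoli,eq:ProbTiarg,eq:rhoCiarg}, which is precisely the notational substitution you carry out. Your version is more explicit (you also cite \cref{as:Xw} and \cref{eq:notation_Ni} to justify the change of integration variable and the $N_i$ notation), but the argument is the same.
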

\begin{proof}
\Cref{eq:dos_ip_over_N0_case_general} can be rewritten using \cref{eq:notation_rhoSargipoli,eq:ProbTiarg,eq:rhoCiarg}.
\end{proof}
\begin{theorem}
\letbeprocess{def:process_polarizers_case_1} 
It holds
\begin{align}
		\frac{N_{i+1}(\vec{\rotangle})}{N_{0}} &=  \int_{\polarization[i=i]}
		\frac{N_{i}(\vec{\rotangle})}{N_0}  \rhoSargipol{i}
        \ProbTiargipol
		\text{d} \polarization[i=i]  \label{eq:N_ip_over_N0_case_1}
\end{align}
for $i \in (0,...,\ntrans-1)$.
\end{theorem}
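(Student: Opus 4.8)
The plan is to obtain \cref{eq:N_ip_over_N0_case_1} by re-expressing, in the polarization-specific notation of \cref{eq:notation_Ni,eq:notation_rhoSargipoli,eq:ProbTiarg}, the already-established general identity \cref{eq:N_ip_over_N0_case_general}. The process fixed by \cref{def:process_polarizers_case_1} is an IM sequence in the sense of \cref{def:process_STC_sequence}, so the hypotheses under which \cref{eq:N_ip_over_N0_case_general} was derived are met, and that equation holds for every $i \in (0,\dotsc,\ntrans-1)$ whenever $N_0 \neq 0$.

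The only real step is a dictionary translation of the symbols. By \cref{as:Xw} the $i$-th random variable is $\Xw[i=i]=(\polarization[i=i])$, so $\int_{\Xw[i=i]} \dotsm \,\dXw[i=i]$ is literally $\int_{\polarization[i=i]} \dotsm \,\text{d}\polarization[i=i]$. Then \cref{eq:notation_Ni} lets me replace $\Ni{i}$ and $\Ni{i+1}$ by $N_i(\vec{\rotangle})$ and $N_{i+1}(\vec{\rotangle})$, \cref{eq:notation_rhoSargipoli} lets me replace $\rhoargSi{i}$ by $\rhoSargipol{i}$, and \cref{eq:ProbTiarg} lets me replace $\ProbTiarg{i}$ by $\ProbTiargipol$. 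With these substitutions both sides of \cref{eq:N_ip_over_N0_case_general} turn verbatim into the corresponding sides of \cref{eq:N_ip_over_N0_case_1}. This is exactly the bookkeeping already used to pass from \cref{eq:dos_ip_over_N0_case_general} to \cref{eq:dos_ip_over_N0_case_1}, just with the $\rhoCiargipol$ factor absent.

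I do not anticipate a genuine obstacle: all the substance sits inside \cref{eq:N_ip_over_N0_case_general}, hence ultimately in \cref{process-stc-eq:transition_N_over_N0} of \cite{Prochazka2022statphys_process_stc} together with \cref{eq:dosargi}. The one point deserving a sentence of justification is that, by \cref{as:state_space} and the remark following it, the abstract number of states $\Ni{i}$ coincides with the measured photon count $N_i(\vec{\rotangle})$, which is what legitimizes the use of \cref{eq:notation_Ni}; granting that, the proof collapses to a one-line substitution.
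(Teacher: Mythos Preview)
Your proposal is correct and takes essentially the same approach as the paper: the paper's proof simply says that \cref{eq:N_ip_over_N0_case_general} can be rewritten using \cref{eq:notation_rhoSargipoli,eq:ProbTiarg}. Your version is more explicit about the bookkeeping (also invoking \cref{eq:notation_Ni} and \cref{as:Xw} for the integration variable and the photon counts), but the underlying argument is identical.
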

\begin{proof}
\Cref{eq:N_ip_over_N0_case_general} can be rewritten using \cref{eq:notation_rhoSargipoli,eq:ProbTiarg}.
\end{proof}
%If the probability density function $\rhoSarg{0}$, $\ntrans$ probability functions $\ProbTarg{i}$ \hspace{2mm} ($i=0, ..., \ntrans-1$) and $\ntrans-1$ probability density functions $\rhoCiarg{i}$ \hspace{2mm}($i=0, ..., M-2$) are given then one can calculate the relative particle flux $J_{i}(\XwNR)/J_0$ with the help of \cref{eq:transition_N_over_N0_simplified,eq:dosargin0_over_N0_simplified,eq:N_ip_over_N0_case_3}. These unknown probability (density) functions can be determined on the basis of measured $J_{i}(\XwNR)/J_{0}$ ($i=1,...,M$) as explained in next section.
%
%
%Using the new notation, the particle flux $\dosargipol{i}$ ($i=0,...,M$) given by  (see \cref{process-stc-eq:dosargin_i_rho} in \cite{Prochazka2022statphys_process_stc}) 
%\begin{align}
%		\dosargipol{i} = N_i(\vec{\rotangle}) \rhoSargipol{i} \, .   \label{eq:flux_i_rho}
%\end{align}
%Rewriting \cref{process-stc-eq:Ni_given_by_integration} in \cite{Prochazka2022statphys_process_stc} which expresses relation between the number of particles $N_{i}(\vec{\rotangle})$ and particle flux $\dosargipol{i}$ being function of polarization angle
%\begin{align}
%		N_{i}(\vec{\rotangle}) = \int_{\polarization[i=i]} \dosargipol{i} \text{d} \polarization[i=i] \, .  \label{eq:Ni_given_by_integration}
%\end{align}
%
%Similarly, one can rewrite \cref{process-stc-eq:transition_dos_over_N0,process-stc-eq:dosargin0_over_N0} in \cite{Prochazka2022statphys_process_stc} ($i=0,...,M$)
\begin{remark}
\Cref{eq:dos_ip_over_N0_case_1} is of key importance. If the initial probability density function $\rhoSpol{0}$ and the functions $\ProbTi{i}$ and $\rhoCi{i}$ are given for all $i=0,...,M-1$ then the formula \refp{eq:dos_ip_over_N0_case_1} may be used iteratively to calculate $\frac{N_i(\vec{\rotangle})}{N_0}  \rhoSargipol{i}$ for all $i=1,...,M$ (i.e., behind each polarization sensitive element). Let us emphasize that an outgoing photon after being transmitted through a polarization sensitive element can have different value of polarization, and it becomes incoming photon interacting with the next element in the sequence. The photon polarization angles corresponding to the surfaces $\surface_i$ are, therefore, distinguished by index $i$ in \cref{eq:dos_ip_over_N0_case_1}. 
\end{remark}

\subsection{\label{sec:model_case_2}Case 2 - sequence of polarization sensitive elements and independence of $\rhoCiargipol$ on $\polarization[i=i]$}
\begin{definition} %$\rhoCargpol$, $\rhoCindepargpol$ 
If probability density function $\rhoCiargipol$ does not depend on polarization angle of incoming photon then the function may be denoted as $\rhoCindepargipol$. 
\end{definition}
\begin{assumption}[Independence of $\rhoCiargipol$ on $\protect{\polarization[i=i]}$]
\label{as:rhoC_independent_of_polarization_in_pol}
Let probability density function $\rhoCiargipol$ do not depend on $\polarization[i=i]$, i.e., it holds %, i.e., $\rhoCindepidentity{i} = \rhoCiarg{i}$. 
\begin{align}
    \rhoCindepargipol      &= \rhoCiargipol       & i&=0,...,\ntrans-1 \, .  \label{eq:notation_rhoCindepargipol} 
   % \rhoCindepbaseargipol  &= \rhoCindepbasearg{i} & i&=0,...,\ntrans-1 \, .  \label{eq:notation_rhoCindepbaseargpol}
\end{align}
\end{assumption}
\begin{remark}
\label{rmk:rhoC_independent_of_polarization_in_pol}
\Cref{as:rhoC_independent_of_polarization_in_pol} is equivalent to \cref{process-stc-as:rhoC_independent_of_xiwin} in \cite{Prochazka2022statphys_process_stc} and \cref{as:Xw}.
\end{remark}
\newcommand{\refsasprocessPOLcaseoneimpliedextra}{as:rhoC_independent_of_polarization_in_pol}
\begin{definition}[Stochastic process - case 2] 
\label{def:process_polarizers_case_2}
Let $\stochproc$ be stochastic process given by \cref{def:process_polarizers_case_1} and let it satisfy also \cref{\refsasprocessPOLcaseoneimpliedextra}.
\end{definition}
\begin{remark}
Stochastic process given by \cref{def:process_polarizers_case_2} is process given by \cref{process-stc-def:process_STC_sequence_case_one} in \cite{Prochazka2022statphys_process_stc} which satisfies \cref{\refsasprocessSTCsequencecaseonefull} in \cite{Prochazka2022statphys_process_stc}, and \cref{\refsasprocessPOLgeneralextra}. It implies that \cref{\refsasprocessPOLcaseoneimpliedextra} holds (see \cref{rmk:rhoC_independent_of_polarization_in_pol}).
\end{remark}

\begin{theorem}
\label{thm:dos_ip_over_N0_case_2}
\letbeprocess{def:process_polarizers_case_2} 
It holds
%Let us assume stochastic process given by \cref{def:process_polarizers_case_2}. 
\begin{align}
		%\frac{\dosargipol{0}}{N_0} &= \rhoSargipol{0}  \label{eq:dosargin0_over_N0_simplified} \\
		%%\frac{\dosargipol{i+1}}{N_0} 
		&\frac{N_{i+1}(\vec{\rotangle})}{N_0}\rhoSargipol{i+1} 
		=\nonumber \\
		%\rhoCi{}(\polarization[i=\text{out}]\!=\!\polarization[i=i+1], \rotangle\!=\!\rotangle[i=i]) 
		&\qquad\qquad
		\rhoCindepargipol
        \left[  
		\int_{\polarization[i=0]}
		%\frac{\dosargipol{i}}{N_0}
        \rhoSargipol{0}
		%\ProbTi{}(\polarization[i=\text{in}]\!=\!\polarization[i=0], \rotangle\!=\!\rotangle[i=0]) 
		\ProbTi{0}(\polarization[i=0], \rotangle[i=0]) 
		\text{d} \polarization[i=0] \right] \nonumber \\
		&\qquad\qquad\prod_{j = 1}^{i} \left[ \int_{\polarization[i=j]}
		%\frac{\dosargipol{i}}{N_0}
        %\rhoCi{}(\polarization[i=\text{out}]\!=\!\polarization[i=j], \rotangle\!=\!\rotangle[i=j-1])
        \rhoCindep[i=j-1](\polarization[i=j], \rotangle[i=j-1])
		%\ProbTi{}(\polarization[i=\text{in}]\!=\!\polarization[i=j], \rotangle\!=\!\rotangle[i=j]) 
		\ProbTi{j}(\polarization[i=j], \rotangle[i=j]) 
		\text{d} \polarization[i=j] \right] 
		\, .
\label{eq:dos_ip_over_N0_case_2}
\end{align}
where $\prod_{j = 1}^{i} \left[...\right]=1$ if $i\!=\!0$. 
\end{theorem}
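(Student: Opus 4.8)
The plan is to prove \cref{eq:dos_ip_over_N0_case_2} by induction on $i$, iterating the Case-1 identity \cref{eq:dos_ip_over_N0_case_1}. Since the process of \cref{def:process_polarizers_case_2} is in particular the process of \cref{def:process_polarizers_case_1}, every theorem of \cref{sec:model_case_1} applies to it; moreover \cref{as:rhoC_independent_of_polarization_in_pol} (in the form \cref{eq:notation_rhoCindepargipol}) lets one substitute $\rhoCiargipol = \rhoCindepargipol$, and $\rhoCindepargipol = \rhoCindep[i=i](\polarization[i={i+1}],\rotangle[i=i])$ does not depend on the integration variable $\polarization[i=i]$. Hence \cref{eq:dos_ip_over_N0_case_1} becomes, for each $i\in(0,\dotsc,\ntrans-1)$, the simple recursion $\frac{N_{i+1}(\vec{\rotangle})}{N_0}\rhoSargipol{i+1} = \rhoCindepargipol\int_{\polarization[i=i]}\frac{N_i(\vec{\rotangle})}{N_0}\rhoSargipol{i}\,\ProbTiargipol\,\text{d}\polarization[i=i]$, which I would iterate from $i=0$ upward.

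First I would check the base case $i=0$: the empty product equals $1$ by convention, and the recursion above with $i=0$ gives $\frac{N_{1}(\vec{\rotangle})}{N_0}\rhoSargipol{1} = \rhoCindep[i=0](\polarization[i=1],\rotangle[i=0])\int_{\polarization[i=0]}\rhoSargipol{0}\,\ProbTi{0}(\polarization[i=0],\rotangle[i=0])\,\text{d}\polarization[i=0]$, which is \cref{eq:dos_ip_over_N0_case_2} for $i=0$. For the inductive step I would assume \cref{eq:dos_ip_over_N0_case_2} with $i$ replaced by $i-1$, so that $\frac{N_i(\vec{\rotangle})}{N_0}\rhoSargipol{i}$ equals $\rhoCindep[i=i-1](\polarization[i=i],\rotangle[i=i-1])$ times the bracketed $\polarization[i=0]$-integral times $\prod_{j=1}^{i-1}[\dots]$. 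Substituting this into the right-hand side of the recursion for index $i$: the bracketed $\polarization[i=0]$-integral and every factor of $\prod_{j=1}^{i-1}[\dots]$ are independent of $\polarization[i=i]$ and come out of the $\polarization[i=i]$-integral, leaving inside exactly $\int_{\polarization[i=i]}\rhoCindep[i=i-1](\polarization[i=i],\rotangle[i=i-1])\,\ProbTi{i}(\polarization[i=i],\rotangle[i=i])\,\text{d}\polarization[i=i]$, which is precisely the $j=i$ factor of the product. Collecting terms, the product now runs over $j=1,\dotsc,i$, the overall prefactor is $\rhoCindepargipol$, and one obtains \cref{eq:dos_ip_over_N0_case_2}.

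The main obstacle is purely one of index bookkeeping: I must track that inside the product the change-of-polarization density always appears as $\rhoCindep[i=j-1]$ evaluated at the integration variable $\polarization[i=j]$ (while the accompanying transmission factor is $\ProbTi{j}$ at $\polarization[i=j]$), that the prefactor pulled all the way to the front is $\rhoCindep[i=i]$ evaluated at the remaining free output variable $\polarization[i={i+1}]$, and that each iteration step contributes precisely the new top factor of the product with no off-by-one shift. Beyond this, the argument is a mechanical iteration of \cref{eq:dos_ip_over_N0_case_1}, relying only on the fact that, under \cref{as:rhoC_independent_of_polarization_in_pol}, $\rhoCindepargipol$ is constant with respect to the integration variable $\polarization[i=i]$ and therefore passes outside the integral.
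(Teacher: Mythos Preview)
Your argument is correct: under \cref{as:rhoC_independent_of_polarization_in_pol} the factor $\rhoCindepargipol$ is constant in $\polarization[i=i]$ and may be pulled out of the integral in \cref{eq:dos_ip_over_N0_case_1}, and the induction you describe then yields \cref{eq:dos_ip_over_N0_case_2} with the correct index pattern (prefactor $\rhoCindep[i=i]$ at $\polarization[i=i+1]$, product factors $\rhoCindep[i=j-1]$ at $\polarization[i=j]$ paired with $\ProbTi{j}$).

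The paper's own proof is different in presentation rather than in substance: it does not carry out the induction explicitly but simply invokes the corresponding general identity already proved in the companion paper \cite{Prochazka2022statphys_process_stc} (their eq.\ for the iterated transition with $\rhoCi{}$ independent of the input variable) and rewrites it in the polarization notation via \cref{as:Xw} and \cref{eq:notation_rhoCindepargipol}. Your approach is self-contained within the present paper, deriving the result from \cref{eq:dos_ip_over_N0_case_1} alone, whereas the paper's approach outsources the iteration to the external reference. Both arrive at the same place; yours spares the reader a trip to \cite{Prochazka2022statphys_process_stc}, while the paper's version emphasizes that this is just a specialization of the abstract IM-process formula.
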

\begin{proof}
\Cref{process-stc-eq:transition_N_over_N0_simplified_rho_independent_on_xi_and_identical} in \cite{Prochazka2022statphys_process_stc} can be rewritten using \cref{as:Xw,as:rhoC_independent_of_polarization_in_pol,eq:notation_rhoCindepargipol} to obtain \cref{eq:dos_ip_over_N0_case_2}.
\end{proof}
\begin{remark}
\label{rmk:same_shapes}
The functions $\frac{N_{i+1}(\vec{\rotangle})}{N_0}\rhoSargipol{i+1}$ of $\polarization[i=i+1]$ given by \cref{eq:dos_ip_over_N0_case_2} have the same shapes for all $i=0,...,\ntrans-1$; the functions differ only in normalizations. 
\end{remark}
\begin{remark}
\label{rmk:simpler_numeric}
\Cref{eq:dos_ip_over_N0_case_2} contains multiplication of integrals, while the integrals in \cref{eq:dos_ip_over_N0_case_1} are calculated iteratively (recursively). Therefore, the functions $\frac{N_{i+1}(\vec{\rotangle})}{N_0}\rhoSargipol{i+1}$ are easier to calculate numerically using \cref{eq:dos_ip_over_N0_case_2} than using \cref{eq:dos_ip_over_N0_case_1}. However, this comes at the cost of loss of generality as \cref{as:rhoC_independent_of_polarization_in_pol} must be introduced.
\end{remark}

\subsection{\label{sec:model_case_3}Case 3 - sequence of identical polarization sensitive elements and independence of $\rhoCiargipol$ on $\polarization[i=i]$}
\begin{definition}[Probability of transmission $\ProbTidentityargpol$]
%In the case of a sequence of $\ntrans$ \emph{identical} polarizers 
If the probabilities $\ProbTiargipol$ are the same for all $i=0,1,...,\ntrans-1$, then the probability of transmission of photon through a polarization sensitive element can be denoted as $\ProbTidentityargpol$. It depends on the polarization of the incoming photon $\polarization[i=\text{in}]$ and the rotation angle $\rotangle$ of the element.
\end{definition}
\begin{assumption}[Identical probabilities $\ProbTiargipol$]
\label{as:ProbT_identity_pol}
The probability functions $\ProbTiargipol$ are the same for all $i=0,1,...,\ntrans-1$, i.e., for all given polarization sensitive elements. I.e., it holds
\begin{align}
\ProbTargiexpandedpol              &= \ProbTiargipol        & i&=0,...,\ntrans-1 \, . \label{eq:notation_ProbTargpol}
\end{align}
%if the arguments of the function $\ProbTi{}$ depend on $i$.
\end{assumption}
\begin{definition}
If the probability density functions $\rhoCiargipol$ are the same for all $i=0,1,...,\ntrans-1$, then the function can be denoted as $\rhoCargpol$. It depends on the polarization of the outgoing photon $\polarization[i=\text{out}]$ and the rotation of given optical element $\rotangle$. %This probability density function expresses possibility of change of polarization of transmitted photon through one polarizer given that the photon passed through the polarizer (i.e., it is \emph{conditional} probability density function). %In other words, one formulate the following assumption.
\end{definition}
\begin{assumption}[Identical functions $\rhoCiargipol$] 
\label{as:rhoC_identity_pol}
Let the probability density functions $\rhoCiargipol$ be the same for all polarization sensitive elements. I.e., it holds (\mbox{$i=0,...,\ntrans-1$})
\begin{align} %\rhoCargipol  
		\rhoCargexpandedpol            &= \rhoCiargipol         & & \, . \label{eq:notation_rhoCargpol}
\end{align}
\end{assumption}
\begin{remark}
\label{rmk:as_rho_C_identity_pol} 
\Cref{as:rhoC_identity_pol} is equivalent to \cref{process-stc-as:rhoC_identity} in \cite{Prochazka2022statphys_process_stc} and \cref{as:Xw}. 
\end{remark}
\begin{assumption}
\label{as:identical_elements}
Let all the polarization sensitive elements in given sequence be identical.
\end{assumption}
\begin{remark}
\label{rmk:as_identical_elements}
\Cref{as:identical_elements} implies \cref{as:ProbT_identity_pol,as:rhoC_identity_pol}.
\end{remark}

\begin{definition}
Let the probability density function corresponding to both the \cref{as:rhoC_independent_of_polarization_in_pol,as:rhoC_identity_pol} be denoted as $\rhoCindepidentityargpol$, i.e.,
\begin{align}
		\rhoCindepidentityargiexpandedpol     &= \rhoCiargipol         & i&=0,...,\ntrans-1 \, . \label{eq:notation_rhoCindepidentityargipol}
\end{align}
\end{definition}

\newcommand{\refsasprocessPOLcasetwoimpliedextra}{as:rhoC_identity_pol}
\newcommand{\refsasprocessPOLcasetwoextra}{as:ProbT_identity_pol}
\begin{definition}[Stochastic process - case 3] % + identity
\label{def:process_polarizers_case_3}
Let $\stochproc$ be stochastic process given by \cref{def:process_polarizers_case_2} and satisfying also \cref{\refsasprocessPOLcasetwoimpliedextra,\refsasprocessPOLcasetwoextra}. 
\end{definition}
\begin{remark}
Stochastic process given by \cref{def:process_polarizers_case_3} is stochastic process given by \cref{process-stc-def:process_STC_sequence_case_two} in \cite{Prochazka2022statphys_process_stc} which satisfies \cref{\refsasprocessSTCsequencecasetwofull} in \cite{Prochazka2022statphys_process_stc}, \cref{\refsasprocessPOLgeneralextra} and \cref{\refsasprocessPOLcasetwoextra}. It implies that \cref{\refsasprocessPOLcasetwoimpliedextra} holds (see \cref{rmk:as_rho_C_identity_pol}) and that \cref{as:identical_elements} holds (see \cref{rmk:as_identical_elements}). 
\end{remark}
Let as further assume that all polarization sensitive elements in given sequence are identical (see \cref{as:identical_elements}), i.e., let us assume that \cref{as:ProbT_identity_pol,as:rhoC_identity_pol} are satisfied.  
\begin{theorem}
\letbeprocess{def:process_polarizers_case_3} 
It holds
\begin{alignat}{3}
		\int_{\polarization[i=\text{out}]} \rhoCargpol   \text{d} \polarization[i=\text{out}]  &&= 1 &  \, .  \label{eq:rhoargstateC_norm}
\end{alignat}
\end{theorem}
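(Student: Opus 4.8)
The plan is to reduce the claim to the normalization condition \cref{eq:rhoargstateC_norm_case_1}, which has already been established for the case 1 process. First I would observe that the stochastic process given by \cref{def:process_polarizers_case_3} is, by construction, also a process given by \cref{def:process_polarizers_case_2}, and hence by \cref{def:process_polarizers_case_1}; therefore every statement proved for the latter applies to it, in particular \cref{eq:rhoargstateC_norm_case_1}, i.e.\ $\int_{\polarization[i=i+1]} \rhoCiargipol\,\text{d}\polarization[i=i+1] = 1$ for each $i \in (0,\dotsc,\ntrans-1)$, valid (by the remark following that equation) for arbitrary value of $\polarization[i=i]$.

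Next I would invoke \cref{as:rhoC_identity_pol}, namely the identification \cref{eq:notation_rhoCargpol}, $\rhoCargexpandedpol = \rhoCiargipol$, together with \cref{as:rhoC_independent_of_polarization_in_pol} which guarantees that $\rhoCiargipol$ does not depend on $\polarization[i=i]$ (so that passing to the notation $\rhoCargpol$, in which the incoming-polarization argument is suppressed, is legitimate). Substituting \cref{eq:notation_rhoCargpol} into \cref{eq:rhoargstateC_norm_case_1} and relabelling the dummy integration variable $\polarization[i=i+1]$ as $\polarization[i=\text{out}]$ yields $\int_{\polarization[i=\text{out}]} \rhoCargpol\,\text{d}\polarization[i=\text{out}] = 1$, which is \cref{eq:rhoargstateC_norm}.

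There is essentially no obstacle here: the argument is pure bookkeeping about which assumptions the case 3 process inherits, plus a renaming of the integration variable. The only point that needs an explicit (trivial) check is that \cref{eq:rhoargstateC_norm_case_1} holds for every admissible value of the incoming polarization $\polarization[i=i]$, since otherwise the suppression of that argument in $\rhoCargpol$ would not be well defined; this is ensured by the remark after \cref{eq:rhoargstateC_norm_case_1} together with \cref{as:rhoC_independent_of_polarization_in_pol}.
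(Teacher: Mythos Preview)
Your proposal is correct and takes essentially the same approach as the paper: the paper's proof is simply ``It follows from \cref{eq:rhoargstateC_norm_case_1} and \cref{eq:notation_rhoCargpol},'' which is exactly your reduction to the case~1 normalization followed by the identification of all $\rhoCi{i}$ with the single function $\rhoCpol$. One minor remark: the notation $\rhoCargpol$ actually retains $\polarization[i=\text{in}]$ as an explicit argument, so your appeal to \cref{as:rhoC_independent_of_polarization_in_pol} to justify ``suppressing'' it is unnecessary (though harmless); \cref{eq:notation_rhoCargpol} alone suffices for the substitution.
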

\begin{proof}
It follows from \cref{eq:rhoargstateC_norm_case_1} and \cref{eq:notation_rhoCargpol}.
\end{proof}

\begin{theorem}
\letbeprocess{def:process_polarizers_case_3} 
It holds
\begin{align}
		%\frac{\dosargipol{0}}{N_0} &= \rhoSargipol{0}  \label{eq:dosargin0_over_N0_simplified} \\
		%%\frac{\dosargipol{i+1}}{N_0} 
		&\frac{N_{i+1}(\vec{\rotangle})}{N_0}\rhoSargipol{i+1} 
		= \nonumber  \\
		&%\qquad\qquad
		\int_{\polarization[i=i]}
		%\frac{\dosargipol{i}}{N_0}
        \frac{N_i(\vec{\rotangle})}{N_0}  \rhoSargipol{i} 
		%\ProbTi{}(\polarization[i=\text{in}]\!=\!\polarization[i=i], \rotangle\!=\!\rotangle[i=i]) 
        \ProbTargipol
		%\rhoCi{}(\polarization[i=\text{in}]\!=\!\polarization[i=i], \polarization[i=\text{out}]\!=\!\polarization[i=i+1], \rotangle\!=\!\rotangle[i=i]) 
\rhoCargipol
		\text{d} \polarization[i=i]  % \, .
\label{eq:dos_ip_over_N0_case_3}
\end{align}
for $i \in (0,...,\ntrans-1)$. 
\end{theorem}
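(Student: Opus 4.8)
The plan is to obtain \cref{eq:dos_ip_over_N0_case_3} directly from the already-established case-1 identity \cref{eq:dos_ip_over_N0_case_1} by inserting the two ``identity'' hypotheses that distinguish case 3. First I would note that the process of \cref{def:process_polarizers_case_3} is, by construction, a special instance of the process of \cref{def:process_polarizers_case_1}: it is built on \cref{def:process_polarizers_case_2}, which is itself built on \cref{def:process_polarizers_case_1}, so it satisfies all the assumptions of the latter (in particular \cref{as:Xw,as:state_space} and the IM-sequence hypotheses) plus the extra ones. Hence every statement proved under \cref{def:process_polarizers_case_1} is available here; in particular \cref{eq:dos_ip_over_N0_case_1} holds verbatim, with the factors $\ProbTiargipol$ and $\rhoCiargipol$ appearing in the integrand over $\polarization[i=i]$.

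Next I would invoke \cref{as:ProbT_identity_pol}, whose content is \cref{eq:notation_ProbTargpol}, to replace the index-dependent $\ProbTiargipol$ by the index-independent $\ProbTargipol$, and \cref{as:rhoC_identity_pol}, whose content is \cref{eq:notation_rhoCargpol}, to replace $\rhoCiargipol$ by $\rhoCargipol$. Carrying out both substitutions inside the integral in \cref{eq:dos_ip_over_N0_case_1} produces exactly the right-hand side of \cref{eq:dos_ip_over_N0_case_3}. Since \cref{eq:notation_ProbTargpol,eq:notation_rhoCargpol} are asserted for every $i=0,\dotsc,\ntrans-1$, the substitution is valid over the whole range $i \in (0,\dotsc,\ntrans-1)$ claimed in the theorem, and nothing is lost.

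There is essentially no analytic obstacle; the argument is pure bookkeeping of notation. The only point deserving a moment of care is the consistency of the assumption chain --- one should check that \cref{def:process_polarizers_case_3} genuinely inherits \cref{as:Xw,as:state_space} (so that \cref{eq:dos_ip_over_N0_case_1} applies in the polarization-variable notation) together with the IM-sequence hypotheses underlying \cref{eq:dos_ip_over_N0_case_general}, which it does through the nesting \cref{def:process_polarizers_case_1} $\Rightarrow$ \cref{def:process_polarizers_case_2} $\Rightarrow$ \cref{def:process_polarizers_case_3}. Granting that, the substitution step is immediate and completes the proof.
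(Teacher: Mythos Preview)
Your proposal is correct and matches the paper's own proof: start from \cref{eq:dos_ip_over_N0_case_1} (valid because \cref{def:process_polarizers_case_3} is a special instance of \cref{def:process_polarizers_case_1}) and substitute via \cref{eq:notation_ProbTargpol,eq:notation_rhoCargpol}. The paper states this in one line, while you spell out the inheritance chain explicitly, but the argument is the same.
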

\begin{proof}
\Cref{eq:dos_ip_over_N0_case_1} can be rewritten using \cref{eq:notation_ProbTargpol,eq:notation_rhoCargpol}. The two functions $\ProbTi{}$ and $\rhoCi{}$ are the same for all the \emph{identical} polarization sensitive elements in the sequence, but it must be correctly integrated over their arguments, see \cref{eq:dos_ip_over_N0_case_3}.
\end{proof}
\begin{theorem}
\letbeprocess{def:process_polarizers_case_3} 
It holds
\begin{align}
		\frac{N_{i+1}(\vec{\rotangle})}{N_{0}} &=  \int_{\polarization[i=i]}
		%\frac{\dosargipol{i-1}}{N_0}
		\frac{N_{i}(\vec{\rotangle})}{N_0}  \rhoSargipol{i}
        %\ProbTi{}(\polarization[i=\text{in}]\!=\!\polarization[i=i], \rotangle\!=\!\rotangle[i=i]) 
        \ProbTargipol
		\text{d} \polarization[i=i]  \label{eq:N_ip_over_N0_case_3}
\end{align}
for $i \in (0,...,\ntrans-1)$. 
\end{theorem}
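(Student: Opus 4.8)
The plan is to read off the identity as the direct specialization of \cref{eq:N_ip_over_N0_case_1} to the present setting, in exact parallel with the proof of \cref{eq:dos_ip_over_N0_case_3}. First I would note that the process of \cref{def:process_polarizers_case_3} is in particular a process of \cref{def:process_polarizers_case_1}: it is defined on top of \cref{def:process_polarizers_case_2}, which itself refines \cref{def:process_polarizers_case_1} only by adjoining \cref{as:rhoC_independent_of_polarization_in_pol,as:ProbT_identity_pol,as:rhoC_identity_pol}. Hence \cref{eq:N_ip_over_N0_case_1} holds verbatim here for every $i\in(0,\dots,\ntrans-1)$.

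Next I would rewrite its integrand using the identical-element notation. By \cref{as:ProbT_identity_pol} the probabilities $\ProbTiargipol$ agree for all $i=0,\dots,\ntrans-1$, and \cref{eq:notation_ProbTargpol} lets me denote the common function as $\ProbTargipol$; substituting this into \cref{eq:N_ip_over_N0_case_1} is literally \cref{eq:N_ip_over_N0_case_3}. The one point to watch is that, even though the function $\ProbTi{}$ no longer carries an element index, its arguments still do (the incoming polarization $\polarization[i=i]$ and the rotation $\rotangle[i=i]$ of the $i$-th element), so the integration variable and its range are unchanged.

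As an independent check — and a self-contained alternative that does not invoke \cref{eq:N_ip_over_N0_case_1} at all — I would instead start from \cref{eq:dos_ip_over_N0_case_3} and integrate both sides over $\polarization[i=i+1]$: on the left, \cref{eq:rhoS_norm_pol} collapses the integral to $N_{i+1}(\vec{\rotangle})/N_0$; on the right, after exchanging the order of integration (legitimate since the integrand is non-negative), the normalization \cref{eq:rhoargstateC_norm} of $\rhoCargipol$ in $\polarization[i=i+1]$ at fixed $\polarization[i=i]$ turns the inner integral into $1$, leaving exactly the right-hand side of \cref{eq:N_ip_over_N0_case_3}. There is no genuine obstacle here; the only thing demanding attention is the bookkeeping of which arguments the identical-element functions retain, together with remembering that \cref{eq:rhoargstateC_norm} is to be applied at a fixed value of the incoming polarization.
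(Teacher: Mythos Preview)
Your primary argument is correct and matches the paper's own proof essentially verbatim: the paper simply says that \cref{eq:N_ip_over_N0_case_1} can be rewritten using \cref{eq:notation_ProbTargpol}, with the caveat that the common function $\ProbTi{}$ must still be integrated over the correctly indexed arguments. Your alternative derivation via integrating \cref{eq:dos_ip_over_N0_case_3} and invoking \cref{eq:rhoS_norm_pol,eq:rhoargstateC_norm} is a sound independent check, but it is not part of the paper's proof.
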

\begin{proof}
\Cref{eq:N_ip_over_N0_case_1} can be rewritten using \cref{eq:notation_ProbTargpol}. The function $\ProbTi{}$ is the same for all the \emph{identical} polarization sensitive elements in the sequence, but it must be correctly integrated over its arguments, see \cref{eq:N_ip_over_N0_case_3}.
\end{proof}

\begin{theorem}
\letbeprocess{def:process_polarizers_case_3} 
It holds
\begin{align}
		%\frac{\dosargipol{0}}{N_0} &= \rhoSargipol{0}  \label{eq:dosargin0_over_N0_simplified} \\
		%%\frac{\dosargipol{i+1}}{N_0} 
		&\frac{N_{i+1}(\vec{\rotangle})}{N_0}\rhoSargipol{i+1} 
		=\nonumber \\
		%\rhoCi{}(\polarization[i=\text{out}]\!=\!\polarization[i=i+1], \rotangle\!=\!\rotangle[i=i]) 
		&\qquad\qquad
		\rhoCindep(\polarization[i=i+1], \rotangle[i=i]) 
        \left[  
		\int_{\polarization[i=0]}
		%\frac{\dosargipol{i}}{N_0}
        \rhoSargipol{0}
		%\ProbTi{}(\polarization[i=\text{in}]\!=\!\polarization[i=0], \rotangle\!=\!\rotangle[i=0]) 
		\ProbTi{}(\polarization[i=0], \rotangle[i=0]) 
		\text{d} \polarization[i=0] \right] \nonumber \\
		&\qquad\qquad\prod_{j = 1}^{i} \left[ \int_{\polarization[i=j]}
		%\frac{\dosargipol{i}}{N_0}
        %\rhoCi{}(\polarization[i=\text{out}]\!=\!\polarization[i=j], \rotangle\!=\!\rotangle[i=j-1])
        \rhoCindep(\polarization[i=j], \rotangle[i=j-1])
		%\ProbTi{}(\polarization[i=\text{in}]\!=\!\polarization[i=j], \rotangle\!=\!\rotangle[i=j]) 
		\ProbTi{}(\polarization[i=j], \rotangle[i=j]) 
		\text{d} \polarization[i=j] \right] 
		\, .
\label{eq:dos_ip_over_N0_case_3_simplified}
\end{align}
\end{theorem}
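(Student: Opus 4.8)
The plan is to obtain \cref{eq:dos_ip_over_N0_case_3_simplified} as a direct specialization of \cref{eq:dos_ip_over_N0_case_2}. First I would note that the process of \cref{def:process_polarizers_case_3} is, by construction, the process of \cref{def:process_polarizers_case_2} with the additional \cref{as:ProbT_identity_pol,as:rhoC_identity_pol} imposed; in particular it still satisfies \cref{as:rhoC_independent_of_polarization_in_pol}, so \cref{thm:dos_ip_over_N0_case_2} applies to it verbatim and \cref{eq:dos_ip_over_N0_case_2} holds. The entire task is then to rewrite the right-hand side of \cref{eq:dos_ip_over_N0_case_2} using the two identity assumptions.

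The key step is the termwise substitution. By \cref{as:ProbT_identity_pol} (equivalently \cref{eq:notation_ProbTargpol}) every transmission factor $\ProbTi{0}(\polarization[i=0], \rotangle[i=0])$ and $\ProbTi{j}(\polarization[i=j], \rotangle[i=j])$, $j=1,\dotsc,i$, equals the single index-free probability $\ProbTi{}$ evaluated at the corresponding arguments. By \cref{as:rhoC_identity_pol} together with \cref{as:rhoC_independent_of_polarization_in_pol} — i.e. by \cref{eq:notation_rhoCindepidentityargipol} — each change factor $\rhoCindep[i=j-1](\polarization[i=j], \rotangle[i=j-1])$ in the product, and the prefactor $\rhoCindep[i=i](\polarization[i={i+1}], \rotangle[i=i])$ appearing in \cref{eq:dos_ip_over_N0_case_2}, becomes the single index-free function $\rhoCindep$ evaluated at its respective arguments. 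Inserting these replacements, and carrying over the convention $\prod_{j=1}^{i}[\dots]=1$ for $i=0$ that is already built into \cref{eq:dos_ip_over_N0_case_2}, produces exactly \cref{eq:dos_ip_over_N0_case_3_simplified}.

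The only point requiring care — the place I would call the main obstacle, though it is a bookkeeping obstacle rather than a mathematical one — is keeping the arguments straight: the $j$-th factor in the product must retain $(\polarization[i=j], \rotangle[i=j-1])$ inside $\rhoCindep$ and $(\polarization[i=j], \rotangle[i=j])$ inside $\ProbTi{}$, while the $i$-th element supplies the outer factor $\rhoCindep(\polarization[i={i+1}], \rotangle[i=i])$ with the outgoing polarization. Because \cref{as:ProbT_identity_pol,as:rhoC_identity_pol} assert equality of the \emph{functions} across elements (not merely equality of their values at some fixed argument), the substitution is valid for every value of each integration variable and for the product index, so once the arguments are matched up the identity follows with no further work. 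An alternative, slightly longer route would be to iterate \cref{eq:dos_ip_over_N0_case_3} starting from $\rhoSargipol{0}$, but reducing to \cref{eq:dos_ip_over_N0_case_2} is cleaner.
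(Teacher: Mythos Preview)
Your proposal is correct and follows essentially the same route as the paper: start from \cref{eq:dos_ip_over_N0_case_2} (valid because \cref{def:process_polarizers_case_3} specializes \cref{def:process_polarizers_case_2}) and replace the indexed functions by their index-free counterparts via \cref{eq:notation_ProbTargpol,eq:notation_rhoCargpol,eq:notation_rhoCindepidentityargipol}. The paper's own proof is the one-line version of exactly this argument.
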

\begin{proof}
\Cref{eq:dos_ip_over_N0_case_2} can be simplified using \cref{eq:notation_ProbTargpol,eq:notation_rhoCargpol,eq:notation_rhoCindepidentityargipol}.
\end{proof}
\begin{remark}
\label{rmk:simpler_numeric_2}
Numerical calculation of $\frac{N_{i+1}(\vec{\rotangle})}{N_0}\rhoSargipol{i+1}$ using \cref{eq:dos_ip_over_N0_case_3_simplified} is simpler than using \cref{eq:dos_ip_over_N0_case_3}, see also \cref{rmk:simpler_numeric}.
\end{remark}
\begin{remark}
\label{rmk:3_functions_to_be_parameterized}
\Cref{eq:N_ip_over_N0_case_3} and \cref{eq:dos_ip_over_N0_case_3} (or \cref{eq:dos_ip_over_N0_case_3_simplified}) allow to calculate the relative number of transmitted photons $N_{i}(\vec{\rotangle})/N_{0}$ for any $i=1,...,M$ according to \cref{eq:N_ip_over_N0_case_3} if 3 functions are known: $\rhoSargipol{0}$, $\ProbTidentityargpol$ and $\rhoCargpol$. These unknown functions may be parameterized and determined on the basis of measured (input) values of $N_{i}(\vec{\rotangle})/N_{0}$.
\end{remark}
%\begin{remark} 
%%The state of system corresponding to index $i$ is represented by one random variable (photon polarization angle $\polarization[i=i]$) and  state will be represented by only one random variable (one degree of freedom). 
%The probabilistic model formulated in this \cref{sec:model} can be straightforwardly modified for arbitrary random and non-random variables characterizing photons and optical elements with the help of \cite{Prochazka2022statphys_process_stc}. 
%\end{remark}

\section{\label{sec:definitions}Definitions of types of optical elements and properties of light (beam)}
Given optical element (sample) transmits light in specific way. Standard way of classifying optical elements is based on measuring properties of photon beam before and after an interaction with given optical element (or using a sequence of elements). In \cref{sec:definition_optical_elements_types_transmission} one can find definitions of polarized and unpolarized beam and definitions of two types of polarization sensitive devices: linear polarizer and Faraday rotator.

Another way how to define various types of optical elements and properties of light beam is using probability (density) functions $\rhoSpol{}$, $\ProbTi{}$ and $\rhoCi{}$. This method is discussed in \cref{sec:definition_optical_elements_types_probability}. The first function characterizes property of photon beam and can be, therefore, used for definition of (un)polarized beam, see \cref{sec:definition_polarized_light_prob}. The other two functions can be used for definitions of various types of optical elements (such as linear polarizer and Faraday rotator), see \cref{sec:definition_optical_elements_prob}. Definitions concerning interaction of light (beam) with polarization sensitive elements are in \cref{sec:definition_interaction_prob}.

Given optical element defined (identified) with the help of the first approach should be equivalently and consistently defined using the second approach. Determination of functions $\ProbTi{}$ and $\rhoCi{}$ on the basis of experimental data basically requires experimental data needed to define the element using the first approach. Some of the definitions bellow can be specified more precisely, if needed. For our purposes it is sufficient to show mainly the basic idea behind each definition. 

\subsection{\label{sec:definition_optical_elements_types_transmission}Definitions based on measured properties of transmitted beam}
\subsubsection{\label{sec:definition_polarized_light}Definitions of properties of light (beam)}
\begin{definition}[Unpolarized light (beam)]
\label{def:unpolarized_light}
If a photon beam is transmitted through a polarization sensitive element and the number of transmitted photons does not depend on the rotation of the element then the light is unpolarized.
\end{definition}

\begin{definition}[Polarized light (beam)]
\label{def:polarized_light}
If light is not unpolarized, see \cref{def:unpolarized_light}, then it is polarized.
\end{definition}

\subsubsection{\label{sec:definition_optical_elements}Definitions of different types of optical elements}
\begin{definition}[Ideal identical linear polarizers]
\label{def:ideal_identical_linear_polarizers}
Ideal identical linear polarizers are polarization sensitive elements which transmit photons according to \cref{eq:N_i_over_N_0_ideal_polarizer} when initial photon beam is unpolarized (i.e., they behave according to Malus's law).
\end{definition}

\begin{definition}[(Real) linear polarizer]
An optical element transmitting light similarly as ideal linear polarizer (see also \cref{sec:real_polarizers}).
\end{definition}

\begin{definition}[Faraday rotator]
\label{def:faraday_rotator}
Let us consider unpolarized beam passing through sequence of a linear polarizer, an optical element and another linear polarizer. Number of photons transmitted through this sequence, divided by initial number of photons, can be measured as a function of the rotation of the second linear polarizer (the first one having fixed angle of rotation). This quantity can be measured with and without magnetic-field applied to the unknown optical element. If the two measured quantities have the same dependence on the angle of the rotation, but are shifted by an angle, and the shift depends on the parallel component of the magnetic field, then the unknown element is Faraday rotator (also called Faraday-effect based device).
\end{definition}
\begin{remark}
Measurement of this type is closely related to measurement of the Verdet constant and can be found in \cite{Vojna2019}. The angular shift is clearly visible in fig.~3 in \cite{Vojna2019}.
\end{remark}

\subsection{\label{sec:definition_optical_elements_types_probability}Definitions based on probability (density) functions}
\subsubsection{\label{sec:definition_polarized_light_prob}Definitions of properties of light (beam)}
Probability density function $\rhoSpol{}(\polarization)$ in dependence on polarization angle $\polarization$ characterizes distribution of polarization states of photons when they pass through given control surface. 

\begin{definition}[Unpolarized light (beam)]
\label{def:unpolarized_light_prob}
If $\rhoSpol{}(\polarization)$ does not depend on $\polarization$ then the photon beam passing through the control surface is unpolarized. I.e., photon polarization angles $\polarization$ of the photons are distributed uniformly and the distribution is normalized to 1 (see \cref{eq:rhoS_norm_pol}), it implies
\begin{equation}
		\rhoSpol{}(\polarization) = \frac{1}{2\pi} \, . \label{eq:rhoS_uniform}
\end{equation}
(full angle in radians is $2\pi$). 
\end{definition}

\begin{definition}[Polarized light (beam)]
\label{def:polarized_light_prob}
If $\rhoSpol{}(\polarization)$ depends on $\polarization$ then the photon beam passing through the control surface is polarized.
\end{definition}
\begin{remark}
\Cref{def:unpolarized_light_prob,def:polarized_light_prob} imply that if a photon beam passing through a control surface is not polarized then it is unpolarized (and vice versa).
\end{remark}

\subsubsection{\label{sec:definition_optical_elements_prob}Definitions of different types of optical elements}
Given optical element is characterized by functions $\ProbTi{}$ and $\rhoCi{}$ ($\ProbTidentityargpol$ and $\rhoCargpol$ in the case of polarization sensitive element). They can be determined on the basis of experimental data. These two functions can be, therefore, used for distinguishing various types of optical elements. Let us define, e.g., linear polarizer and Faraday rotator. 

\begin{definition}[Linear polarizer]
\label{def:linear_polarizer_prob}
A linear polarizer is an optical element which may or may not change polarization angle $\polarization[i=\text{in}]$ of an incoming photon such that the possible directions of polarization of the outgoing photon (specified by the polarization angle $\polarization[i=\text{out}]$) are predominantly parallel to an axis (called axis of the linear polarizer). I.e., probability density function $\rhoCargpol$ corresponding to this optical element has a peak at the position of the axis for any fixed value of $\polarization[i=\text{in}]$ and rotation of the axis $\rotangle$. If the outgoing photons have all the same value of $\polarization[i=\text{out}]$ then $\rhoCargpol$ is given by a delta function. One may also assume that probability function $\ProbTidentityargpol$ corresponding to this element depends on $\polarization[i=\text{in}]$ and the rotation of the axis \rotangle.
\end{definition}

\begin{definition}[Faraday rotator]
\label{def:faraday_rotator_prob}
A Faraday rotator (or Faraday-effect-based device) is an optical element having probability function $\ProbTidentityargpol$ which does not depend on polarization state of an incoming photon $\polarization[i=\text{in}]$ (it depends on parameters such as the length of the medium, its temperature, component of magnetic field applied to it and being parallel to the direction of the beam, ...). Moreover, the difference of polarization angle of an outgoing photon $\polarization[i=\text{out}]$ and the polarization angle $\polarization[i=\text{in}]$ of the incoming photon is the same (resp.~roughly the same) for each transmitted photon independently on the value of $\polarization[i=\text{in}]$. I.e., probability density function $\rhoCargpol$ as a function of the difference $\polarization[i=\text{in}]$ - $\polarization[i=\text{out}]$ is given by a delta function (resp.~by a function having a significant peak). 
\end{definition}

%\begin{remark}
%The functions $\ProbTi{}$ and $\rhoCi{}$ characterizing given optical element are, however, not a priory known and must be determined on the basis of experimental data represented by measured relative photon numbers transmitted through a sequence of optical elements (as it will be done in the next section). In general, the determined functions $\ProbTi{}$ and $\rhoCi{}$ of given optical element may or may not correspond to a linear polarizer. %; it may not be even possible to determine them under given set of assumptions (which implies that the assumptions must be modified).
%\end{remark}

%In some cases deviations which look small at may be in some cases responsible for various "unexpected" phenomena if not taken into account.

\subsubsection{\label{sec:definition_interaction_prob}Definitions of interaction of light with optical elements}
\begin{definition}[Polarizing and depolarizing transmission]
\label{def:polarizing_effect}
If $\rhoSpol{i+1}$ differs from uniform distribution more than $\rhoSpol{i}$ then the transmission is polarizing. If $\rhoSargipol{i+1}$ differs from uniform distribution less than $\rhoSargipol{i}$ then the transmission is depolarizing. 
\end{definition}
\begin{remark}
The difference of the two distributions in \cref{def:polarizing_effect} can be quantified, e.g., with the help of the second moments of the distributions. If the second moment of $\rhoSargipol{i+1}$ is lower (resp.~higher) than the second moment of $\rhoSargipol{i}$, then the transmission is polarizing (resp.~depolarizing). Or another rule can be used, if one of the moments is not finite. The second moments of the distributions may not be convenient characterization of "(de)polarization" if one of the distribution has more than one significant peak.
\end{remark}
\begin{remark}
$\frac{N_{i+1}(\vec{\rotangle})}{N_0}\rhoSargipol{i+1}$ characterizing density of photon polarization states behind $i$-th polarization sensitive element can be calculated on the basis of 3 functions $\frac{N_i(\vec{\rotangle})}{N_0}\rhoSargipol{i}$, $\ProbTiargipol$ and $\rhoCiargipol$, see \cref{eq:dos_ip_over_N0_case_1}. The output density of states depends on the input density of states and the properties of the element. Therefore, it is possible that an element can have polarizing or depolarizing effect on the beam in dependence on the input probability density function $\rhoSargipol{i}$. These two effects can be distinguished by compering functions $\rhoSargipol{i}$ and $\rhoSargipol{i+1}$. 
\end{remark}

\section{\label{sec:data_analysis}Example of analysis of data using IM process - 3 polarizers experiment}
In the previous sections the probabilistic and corpuscular theoretical description suitable for describing optical phenomena, especially those related to polarization, has been explained. In the following an example data of relative photon numbers $N_{i}(\vec{\rotangle})/N_{0}$ ($i=1,2,3$) measured as explained in \cref{sec:measurement_general} will be analyzed with the help of stochastic IM process adapted for description of polarization of light, see \cref{sec:model}. 

Choice of data sample is discussed in \cref{sec:data_example}. The analysis will be done with the help of the general guidelines summarized in \cref{process-stc-sec:data_analysis_guidelines} in \cite{Prochazka2022statphys_process_stc}. Similar analysis has never been done in the context of optics until now. We will be, therefore, interested mainly in the concept and possibilities of this new kind of analysis. We will not focus too much on numerical details. The analysis has helped to identify several open questions related to determination of functions $\ProbTi{}$ and $\rhoCi{}$ characterizing linear polarizers, they are discussed in \cref{sec:data_analysis_open_questions}.

\subsection{\label{sec:data_example}Example data - 3 ideal identical linear polarizers}
%\subsection{\label{sec:data_3_ideal_identical_elements}Relative number of transmitted photons}
Relative photon numbers $N_{i}(\vec{\rotangle})/N_{0}$ corresponding to transmission of light through one and two linear polarizers in dependence on their rotation angles are commonly measured (e.g., to compare the measured intensities with the Malus's law). However, it seems that there are no publicly available experimental data of the relative beam intensities (number of transmitted photons) measured behind one, two and three linear polarizers in dependence on the rotations of the polarizers. In \cite{Krasa1993} one can find interesting experimental results concerning transmission of light through three linear polarizers, but not the measured values of all the relative numbers of transmitted photons $N_{i}(\vec{\rotangle})/N_{0}$ needed for our analysis.

Therefore, let us take (for the sake of simplicity) the dependences given by \cref{eq:N_i_over_N_0_ideal_polarizer} for $\ntrans=3$ corresponding to sequence of 3 ideal identical linear polarizers as an input for our analysis of data. It will be assumed that photons in the beam have the same energy, i.e., that \cref{eq:relation_N_to_I_i_to_0} holds. These dependences will be used in our further considerations as an example of measured data. In the following we will not focus on possible differences existing between the relative photon numbers corresponding to real and ideal polarizers (see \cref{sec:real_polarizers}). This well-defined example of measured data will be analyzed with the aim to determine probabilistic (statistical) characteristics of a photon transmitted through the sequence of linear polarizers.

\subsection{\label{sec:analysis_initial_exploration}Initial exploration of data}
The data in \cref{sec:data_example} correspond to a sequence of identical polarization sensitive elements (one can assume that the ideal polarizers are identical). Therefore, one can try to describe them with the help of stochastic process given by \cref{def:process_polarizers_case_3}, see also \cref{sec:model_case_3} for key formulae corresponding to this process. 

Experimental data often reveal \emph{symmetries}. This is also the case of the relative photon numbers $N_{i}(\vec{\rotangle})/N_{0}$ ($i=1,..,M$) in dependence on the orientation of the axes of the linear polarizers given by \cref{eq:N_i_over_N_0_ideal_polarizer}. The ratio $N_{1}(\vec{\rotangle})/N_{0}$ does not depend on the orientation of the axis of the first polarizer; and the ratios $N_{2}(\vec{\rotangle})/N_{0}$ and $N_{3}(\vec{\rotangle})/N_{0}$ are periodic due to the cosine-squared function in \cref{eq:N_i_over_N_0_igt1_ideal_polarizer}. One may define the following function
\begin{subnumcases}{\sym(x) = \label{eq:sym}}
\pi - y &\text{if $\frac{\pi}{2} < y <= \pi$ }\\
y       &\text{otherwise}
\end{subnumcases}
where $y = (\abs{x} \mod \pi )$. This function will help to reflect the symmetries. The function is even ($\sym(x)=\sym(-x)$) and is plotted in \cref{fig:sym}.
\begin{figure}%[!htb]
%\captionsetup{width=0.95\textwidth}
\captionsetup{width=0.95\columnwidth}
%\begin{minipage}[t]{.48\textwidth}
\centering
\includegraphics*[width=\columnwidth]{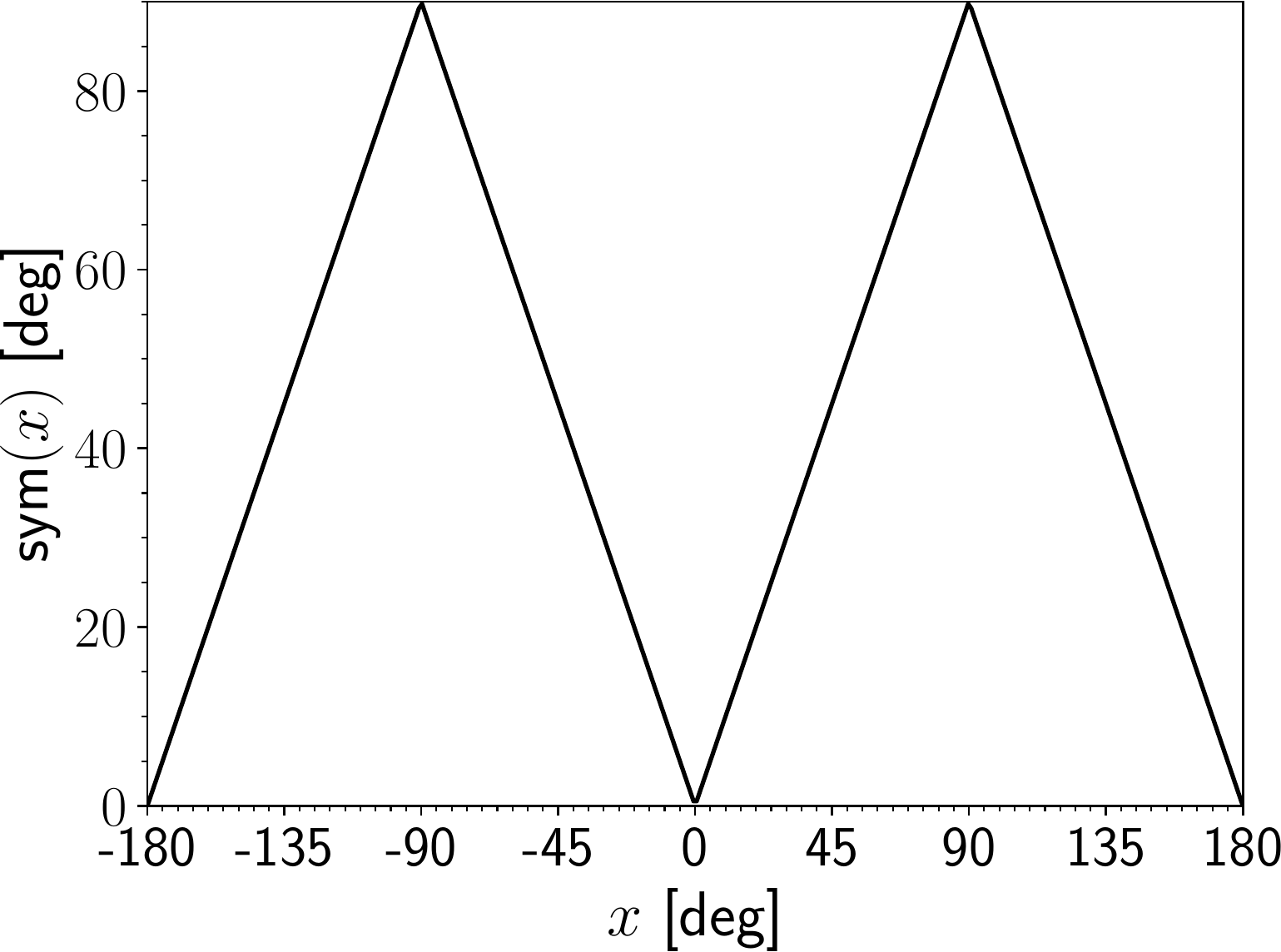}
%\caption{\label{fig:sym} Function sym given by \cref{eq:sym} in the interval from $-180\text{ deg}$ to $180\text{ deg}$.}
		\caption{\label{fig:sym} The function $\sym(x)$ given by \cref{eq:sym} in the interval from -$\pi$ rad to $\pi$ rad and having period $\pi$~rad.}
%\end{minipage}
\end{figure}

\subsection{Check of applicability of the probabilistic model to input data (consistency)}
Following the guidelines in \cref{process-stc-sec:data_analysis_guidelines_check_consistency} in \cite{Prochazka2022statphys_process_stc} one can check that the input data of photon numbers discussed in \cref{sec:data_example} satisfy the basic inequalities given by \cref{eq:Ni_to_N0_interval_chain_simplified_pol}, and the relative photon numbers $N_{i}(\vec{\rotangle})/N_{0}$ do not depend on $N_{0}$. E.g., if the relative number of photons is determined from the relative beam intensity, see \cref{eq:relation_N_to_I_i_to_0}, then \cref{as:energy_pol} can be tested experimentally for a photon beam passing through surface $\surface_i$ ($i=0,...,3$). The \cref{as:energy_pol} is assumed to be satisfied in our case (see \cref{sec:data_example}). The other assumptions (see \cref{sec:model}) and consequences of the probabilistic model must be tested indirectly.

\subsection{\label{sec:parameterizations}Parameterization of unknown functions}
According to \cref{rmk:3_functions_to_be_parameterized} the used probabilistic model contains 3 unknown functions \parameterizedFunctionspol\ which can be parameterized and determined on the basis of experimental data. Parameterizations of some a priory unknown functions in any model are typically accompanied by several additional implicit assumptions. Let us, therefore, formulate for completeness the following assumption: 
\begin{assumption}[Parameterization of unknown functions]
\label{as:parameterizations_pol}
Let functions \parameterizedFunctionspol\ be parameterized.
\end{assumption}

\subsubsection{Parameterization of probability density \texorpdfstring{$\rhoSpol{0}$}{rhoS0}}
If decrease of measured photon numbers behind the first polarizer $N_{1}(\vec{\rotangle})/N_{0}$ does not depend on the orientation of its polarization axis, see \cref{eq:N_i_over_N_0_i1_ideal_polarizer}, then one may assume that the probability density function $\rhoSargipol{0}$ does not depend on the photon polarization. I.e., the initial photon polarization states are distributed uniformly (the initial photon beam is \emph{unpolarized}, see \cref{eq:rhoS_uniform})
\begin{equation}
		\rhoSargipol{0} = \frac{1}{2\pi} \, . \label{eq:parameterization_rhoS0}
\end{equation}
In more general case it would be necessary to introduce a parameterization depending on $\polarization[i=0]$ and some free parameters; it would correspond to \emph{polarized} light.

\subsubsection{\label{sec:parameterization_ProbTpol}Parameterization of probability \texorpdfstring{$\ProbTi{}$}{PT}}
We may assume that the probability of photon transmission through one polarizer depends only on the difference of the polarizer rotation and the photon polarization ($\ProbTidentityargpol = \ProbTi{}(\polarization[i=\text{in}] - \rotangle)$). The following parameterization may be chosen
\begin{equation}
		\ProbTidentityargpol  = 1 - \frac{1 - g(\polarization[i=\text{in}],\rotangle)}{1+a_{2}g(\polarization[i=\text{in}],\rotangle)} \label{eq:parameterization_ProbTpol}
\end{equation}
where 
\begin{equation}
		g(\polarization[i=\text{in}],\rotangle) = \e{-a_0 \left(\frac{\sym(\polarization[i=\text{in}] - \rotangle)}{u_0}\right)^{a_1}}
\end{equation}
and $a_0$, $a_1$ and $a_2$ are free parameters ($u_0 = 1 \text{rad}$). The function $\ProbTi{}$ has meaning of probability; its values should be, therefore, in the interval from 0 to 1 (not necessary in the full interval) for given values of the free parameters.

\subsubsection{\label{sec:parameterization_rhoCpol}Parameterization of probability density \texorpdfstring{$\rhoCi{}$}{rhoC}}
Parameterization of the function $\rhoCargpol$ may be chosen in the form of Gaussian function
\begin{equation}
%\rho_{C,T}(\polarization[i=\text{in}], \polarization[i=\text{out}]) = \frac{\sigma \sqrt{\pi}}{2(-\erf((x_{c} - x_{max})/\sigma) + \erf((x_{c} - x_{min})/\sigma))} \exp(-(x/\sigma)^2)
%self.f_norm: float = sigma*sqrt(pi)*erf((pi/2)/sigma) # valid only if x_c==0, x_min==-pi/2, x_max==pi/2
%f: float = exp(-pow(symmetry_pi_over_4_special(x)/sigma,2))
%ret: float = f/self.f_norm
%ret /= 2 # due to the pi-symmetry (and the integration limits)
\rhoCargpol =  \frac{1}{2\sigma\sqrt{\pi}\erf\left(\frac{\pi}{2\sigma}\right)} \e{-\left(\frac{\sym(\polarization[i=\text{out}] - \rotangle)}{\sigma}\right)^2}
\label{eq:parameterization_rhoCpol}
\end{equation}
where $\sigma$ is a free parameter and $\erf(x)$ is the error function defined as
\begin{equation}
		\erf(x) = \frac{2}{\sqrt{\pi}} \int_0^{x} \e{-t^2} \text{d} t  \, .
\end{equation}
The probability density function $\rhoCi{}$ given by \cref{eq:parameterization_rhoCpol} is normalized to 1 when integrated over \polarization[i=\text{out}] (required by the normalization condition given by \cref{eq:rhoargstateC_norm}). It does not depend on the value of \polarization[i=\text{in}] which is consistent with \cref{as:rhoC_independent_of_polarization_in_pol}. It is assumed that it depends only on the difference $\polarization[i=\text{out}] - \rotangle$ (similarly as in the case of the parameterization of the $\ProbTi{}$ function, see \cref{eq:parameterization_ProbTpol}).  % (see \cref{eq:rho_C_norm_condition}). 

The parameterization of $\rhoCi{}$ given by \cref{eq:parameterization_rhoCpol} corresponds to continuous spectrum of polarization angle values $\polarization[i=\text{out}]$ centered around the value given by the rotation of the polarizer $\rotangle$. If only a single discrete value $\polarization[i=\text{out}]$ (equal to $\rotangle$) were admitted then the probability density function would be represented by corresponding delta function. The smaller the value of the free parameter $\sigma$ (i.e., the width of the corresponding peak), the closer the continuous spectrum is to the delta function.

\begin{remark}
The parameterizations of $\ProbTi{}$ and $\rhoCi{}$ given by \cref{eq:parameterization_ProbTpol,eq:parameterization_rhoCpol} correspond to the definition of linear polarizer, see \cref{def:linear_polarizer_prob}. The parameterizations of the 3 functions $\rhoSpol{0}$, $\ProbTi{}$ and $\rhoCi{}$ a priory restrict set of possible solutions which could describe measured data. The parameterizations represent additional assumptions in the probabilistic model, see \cref{as:parameterizations_pol}. The parameterizations are consistent with the assumptions of stochastic process given by \cref{def:process_polarizers_case_3}.
\end{remark}
\begin{remark}
Four free parameters have been introduced ($\sigma$, $a_0$, $a_1$ and $a_2$). Their physical meaning is not important, if we stay in the first stage of data analysis, see \cref{sec:theory_of_stochastic_processes_adventages}.
\end{remark}

\subsection{Fitting of the probabilistic model to data}
One can try to determine the parameterized functions (values of all the free parameters) in stochastic process given by \cref{def:process_polarizers_case_3} on the basis of experimental data by means of optimization techniques. The relative photon numbers $\frac{N_{i+1}(\vec{\rotangle})}{N_{0}}$ can be calculated with the help of \cref{eq:N_ip_over_N0_case_3}. The quantity $\frac{N_{i+1}(\vec{\rotangle})}{N_0}\rhoSargipol{i+1}$ can be calculated using \cref{eq:dos_ip_over_N0_case_3} or \cref{eq:dos_ip_over_N0_case_3_simplified} which is less computationally intensive task, see also \cref{rmk:simpler_numeric_2} and  \cref{process-stc-sec:computational_complexity} in \cite{Prochazka2022statphys_process_stc} for further comments related to computational complexity. Calculation of the relative photon numbers $N_{3}(\vec{\rotangle})/N_{0}$ behind the third (i.e., the last) linear polarizer in the sequence according to \cref{eq:N_ip_over_N0_case_3}, and needed for comparison to data, is computationally the most intensive task. 

Calculated relative photon numbers $N_{i}(\vec{\rotangle})/N_{0}$ ($i=1,...,M$) behind $(i-1)$-th linear polarizer can be calculated for any value of the rotation angle of the polarizer and the rotations of all the preceding polarizers, i.e., $i$ continuous parameters represented by $\vec{\rotangle}$. The rotation angles are typically measured in discrete steps and the example data, see \cref{sec:data_example}, can be considered only in discrete steps, too. However, even if some discrete angular steps are considered it may still represent far too many data points (depending on the width of the steps) already in the case of $\ntrans=3$. It is, therefore, useful to simplify fitting of the model to data as much as possible.

It is not necessary to take into account all values of $\rotangle[i=0]$ due to the symmetric dependence of the relative number of transmitted particles on the value of $\rotangle[i=0]$ (see \cref{sec:analysis_initial_exploration}). It is sufficient to consider only one value, e.g., $\rotangle[i=0]=0\text{ deg}$. This is closely related to the fact that the initial light is taken as unpolarized, see \cref{eq:parameterization_rhoS0}.

\subsection{\label{sec:results}Numerical results}
\begin{table}%[!ht]
\captionsetup{width=0.95\columnwidth}
\centering
%\resizebox{\textwidth}{!}{
\begin{tabular}{lcc}

\hline\hline
%INFO   1 p_T_i0_0       None_unit       1.692829490306e+00
%INFO   2 p_T_i0_1       None_unit       2.546216767367e+00
%INFO   3 p_T_i0_2       None_unit       4.807097727031e-01
%INFO   4 rho_C_i0_sigma None_unit       1.887329253230e-01
$a_{0}$                         &              & 1.693     \\  % p_T_i0_0
$a_{1}$                         &              & 2.546     \\  % p_T_i0_1
$a_{2}$                         &              & 0.4807    \\  % p_T_i0_2
$\sigma$                        & [rad]        & 0.1887    \\  % rho_C_i0_sigma
\hline \hline
\end{tabular}\\ %}
%\begin{minipage}[t]{1.\textwidth}
\caption{\label{tab:pars_and_quantities}The values of the free parameters of the probabilistic model determined on the basis of data using optimization techniques.}
%\end{minipage}
%\vspace{-6mm}
\end{table}

It is possible to find dependence of the parameterized functions $\ProbTi{}$ and $\rhoCi{}$ (i.e., values of the free parameters discussed in \cref{sec:parameterization_ProbTpol,sec:parameterization_rhoCpol}) which can describe the input data, under the given set of assumptions on which the stochastic process given by \cref{def:process_polarizers_case_3} is based. The solution is, however, not unique. Therefore, only one solution corresponding to the values of the free parameters in \cref{tab:pars_and_quantities} is discussed in the following. This solution corresponds to the highest value of the parameter $\sigma$ for which one can still fit well the input data. We will come back to the problem of ambiguity in \cref{sec:data_analysis_open_questions}. 

The comparison of the measured number of transmitted photons to calculated (simulated) number of transmitted photons with the help of the probabilistic model is in \cref{fig:model_data_comparison}. The model agrees well with the input data and can be further improved by using more flexible parameterizations of the functions $\ProbTi{}$ and $\rhoCi{}$. Our focus is, however, on conceptually important points and questions, as it has been already mentioned.

The probability $\ProbTi{}$ given by \cref{eq:parameterization_ProbTpol} as a function of $\polarization[i=\text{in}]$ is plotted in \cref{fig:p_T_polarization_simplified_i_0} for $\rotangle[i=0]=0\text{ deg}$. The values of this function representing probability are in the interval from 0 to 1.

%\begin{figure*}%[!htb]
%\begin{minipage}[t]{.48\textwidth}
%\centering
%		\includegraphics*[width=\textwidth]{model_xix_polhi_xix_I1_i_1_j_0_theta_0.pdf}
%\caption{\label{fig:I1_i_1_j_0_theta_0} }
%\end{minipage}
%\end{figure*}
%
%\begin{figure*}%[!htb]
%\begin{minipage}[t]{.48\textwidth}
%\centering
%		\includegraphics*[width=\textwidth]{model_xix_polhi_xix_I2_fix_i_2_j_1_theta_0_0.pdf}
%\caption{\label{fig:I2_fix_i_2_j_1_theta_0_0} }
%\end{minipage}
%\end{figure*}
%
%\begin{figure*}%[!htb]
%\begin{minipage}[t]{.48\textwidth}
%\centering
%		\includegraphics*[width=\textwidth]{model_xix_polhi_xix_I3_fix_i_3_j_1_theta_0_0_0.pdf}
%\caption{\label{fig:I3_fix_i_3_j_1_theta_0_0_0} }
%\end{minipage}
%\end{figure*}
%
%\begin{figure*}%[!htb]
%\begin{minipage}[t]{.48\textwidth}
%\centering
%		\includegraphics*[width=\textwidth]{model_xix_polhi_xix_I3_fix_i_3_j_1_theta_0_0_1.5707963267948966.pdf}
%\caption{\label{fig:I3_fix_i_3_j_1_theta_0_0_1.5707963267948966} }
%\end{minipage}
%\end{figure*}

\begin{figure}%[!htb]
%\captionsetup{width=0.95\textwidth}
\captionsetup{width=0.95\columnwidth}
%\captionsetup{width=0.7\textwidth}
%\begin{minipage}[t]{.48\textwidth}
\centering
\includegraphics*[width=\columnwidth]{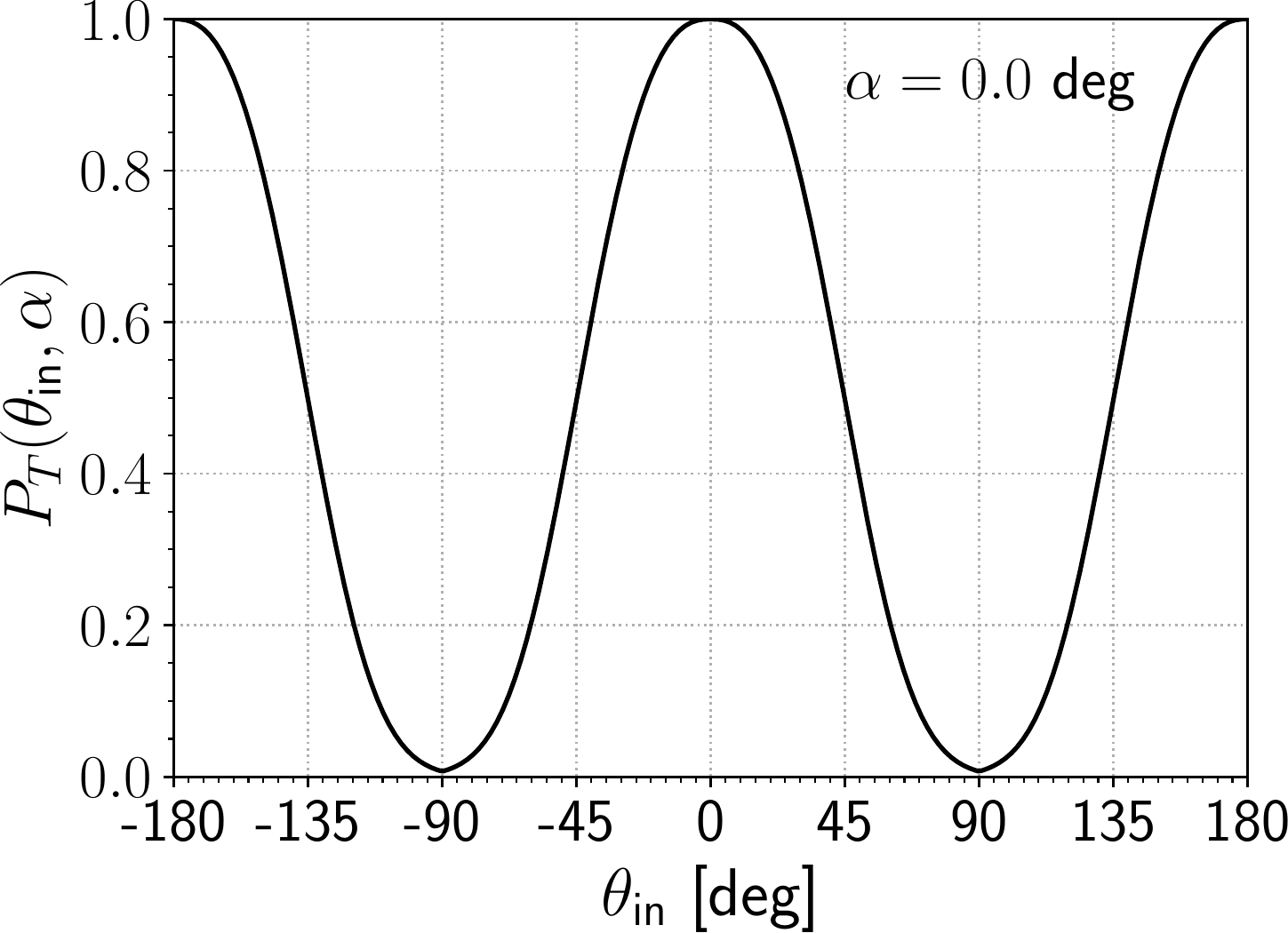}
		\caption{\label{fig:p_T_polarization_simplified_i_0}The probability of transmission of a photon through one polarizer having the orientation of its axis $\rotangle=0\text{ deg}$ as a function of the polarization of the incoming photon.}
%\end{minipage}
%\quad
%\vspace{-3mm}
\end{figure}

\begin{figure}%[!htb]
%\captionsetup{width=0.95\textwidth}
\captionsetup{width=0.95\columnwidth}
%\captionsetup{width=0.7\textwidth}
%\begin{minipage}[t]{.48\textwidth}
\centering
\includegraphics*[width=\columnwidth]{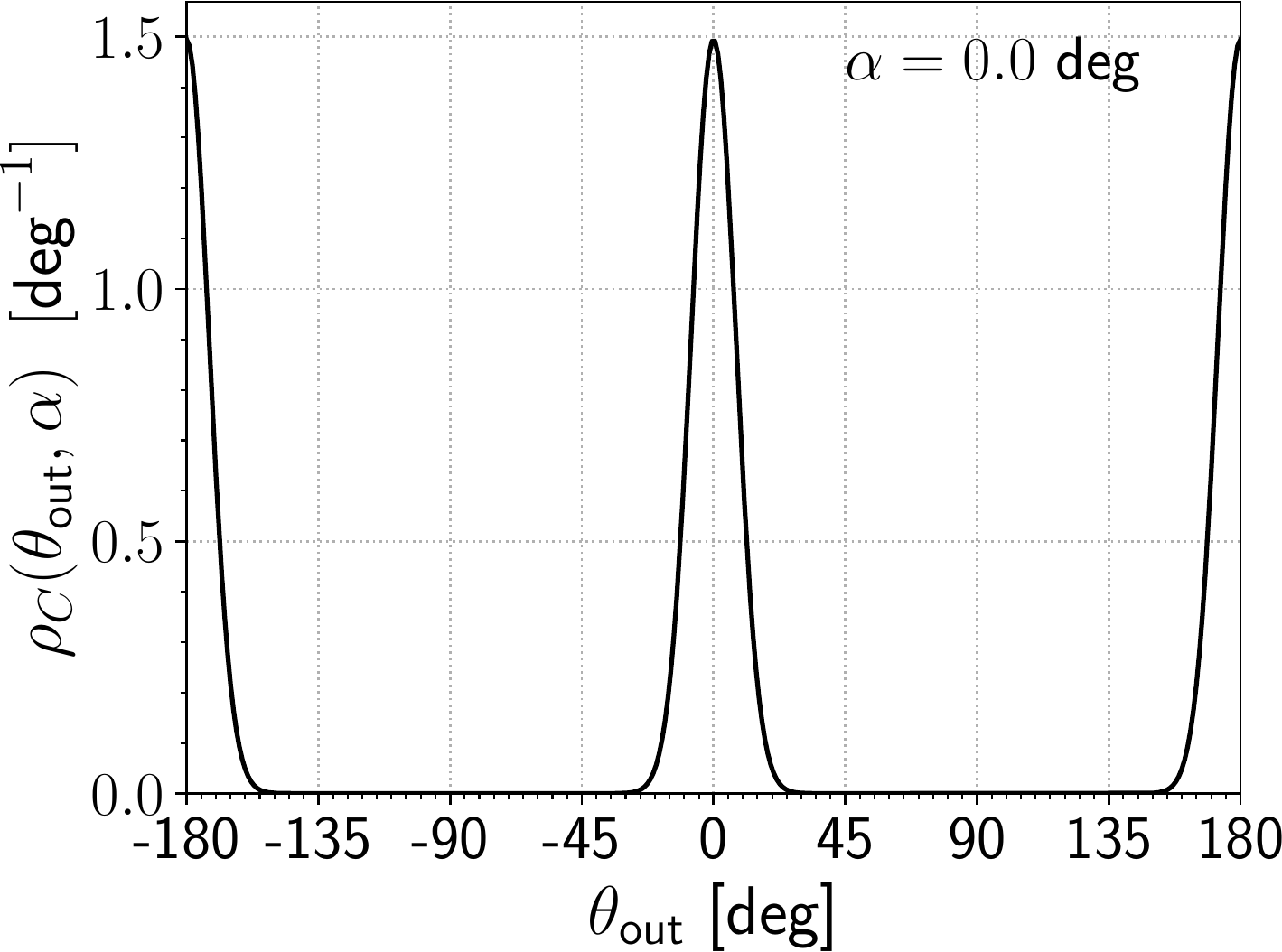}
\caption{\label{fig:rho_C_fix_i_0_rotangle_0}The probability density function of change of photon polarization of a photon transmitted through one polarizer having the orientation of its axis $\rotangle=0\text{ deg}$ as a function of the polarization of the outgoing photon.}
\vspace{5mm}
%\end{minipage}
\end{figure}

%\vspace{-3mm} 
%\end{figure}
%\clearpage
\newcommand{\commoncaption}[1]{The comparison of the input data of relative photon numbers $N_{i}(\vec{\rotangle})/N_0$ ($i={#1}$) corresponding to transmission of unpolarized light through 3 ideal identical linear polarizers (Malus's law) with the probabilistic model (see \cref{sec:model_case_3}) fitted to the data for several combinations of rotations of the polarizes $\vec{\rotangle}$ (in degrees). Blue lines - the input data given by \cref{eq:N_i_over_N_0_ideal_polarizer}. Orange lines - the probabilistic model, see \cref{eq:N_ip_over_N0_case_3,eq:dos_ip_over_N0_case_3}.}
\begin{figure*}%[!htb]
\centering
\captionsetup{width=\textwidth}
%\begin{subfigure}[t]{.6\textwidth}
\begin{subfigure}[t]{.48\textwidth}
\centering
\includegraphics*[width=\textwidth]{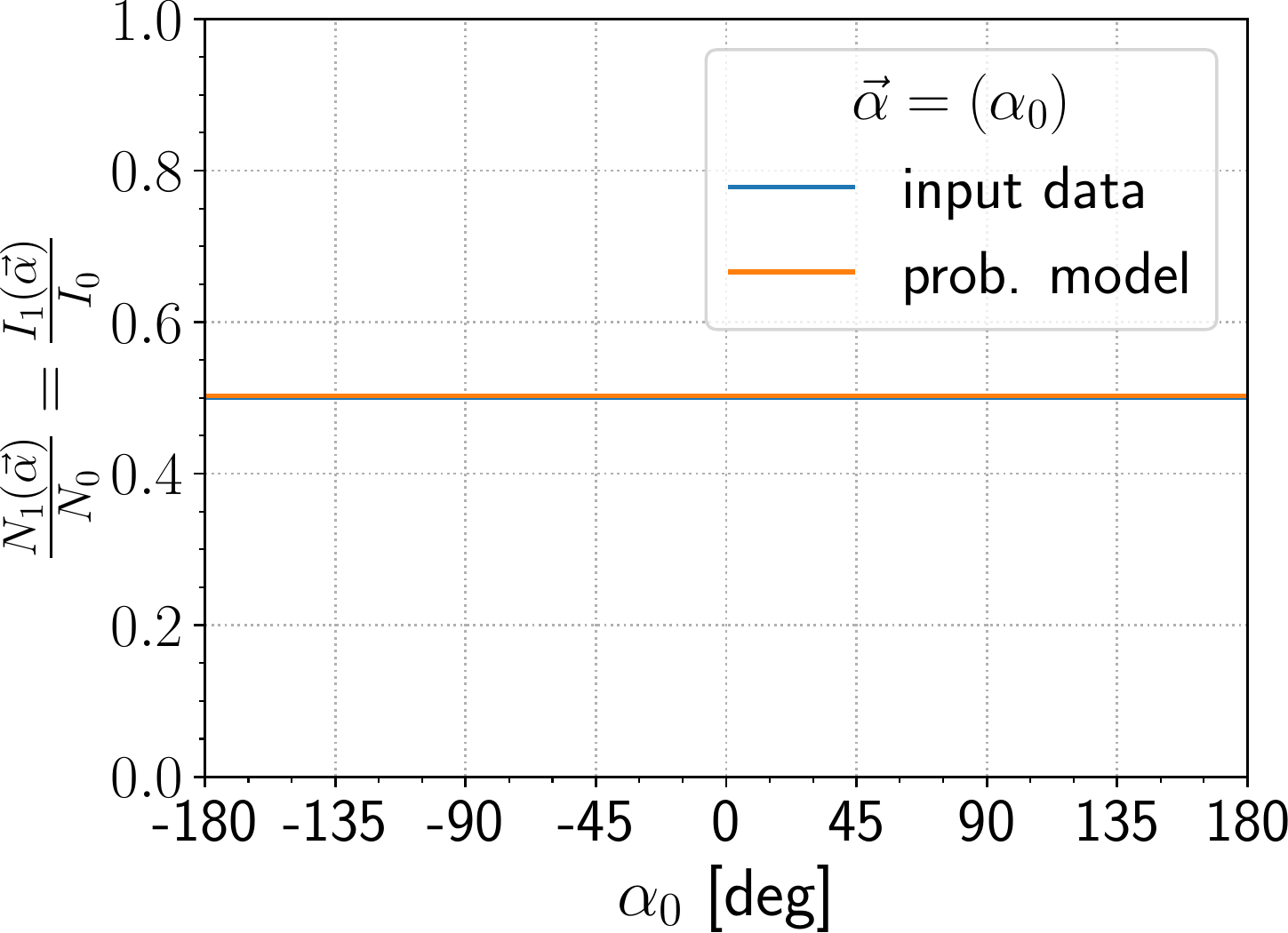}
\caption{\label{fig:N1_i_1_j_0_theta_0} $N_{1}(\vec{\rotangle})/N_0$ as a function of $\rotangle[i=0]$.}
\end{subfigure}
\quad
%\begin{subfigure}[t]{.6\textwidth}
\begin{subfigure}[t]{.48\textwidth}
\centering
\includegraphics*[width=\textwidth]{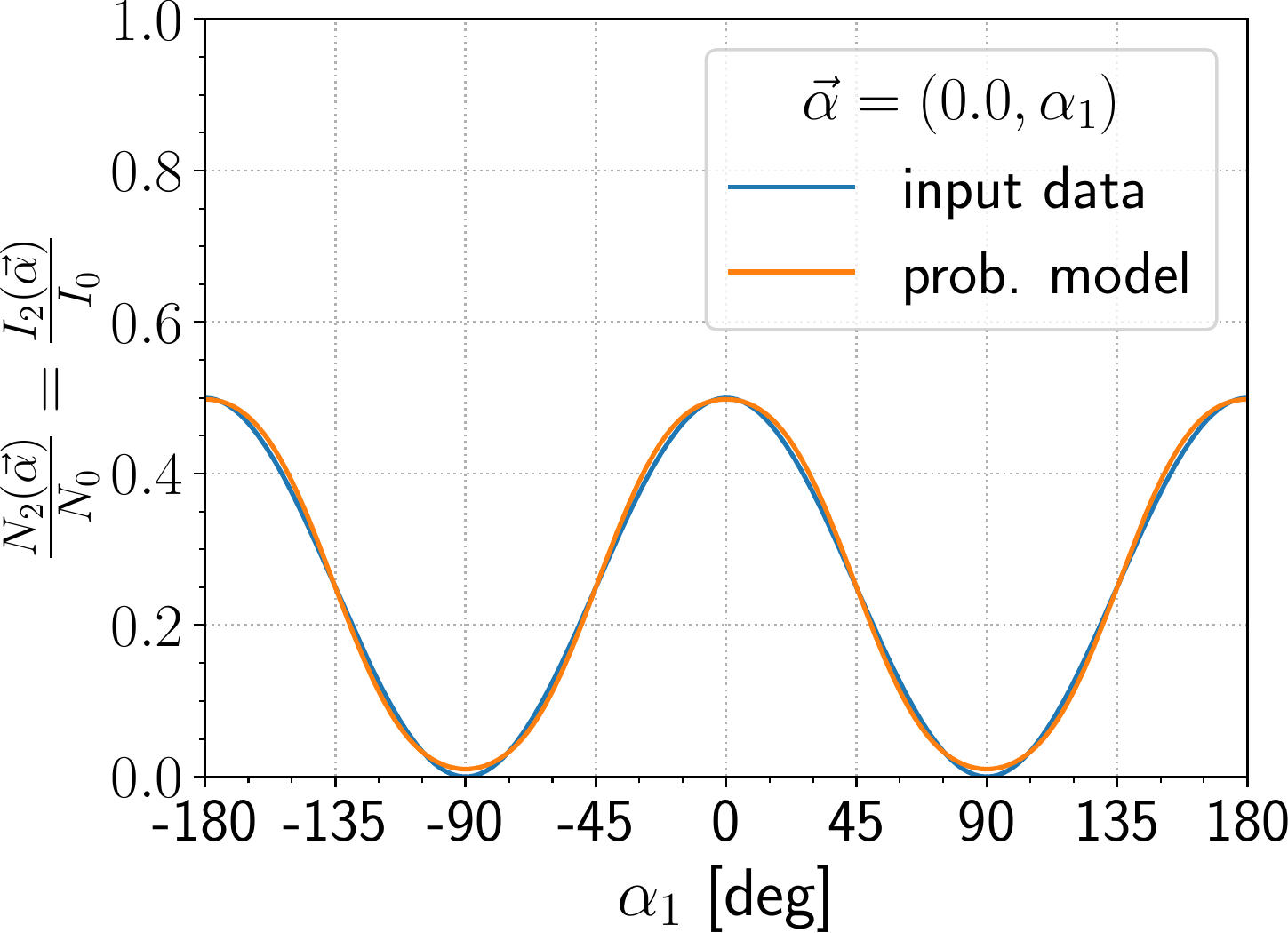}
\caption{\label{fig:N2_fix_i_2_j_1_theta_0_0} $N_{2}(\vec{\rotangle})/N_0$ as a function of $\rotangle[i=1]$ for fixed value of $\rotangle[i=0]=0\text{ deg}$.}
\end{subfigure}
%\end{figure}
%\begin{figure}%[!htb]
\centering
%\captionsetup{width=\columnwidth}
%\begin{subfigure}[t]{.6\textwidth}
\begin{subfigure}[t]{.48\textwidth}
\centering
		\includegraphics*[width=\textwidth]{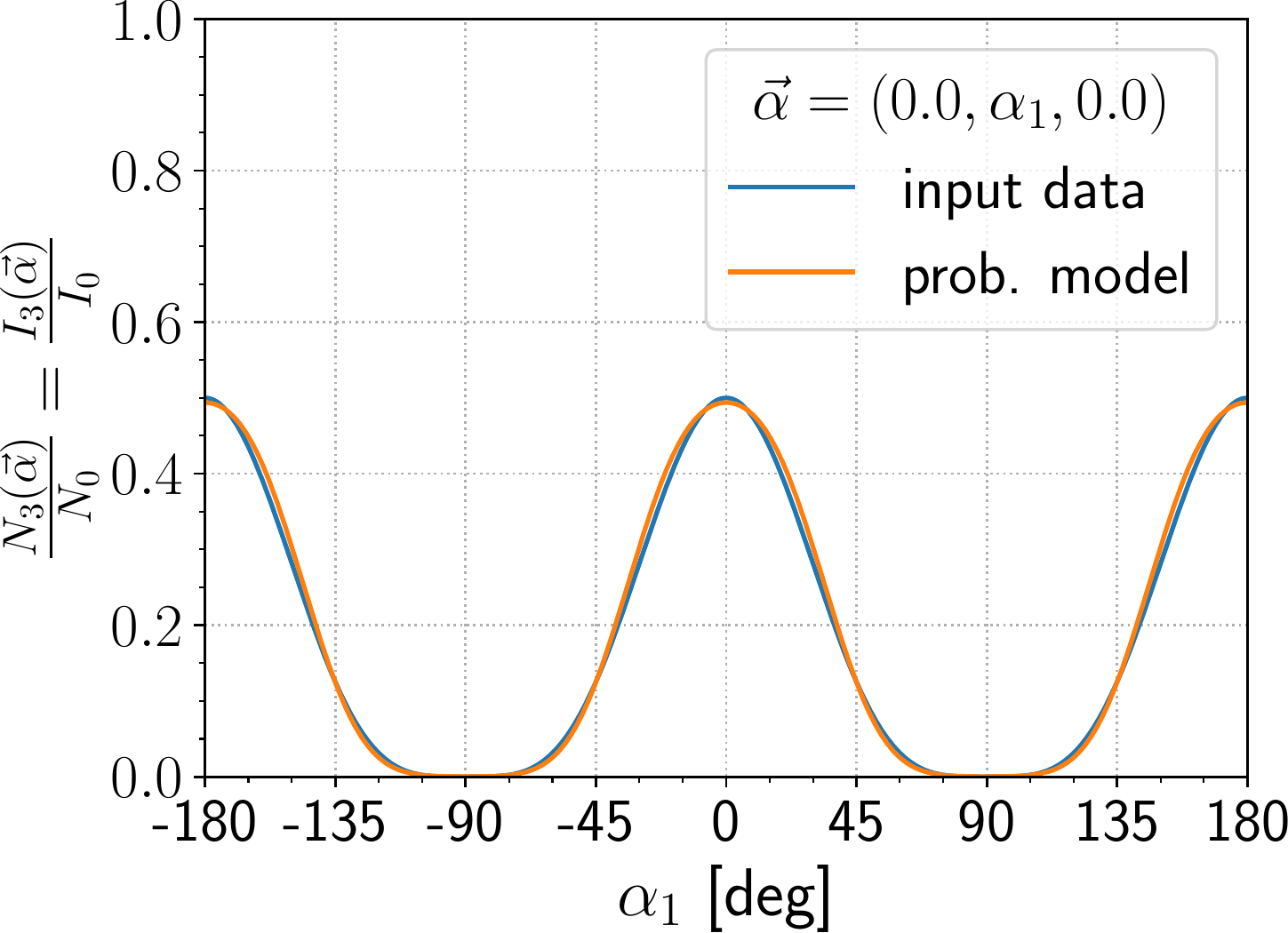}
\caption{\label{fig:N3_fix_i_3_j_1_theta_0_0_0} $N_{3}(\vec{\rotangle})/N_0$ as a function of $\rotangle[i=1]$ for fixed values of $\rotangle[i=0]=0\text{ deg}$ and $\rotangle[i=2]=0 \text{ deg}$.}
\end{subfigure}
\quad
%\begin{subfigure}[t]{.6\textwidth}
\begin{subfigure}[t]{.48\textwidth}
\centering
\includegraphics*[width=\textwidth]{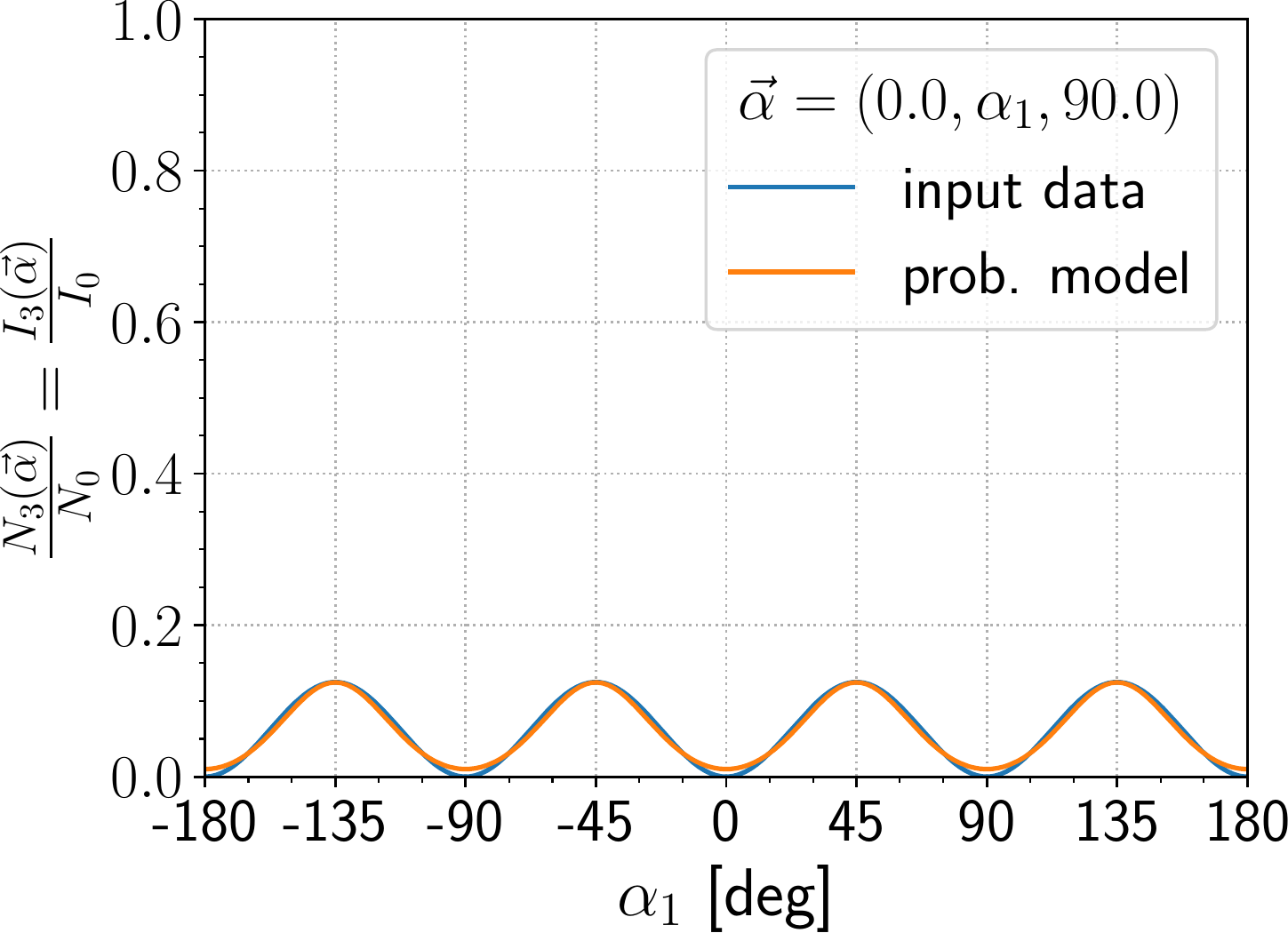}
\caption{\label{fig:N3_fix_i_3_j_1_theta_0_0_1.5707963267948966} $N_{3}(\vec{\rotangle})/N_0$ as a function of $\rotangle[i=1]$ for fixed values of $\rotangle[i=0]=0\text{ deg}$ and $\rotangle[i=2]=90 \text{ deg}$.}
\end{subfigure}
%\caption{\label{fig:model_data_comparison2} %\commoncaption{3}
\caption{\label{fig:model_data_comparison} \commoncaption{1,2} 
%(\subref{fig:N1_i_1_j_0_theta_0}) - $N_{1}(\vec{\rotangle})/N_0$ as a function of $\rotangle[i=0]$. 
%(\subref{fig:N2_fix_i_2_j_1_theta_0_0}) - $N_{2}(\vec{\rotangle})/N_0$ as a function of $\rotangle[i=1]$ for fixed value of $\rotangle[i=0]=0\text{ deg}$. 
%%Similar picture to \cref{fig:model_data_comparison1} but for $i=3$. 
%(\subref{fig:N3_fix_i_3_j_1_theta_0_0_0}) - $N_{3}(\vec{\rotangle})/N_0$ as a function of $\rotangle[i=1]$ for fixed values of $\rotangle[i=0]=0\text{ deg}$ and $\rotangle[i=2]=0 \text{ deg}$. 
%(\subref{fig:N3_fix_i_3_j_1_theta_0_0_1.5707963267948966}) - $N_{3}(\vec{\rotangle})/N_0$ as a function of $\rotangle[i=1]$ for fixed values of $\rotangle[i=0]=0\text{ deg}$ and $\rotangle[i=2]=90 \text{ deg}$. 
}
\end{figure*}
%\end{figure}

\begin{figure*}%[!htb]
\begin{minipage}[t]{.96\textwidth}
\centering
\captionsetup{width=\columnwidth}
\includegraphics*[width=\textwidth]{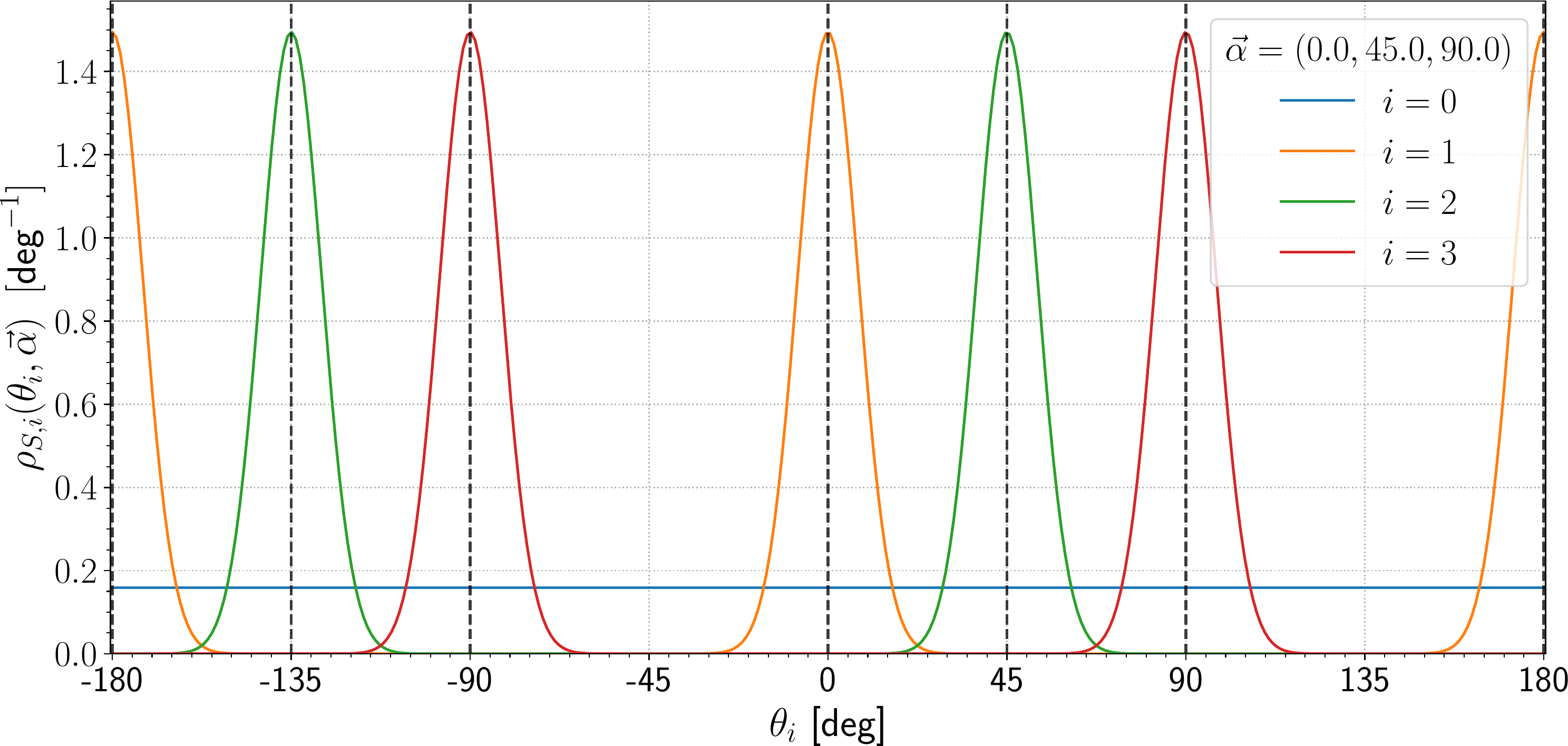}
\caption{\label{fig:rho_S_polarization_group_theta}The probability density functions $\rhoSargipol{i}$ as a function of photon polarization angle $\polarization[i=i]$ if the 3 ideal identical linear polarizers are rotated by $\vec{\rotangle} = (0.0, 45.0 \text{ deg}, 90.0 \text{ deg})$. In the case of $i=0$ the function represents initial uniform polarization of photons in the beam given by \cref{eq:parameterization_rhoS0}, in the case of $i=1,2,3$ it describes polarization of the beam transmitted through the one, two or three polarizers. The orientations of the axes of the polarizers are specified by $\vec{\rotangle}$, see the vertical dashed lines. The position of the peaks correspond to the rotation of the axis of the polarizers.}
\end{minipage}
\end{figure*}
\begin{figure*}%[!htb]
\begin{minipage}[t]{.96\textwidth}
\centering
\captionsetup{width=\columnwidth}
\includegraphics*[width=\textwidth]{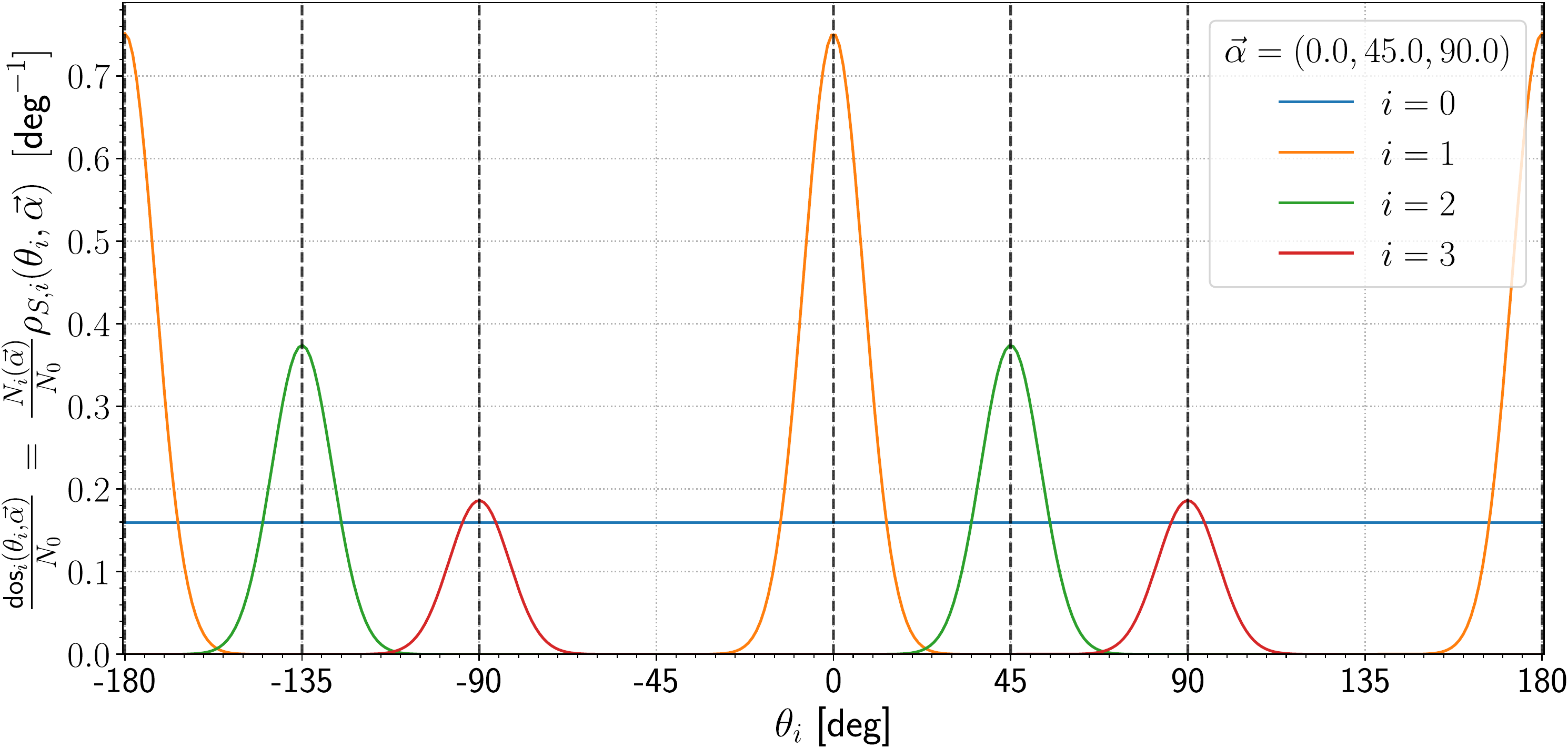}
		\caption{\label{fig:dos_polarization_group_theta}Similar picture to \cref{fig:rho_S_polarization_group_theta} but the probability density functions $\rhoSargipol{i}$ multiplied by $N_{i}(\vec{\rotangle})/N_0$ are plotted, these expressions are equal to relative densities of states $\dosargipol{i} / N_0$, see \cref{eq:def_rel_flux_EventSXw_i_rho_simplified}.}
\end{minipage}
\end{figure*}

%\begin{figure*}%[!htb]
%\begin{minipage}[t]{.48\textwidth}
%\centering
%\includegraphics*[width=\textwidth]{model_xix_polhi_xix_rho_I_lambda_group_rotangle_0_0.7853981633974483_1.5707963267948966.pdf}
%\caption{\label{fig:rho_I_lambda_group_theta} }
%\end{minipage}
%\end{figure*}

The probability density function $\rhoCi{}$ given by \cref{eq:parameterization_rhoCpol} as a function of $\polarization[i=\text{out}]$ is plotted in \cref{fig:rho_C_fix_i_0_rotangle_0} for $\rotangle[i=0]=0\text{ deg}$. This function is normalized to unity, see \cref{eq:rhoargstateC_norm}, and the values are not in the interval from 0 to 1 (values of a probability density functions may or may not be in the interval from 0 to 1). 

%Let us denote the particle fluxes $\dosargipol{i}$ as $J_{\polarization,i}(\polarization, \vec{\rotangle})$ where we have put, for the sake of their comparison, $\polarization=\polarization[i=i]$ for $i=0,1,2,3$. 

The probability density functions $\rhoSargipol{i}$ as a function of photon polarization angle $\polarization[i=i]$ ($i=0,...,3$) are plotted in \cref{fig:rho_S_polarization_group_theta} for one fixed combination of rotations of the polarizes: $\vec{\rotangle} = (0.0, 45.0 \text{ deg}, 90.0 \text{ deg})$. The blue line in \cref{fig:rho_S_polarization_group_theta} corresponds to \cref{eq:parameterization_rhoS0}, i.e., to the assumed initial uniform distribution of photon polarization (unpolarized light). Unpolarized light transmitted through one or more linear polarizers is not uniform, see \cref{fig:rho_S_polarization_group_theta} (and \cref{eq:parameterization_rhoCpol}). The probability density functions $\rhoSargipol{i}$ ($i=1,...,M$) have the same shapes (differing only in position). It is the consequence of the used assumptions, see \cref{rmk:same_shapes}.

The probability density function $\rhoSargipol{i}$ multiplied by $N_{i}(\vec{\rotangle})/N_0$ ($i=0,..,3$) for the same rotations of the polarizes $\vec{\rotangle} = (0.0, 45.0 \text{ deg}, 90.0 \text{ deg})$ is plotted in \cref{fig:dos_polarization_group_theta} similarly as the functions $\rhoSargipol{i}$ are plotted in \cref{fig:rho_S_polarization_group_theta}. The function $(N_{i}(\vec{\rotangle})/N_0) \; \rhoSargipol{i}$ is equal to $\dosargipol{i}/N_0$ (see \cref{eq:def_rel_flux_EventSXw_i_rho_simplified}). This function expresses not only the density of polarization angles but also the decrease of the photon numbers behind $(i-1)$-th polarizer (if $i=1,2,3$), see \cref{eq:def_rel_flux_EventSXw_i_rho_simplified,eq:Ni_over_N0_given_by_integration_simplifed}. For the value of $\vec{\rotangle}$ it holds $N_1(\vec{\rotangle})/N_0=0.5$, $N_2(\vec{\rotangle})/N_0=0.25$, $N_3(\vec{\rotangle})/N_0=0.12$ (according to both the input data and the probabilistic model fitted to the data). This decrease of number of transmitted photons with increasing number of polarizers is visible in \cref{fig:dos_polarization_group_theta}, too.

%The numerical results show that it is possible to explain the famous 3 polarizers experiment with the help of the probabilistic model and to determine the probability (density) functions characterizing the transmission of light through the polarizers. The probability function $\ProbTi{}$, the probability density function $\rhoCi{}$ and the probability density functions $\rhoSargipol{i}$ multiplied by $N_{i}(\vec{\rotangle})/N_0$ ($i=0,1,2,3$) determined on the basis of the analysis of experimental data with the help of the probabilistic model characterize probabilistic transmission of individual photons through the sequence of 3 ideal identical polarizers. The functions $\ProbTi{}$ and $\rhoCi{}$ characterize interactions of polarized photons with polarizers (matter). The probability density functions $\rhoSargipol{i}$ multiplied by $N_{i}(\vec{\rotangle})/N_0$ ($i=0,1,2,3$) characterize property of the photon beam before and after being transmitted through one, two or three polarizers. 

The numerical results clearly show that it is possible to explain the famous 3 polarizers experiment using the idea of quanta of light (photons) and the theory of stochastic processes. It is possible to determine, with the help of the stochastic process given by \cref{def:process_polarizers_case_3}, the probability (density) functions characterizing the transmission of light through the polarizers mentioned in \cref{sec:introduction}. The open questions related to the ambiguity of the determination of the parameterized functions will be discussed in \cref{sec:data_analysis_open_questions}.

The numerical results presented in this section have been obtained with the help of ROOT \cite{Brun1997} and Matplotlib \cite{Hunter2007}.

\clearpage
\FloatBarrier
\subsection{\label{sec:data_analysis_open_questions}Open questions}
As it has been mentioned, the determined dependence of the parameterized functions $\ProbTi{}$ and $\rhoCi{}$ shown in \cref{sec:results} can explain the measured (input) number of transmitted photons through 3 ideal identical linear polarizers, but it is not unique. There are several open questions concerning determination of the functions:
\begin{openquestions}
%\item{\textit{Spectrum of polarization angle values $\polarization[i=\text{out}]$ of photon transmitted through polarizer}\\
\item{\label{oq:rhoCpol_outgoing_polarization}\textit{Dependence of $\rhoCi{}$ on $\polarization[i=\text{out}]$}\\
There is strong ambiguity on determination of the value of the parameter $\sigma$, i.e., the width of the peaks characterizing the function $\rhoCi{}$, see \cref{fig:rho_C_fix_i_0_rotangle_0}. The input data could be fitted similarly well with value of $\sigma$ being in the interval from zero to the value specified in \cref{tab:pars_and_quantities}. % (see also the comments in \cref{sec:parameterization_rhoCpol}). %The difficulty in precise determination of the value of $\sigma$ is closely related to the fact that the parameterizations of $\ProbTi{}$ and $\rhoCi{}$ depend on the difference $\polarization[i=\text{in}] - \rotangle$ and the analyzed relative beam intensities as function in dependence on $\rotangle$ is close to zero in quite broad interval around the value of $\rotangle$, see \cref{fig:p_T_polarization_simplified_i_0}) (in this region the sensitivity on determination of $\polarization[i=\text{out}]$ of outgoing photon is low, see \cref{eq:transition_dos_over_N0}).
}
\item{\label{oq:rhoCpol_symmetry_one}\textit{$\rotangle$ and ($\rotangle - \pi$) (a)symmetry of $\rhoCi{}$ as a function of $\polarization[i=\text{out}]$}\\
The parameterization of probability density function $\rhoCi{}$ given by \cref{eq:parameterization_rhoCpol} (see also \cref{fig:rho_C_fix_i_0_rotangle_0}) implies that $\rhoCi{}$ as a function of polarization angle of outgoing photon $\polarization[i=\text{out}]$ has two peaks located at $\rotangle$ or $\rotangle - \pi$. Moreover, the peaks have the same shape. There is, therefore, the same probability that an outgoing photon has value of polarization angle $\polarization[i=\text{out}]$ closer to either $\rotangle$ or $\rotangle - \pi$.
%any of the 3 analyzed identical polarizers pushes value of polarization angle of each transmitted photon towards $\polarization[i=\text{out}]$ equal to $\rotangle$ or $\rotangle - \pi$ with the same probability. %If initial polarization of a photon is uniformly distributed (see \cref{eq:parameterization_rhoS0}) then it obtains polarization $\rotangle$ and $\rotangle - \pi$ with the same probability after it is transmitted through the polarizers.  
This symmetry is clearly visible in \cref{fig:dos_polarization_group_theta}, too. 

The input data analyzed in \cref{sec:data_analysis} can be, however, fitted equally well with similar parameterizations of $\rhoCi{}$ normalized to unity and having only one peak or two peaks of different heights and widths. The given input data do not allow distinguishing between these possibilities. %pushing the value of polarization angle of transmitted photon towards either $\rotangle$ or $\rotangle - \pi$; or both to these values but with different probabilities.
}
\item{\label{oq:rhoCpol_symmetry_two}\textit{$\rotangle + \pi/2$ and $\rotangle - \pi/2$ (a)symmetry of $\rhoCi{}$ as a function of $\polarization[i=\text{out}]$}\\
		There is one more source of ambiguity. The one or two peaks in the spectrum of values of polarization angle $\polarization[i=\text{out}]$ of outgoing photon may be located at different positions, at $\rotangle + \pi/2$ and $\rotangle + \pi/2 - \pi = \rotangle - \pi/2$. This situation corresponds to the axis of polarizer specified by angle $\rotangle + \pi/2$ which is perpendicular to the axis specified by angle $\rotangle$. The number of transmitted photons through 3 ideal identical polarizers (see \cref{sec:data_example}) exclude the possibility that the spectrum could have two, three or four peaks corresponding to \emph{both} the axes. The peaks must correspond to only one axes. It is, however, not possible to determine on the basis of the input data towards which of the axes the polarization of an outgoing photon is predominantly aligned. 
}
\item{\label{oq:rhoCpol_dependence}\textit{Dependence of $\rhoCi{}$ on $\polarization[i=\text{in}]$}\\ 
It has been assumed that the parameterization of $\rhoCi{}$ given by \cref{eq:parameterization_rhoCpol} does not depend on $\polarization[i=\text{in}]$ and depends only on the difference $\polarization[i=\text{out}] - \rotangle$. The possibility of more complex dependence of the function $\rhoCi{}$ characterizing given polarizer on $\polarization[i=\text{in}]$, $\polarization[i=\text{out}]$ and $\rotangle$ should be tested.
}
\item{\label{oq:ProbCT_dependence}\textit{Probability function $\ProbTi{}$}\\
The parameterization of $\ProbTi{}$ given by \cref{eq:parameterization_ProbTpol} depends only on the difference $\polarization[i=\text{in}] - \rotangle$. The possibility of more complex dependence of $\ProbTi{}$ on $\polarization[i=\text{in}]$ and $\rotangle$ should be tested, too.
}
\end{openquestions}
\begin{remark} 
Some structures (e.g., needle-like crystals or molecules) in a linear polarizer can be aligned predominantly in parallel to one axis of the polarizer. This corresponds to one axis of symmetry. The second axis of symmetry is perpendicular to it. One may ask how the quality of alignment of these microscopical structures is related to the width of the peaks in the spectrum of values of polarization angles $\polarization[i=\text{out}]$ of outgoing photons characterized by the parameter $\sigma$. This is, however, an example of a question which goes beyond the scope of the first stage of analysis of data.

%If this information about the quality of the alignment is known then it may partially help to determine the functions $\ProbTi{}$ and $\rhoCi{}$. %Polarizers manufactured with different degree of alignment of the structures could help to answer some open questions and to better understand the differences between ideal and real polarizers (see \cref{sec:ideal_polarizers,sec:real_polarizers}).
\end{remark}

\section{\label{sec:open_questions_answers}Possibilities of unique determination of parameterized probability (density) functions}
If a probabilistic model does not allow to describe measured data under given set of assumptions then it means that there is a contradiction which must be removed (it may be, e.g., necessary to modifying one or more assumptions of the model). The analysis of data performed in \cref{sec:data_analysis} is an example of analysis where input data partially constrain dependences of the parameterized functions, but unique determination of the free parameters (the parameterized functions) is not obtained. This may be in many cases very useful and valuable information. However, there are situations when one would like to determine uniquely the parameterized functions, or at least better constrain them on the basis of experimental data. 

Possibilities of obtaining more experimental data are discussed in \cref{sec:open_questions_answers_more_data}. In some cases these data can be analyzed with the help of stochastic IM process, this is covered in \cref{sec:open_questions_answers_IM_process}. The other cases are discussed in \cref{sec:open_questions_answers_beyond_IM_process}.

\subsection{\label{sec:open_questions_answers_more_data}Suitable experimental data}
As to more experimental data needed to determine uniquely the parameterized function in \cref{sec:data_analysis}, measuring transmission of light through various sequences of polarization sensitive elements (see \cref{process-stc-sec:problem_statement} in \cite{Prochazka2022statphys_process_stc}) can provide essential experiment data which can be further analyzed with the help of the probabilistic approach. This should help to uniquely determine the functions $\ProbTi{}$ and $\rhoCi{}$ characterizing individual polarization sensitive elements. Similar measurements are needed for determination of polarization states of photons emitted by a source, i.e., for unique determination of the function $\rhoSi{0}$.

E.g., the \cref{oq:rhoCpol_symmetry_one,oq:rhoCpol_symmetry_two} could be answered with the help of optical analog of Stern-Gerlach experiment separating photons by their spins (polarization states). One possible experimental method of A.~Fresnel is discussed in detail in \cite{Arteaga2019} (it uses quartz polyprism). 

In the case of optical phenomena not related to polarization of light one can use very similar approach to obtain required experimental information.

\subsection{\label{sec:open_questions_answers_IM_process}Probabilistic models based on IM process}
In some cases an analysis of measured number of transmitted photons through sequences of various polarization sensitive elements can be performed with the help of one of the stochastic processes introduced in \cref{sec:model} with very little or no modification. E.g., transmission of photons through a Faraday-effect-based device can be analyzed very similarly as transmission of photons through a linear polarizer. Instead of rotation angle of a linear polarizer one can introduce, e.g., magnetic field $B$ in the direction of propagation of the beam and the length $d$ of the path where the beam and magnetic field interact in the device (to explain measured number of transmitted photons in dependence on these two parameters, or one of them may be taken as fixed). The functions $\ProbTi{i}$ and $\rhoCi{i}$ for given index $i$ corresponding to a Faraday-effect-based device must be parameterized differently than in the case corresponding to a linear polarizer, see \cref{eq:parameterization_ProbTpol,eq:parameterization_rhoCpol}. I.e., mainly the \cref{as:Xw,as:rhoC_independent_of_polarization_in_pol,as:parameterizations_pol} (and also the \cref{as:rhoC_identity_pol,as:ProbT_identity_pol}, if the beam is not transmitted through sequence of identical Faraday rotators) must be modified. In the case of a Faraday-effect-based device one can try to parameterize the corresponding functions $\ProbTi{}$ and $\rhoCi{}$ according to \cref{def:faraday_rotator_prob} and determine them on the basis of experimental data. The determined position of the peak of $\rhoCi{}$, mentioned in \cref{def:faraday_rotator_prob}, as a function of the difference $\polarization[i=\text{in}]$ - $\polarization[i=\text{out}]$ divided by $B d$ may be then compared to the Verdet constant standardly determined (using a different method without determining the probability (density) functions) to characterize a Faraday-effect-based device.

To describe probabilistic character of transmission of light through a linear polarizer only one variable characterizing properties of photon, the photon polarization angle, has been taken into account in \cref{sec:data_analysis}. Probabilistic description of transmission of light through other polarization sensitive elements where photon polarization states denoted as circular and elliptical are involved may require introduction of one or two additional random variables. In this case or in the case of adding some other random or non-random variables (parameters) characterizing photon states and optical elements the formulae in \cref{sec:model} can be modified in very straightforward way and then used for analysis of experimental data. The formulae remains essentially the same, only photon polarization angle is replaced by other random variables characterizing states of given system, and rotations of elements are replaced by parameters characterizing the states of the system.

%This implies more complicated parameterizations of the functions $\ProbTi{i}$ and $\rhoCi{i}$ than in the case of linear polarizer, but the probabilistic model formulated in \cite{Prochazka2022statphys_process_stc} can be useful also in these cases.
%It may be necessary to only correctly define variables characterizing the optical elements and variables charactering photon properties needed to explain the measured number of transmitted photons. 

Stochastic IM process can be helpful in description of other optical phenomena not related to polarization at all or related to polarization and some other properties of light, if the two main assumption of IM process (concerning memoryless (Markov) property and independence of outcomes of an experiment) are satisfied. Several types of optical experiments are mentioned in \cref{process-stc-sec:problem_statement} in \cite{Prochazka2022statphys_process_stc} where it is discussed that these experiments often include phenomena involving splitting of a photon beam or merging of photon beams (very common in optical systems). In this case there is in general a net of transitions instead of a sequence of transitions, see sect.~3.3 in \cite{Prochazka2022statphys_process_stc}. Coincidence and anti-coincidence experiments common in optics can be also described using IM process. Description of reflection of light on a surface is essentially the same (from the mathematical point of view) as description of transmission of light through an optical element. Experiments with "interference" of light, such as well known double-slit experiment, can be described using IM process, if the observed pattern does not depend on the intensity of the source of light (i.e., the interactions of individual photons with the slits, or another optical element, are independent). 

It may happen that measured numbers of transmitted photons (states of a system) depend not only on some non-random variables, but also on some random variables. I.e., densities of states are essentially measured. In this case it is necessary to modify some of the formulae in \cref{sec:model} (with the help of more general formulae in \cite{Prochazka2022statphys_process_stc}. It concerns mainly the key system of (integral) equations, see \cref{rmk:system_of_equations}.

\subsection{\label{sec:open_questions_answers_beyond_IM_process}Probabilistic models not based on IM process}
Not all optical phenomena may be described with the help of IM process based on the two main assumptions (see \cref{rmk:main_assumptions}). Dynamical effects such as saturation of optical media (typical, e.g., for experiments with stimulated emission) cannot be described under these assumptions. If the intensity of light transmitted through an optical device is too high, then it can completely change optical properties of the device during the transmission (transmission of individual photons may not be independent). In such cases the theory of stochastic can be also very useful, but it is necessary to formulate suitable stochastic process differing from IM process.

\section{\label{sec:conclusion}Summary and conclusion}
Transmission of a photon through an optical element is standardly denoted as probabilistic process. The probability of transition of an input photon state (before an interaction with an element) to an output photon state (after the interaction with the element) can be factorized into 3 probabilistic effects represented by the probability (density) functions $\rhoSi{i}$, $\ProbTi{i}$ and $\rhoCi{i}$ with the help of IM process, see \cref{sec:problem_statement}. Statistical theoretical descriptions widely used in optics and in physics in general do not allow to easily determine these 3 probabilistic effects, \cref{sec:contemprorary_theories}.

The theory of stochastic processes provides general abstract framework for description of random processes. IM process proposed in \cite{Prochazka2022statphys_process_stc} and formulated within the framework of the theory of stochastic processes is suitable for description of various processes in physics, see \cref{sec:theory_of_stochastic_processes}. 

It has been shown that formulae suitable for description of transmission of photons through polarization sensitive elements including, but not limited to, 3 linear polarizers can be derived with the help of IM process, see \cref{sec:model}. Probability density function $\rhoSi{}$ allows to define polarized and unpolarized light. The probability $\ProbTi{}$ and probability density function $\rhoCi{}$ can be used for definition of various types of optical elements. These new definitions based on probability (density) functions are discussed in \cref{sec:definitions}. In similar way photon beam and other optical elements not related to polarization of light may be characterized, too.

Stochastic IM processes suitable for description of polarization of light formulated in \cref{sec:model} can be used for an analysis of measured relative photon numbers passed through $\ntrans$ polarization sensitive elements in dependence on the rotations of the elements. Basic aspects of the measurement are in \cref{sec:measurement}. It is worth to note that in these cases some properties of the photon beam are measured (the relative number of transmitted photons) and some other statistical characteristics of the individual photons states and their interactions with matter can be determined with the help of the stochastic process. Example data of three-polarizers experiment are analyzed in \cref{sec:data_analysis}. This is novel type of analysis showing how to determine the probability (density) functions $\rhoSi{i}$, $\ProbTi{i}$ and $\rhoCi{i}$ characterising transmission of individual photon through various optical elements on the basis of experimental data. 

Experiments concerning spin-dependent and polarization-dependent phenomena are essential for better understanding connection between spin and polarization of particles. Better understanding of polarization of light at the level of individual photons can stimulate further (already very broad) usage of various polarization sensitive devices in optical systems. However, the open questions formulated in \cref{sec:data_analysis_open_questions} should be answered before drawing far-reaching conclusions concerning polarization. Suggestions how to address the questions have been discussed in \cref{sec:open_questions_answers}.

%The results of the analysis of the 3 polarizer experiment performed in \cref{sec:data_analysis} clearly shows that transmission of photon beams through other optical elements may be analyzed in similar manner, too, see also \cref{sec:open_questions_answers}. 
%Statistical description of particle phenomena using IM stochastic process opens up new possibilities obtaining deeper insight not only into the polarization of photons and light in general, but also into properties of other particles and their interaction with matter. 

The well known 3 polarizers experiment is an example of an experiment that, although not quantitatively described in the literature using the corpuscular theory, can be described with the help of IM process and the corpuscular idea of light. Using IM process (or another stochastic process based on different set of assumptions) it is possible to analyse many experiments and further explore possibilities of the corpuscular theory. With the help of IM stochastic process it is possible to unify description of various phenomena which may look very differently at first glance, see sect.~5 in \cite{Prochazka2022statphys_process_stc}. E.g., IM stochastic process has been successfully used in genetics (biophysics, see \cref{rmk:genetics}), can be used for description of polarization of light (particle-matter interaction) as well as for description of motion of particles having initial conditions characterized by probability density functions, see \cite{Prochazka2022statphys_motion_free}. 

Description of experimental data based on stochastic process like IM process has several advantages over other statistical theoretical approaches used in physics, see \cref{sec:theory_of_stochastic_processes_adventages}. One of the advantages of analysis of experimental data using IM process is that it can be divided into two stages. In the first stage one can try to determine overall probability (density) functions characterizing transitions of states of a system on the basis of measured quantities without the aim to explain in more detail the probabilities. It is not necessary to know, e.g., the Hamiltonian of the system, introduce complex amplitudes, know all microscopic processes etc. In the second stage, one may try find more detailed internal mechanism (causes) of the transitions of the system leading to the determined probabilities. The analysis of data performed in \cref{sec:data_analysis} corresponds to the first stage. 

In some fields of research the theory of stochastic processes and the separation of an analysis of data into the two stages has been essential to make further progress. The field of optics (studying propagation of light and interaction of light with matter) is one of the field where many very diverse theoretical approaches for description of optical phenomena have been studied for centuries. However, the potential of applicability of the theory of stochastic processes has not yet been fully explored in this field. In many analyses of experimental data not event the first stage has been achieved. This is the case of, e.g., analysis of data concerning polarization of light. It has been shown in the presented paper how the theory of stochastic processes and the corpuscular idea of light can open up new possibilities to make further progress in understanding various phenomena in the field of optics. One can leverage terminology, techniques and results already known in the theory of stochastic processes and successfully used in many areas of research.

%Stochastic IM process extends applicability of theory of stochastic processes in many fields of research \cite{Prochazka2022statphys_process_stc}. 
%With the help of IM process it is possible to explore further the concept of quantum of light (photon) in the context of polarization of light. 
%
%\bmsection{Disclosures} The authors declare that there are no conflicts of interest related to this article.
%
%%% \bmsection{Data availability} 
%%% %https://opg.optica.org/submit/review/data-availability-policy.cfm
%%% %No experimental data were generated in the presented research. [Authors’ comment: This is a theoretical study.]
%%% %This manuscript has no associated experimental data or the data will not be deposited. [Authors’ comment: This is a theoretical study and no experimental data has been generated.]
%%% Data underlying the results presented in \cref{sec:data_analysis} are widely available (example data corresponding to 3 ideal linear polarizers, i.e., to the Malus's law, see \cref{sec:data_example}).
%
%
%\bibliography{../../references}
%%\bibliography{references_main.bbl}
%apsrev4-2.bst 2019-01-14 (MD) hand-edited version of apsrev4-1.bst
%Control: key (0)
%Control: author (8) initials jnrlst
%Control: editor formatted (1) identically to author
%Control: production of article title (0) allowed
%Control: page (0) single
%Control: year (1) truncated
%Control: production of eprint (0) enabled
%

\end{document}